\definecolor{ForestGreen}{rgb}{0.1333,0.5451,0.1333}
\definecolor{DarkRed}{rgb}{0.8,0,0}
\definecolor{Red}{rgb}{1,0,0}
\newtheorem{theorem}{Theorem}[section]
\newtheorem{lemma}[theorem]{Lemma}
\newtheorem{observation}[theorem]{Observation}
\newtheorem{claim}[theorem]{Claim}
\newtheorem{fact}[theorem]{Fact}
\newtheorem{invariant}[theorem]{Invariant}
\newtheorem{question}[theorem]{Question}
\newtheorem{definition}[theorem]{Definition}
\newtheorem*{theorem*}{Theorem}
\newtheorem*{corollary*}{Corollary}
\newtheorem*{conjecture*}{Conjecture}
\newtheorem*{lemma*}{Lemma}
\newtheorem*{thm*}{Theorem}
\newtheorem*{prop*}{Proposition}
\newtheorem*{obs*}{Observation}
\newtheorem*{definition*}{Definition}
\newtheorem*{remark*}{Remark}
\newtheorem*{rec*}{Recommendation}
\newenvironment{fminipage}%
  {\begin{Sbox}\begin{minipage}}%
  {\end{minipage}\end{Sbox}\fbox{\TheSbox}}
\def\defeq{\stackrel{\mathrm{def}}{=}}
\def\ceil#1{\left\lceil #1 \right\rceil}
\def\norm#1{\left\| #1 \right\|}
\newcommand\ddelta{\boldsymbol{\delta}}
\newcommand\cc{\boldsymbol{\mathit{c}}}
\newcommand\dd{\boldsymbol{\mathit{d}}}
\newcommand\ff{\boldsymbol{\mathit{f}}}
\newcommand\qq{\boldsymbol{\mathit{q}}}
\renewcommand\ss{\boldsymbol{\mathit{s}}}
\def\tt{\boldsymbol{\mathit{t}}}
\newcommand\xx{\boldsymbol{\mathit{x}}}
\newcommand\veczero{\boldsymbol{0}}
\newcommand\vecone{\boldsymbol{1}}
\renewcommand{\deg}{\mathbf{deg}}
\newcommand\BB{\boldsymbol{\mathit{B}}}
\newcommand\poly{\mathrm{poly}}
\newcommand\R{\mathbb{R}}
\newcommand{\thatchaphol}[1]{{\bf \color{DarkRed} Thatchaphol: #1}}
\def\tfootnote#1{\marginpar{$\leftarrow$\fbox{TS}}\footnote{$\Rightarrow$~{\sf\textcolor{purple}{#1 --Thatchaphol}}}}
\newcommand{\mprobst}[1]{{\bf \color{Red} Max: #1}}
\newcommand{\simon}[1]{{\bf \color{olive} Simon: #1}}
\renewcommand{\tfootnote}[1]{{\bf \color{olive} Thatchaphol: #1}}
\renewcommand{\thatchaphol}[1]{}
\renewcommand{\tfootnote}[1]{}
\renewcommand{\mprobst}[1]{}
\renewcommand{\simon}[1]{}
\DeclareMathOperator{\polylog}{polylog}
\DeclareMathOperator{\vol}{\mathbf{vol}}
\DeclareMathOperator{\dist}{dist}
\DeclareMathOperator{\supp}{supp}
\newcommand\dir{\overrightarrow}
\newcommand*\samethanks[1][\value{footnote}]{\footnotemark[#1]}
\title{Expander Pruning with Polylogarithmic Worst-Case Recourse and Update Time}
\author{Simon Meierhans\thanks{The research leading to these results has received funding from grant no. 200021 204787 of the Swiss National Science Foundation. Simon Meierhans is supported by a Google PhD Fellowship.} \\
ETH Zurich \\
mesimon@inf.ethz.ch 
\and  
Maximilian Probst Gutenberg\samethanks[1] \\
ETH Zurich \\
maximilian.probst@inf.ethz.ch
\and
Thatchaphol Saranurak\thanks{
       Supported by NSF Grant CCF-2238138. 
        Part of this work was done while at INSAIT, Sofia University ``St. Kliment Ohridski'', Bulgaria. This work was partially funded from the Ministry of Education and Science of Bulgaria (support for INSAIT, part of the Bulgarian National Roadmap for Research Infrastructure).}\\
University of Michigan\\
thsa@umich.edu
}
\begin{document}

\maketitle

\begin{abstract}
Expander graphs are known to be robust to edge deletions in the following sense: for any online sequence of edge deletions $e_1, e_2, \ldots, e_k$ to an $m$-edge graph $G$ that is initially a $\phi$-expander, the algorithm can grow a set $P \subseteq V$ such that at any time $t$, $G[V \setminus P]$ is an expander of the same quality as the initial graph $G$ up to a constant factor and the set $P$ has volume at most $O(t/\phi)$. However, currently, there is no algorithm to grow $P$ with low \emph{worst-case} recourse that achieves any non-trivial guarantee.

In this work, we present an algorithm that achieves near-optimal guarantees: we give an algorithm that grows $P$ only by $\tilde{O}(1/\phi^2)$ vertices per time step and ensures that $G[V \setminus P]$ remains $\tilde{\Omega}(\phi)$-expander at any time.\footnote{In this article, $\tilde{O}(\cdot)$ and $\tilde{\Omega}(\cdot)$ hide poly-logarithmic factors in the number of edges $m$.} Even more excitingly, our algorithm is extremely efficient: it can process each update in near-optimal \emph{worst-case} update time $\tilde{O}(1/\phi^2)$. This affirmatively answers the main open question posed in \cite{saranurak2021expander} whether such an algorithm exists.

By combining our results with recent techniques in \cite{bernstein2022fully}, we obtain the first adaptive algorithms to maintain spanners, cut and spectral sparsifiers with $\tilde{O}(n)$ edges and polylogarithmic approximation guarantees, \emph{worst-case} update time and recourse. More generally, we believe that worst-case pruning is an essential tool for obtaining worst-case guarantees in dynamic graph algorithms and online algorithms.
\end{abstract}

\newpage

\tableofcontents
\pagebreak

\section{Introduction}

The notion of an expander formalizes and quantifies what makes a graph 'well-connected'. Intuitively, a graph is 'well-connected' if every cut $(S, V \setminus S)$ sends a large fraction of the edges incident to the smaller side over the cut. Formally, we say $(S, V \setminus S)$ is $\phi$-\emph{sparse} if $\frac{|E_G(S, V \setminus S)|}{\min\{\vol_G(S), \vol_G(V \setminus S)\}} < \phi$, and a $\phi$-expander is a graph that contains no $\phi$-sparse cut $(S, V \setminus S)$. 

One of the many desirable properties of expander graphs is that they are robust to edge updates, particularly edge deletions to the graph \cite{bagchi2004effect,saranurak2021expander}. Say a set $B \subset E$ of edges is deleted from a $\phi$-expander $G$. It is well-known that a vertex set $A \subseteq V$ with volume scaling linearly in $B$ (and $1/\phi$), $\vol_G(A) \leq O(|B|/\phi)$ exists such that after \emph{pruning} $A$ from $G$, we recover that the graph $(G\setminus B)[V \setminus A]$ is a  $\Omega(\phi)$-expander. The proof behind this theorem can be extended to yield the following result: given a sequence of edges $e_1, e_2, \ldots, e_k$ arriving one by one, there is an online algorithm that grows $A$ monotonically over time such that at any time $t$, $A$ has volume at most $O(t/\phi)$ and $G \setminus \{e_1, e_2, \ldots, e_t\}[V \setminus A]$ is a $\Omega(\phi)$-expander.

However, to the best of our knowledge, there is currently no algorithm growing the set $A$ little by little, i.e. that achieves a small worst-case recourse for $A$. In fact,  no non-trivial guarantees are currently achieved by any known algorithm. We thus ask the following question.

\begin{question}\label{question1:worstCaseRec}
Given a $\phi$-expander graph $G=(V,E)$ undergoing a sequence of edge deletions $e_1, e_2, \ldots, e_k$, is there an online algorithm that maintains a set $A$, with guarantees as above, with \emph{worst-case recourse} $\tilde{O}(1/\poly(\phi))$? 
\end{question}

A seemingly much harder question is whether there is an algorithm to maintain the set $A$ for an online sequence of edge deletions that spends only little time processing each edge deletion before publishing the updated set $A$. 

\begin{question}\label{question2:worstCase}
Given a $\phi$-expander graph $G=(V,E)$ undergoing a sequence of edge deletions $e_1, e_2, \ldots, e_k$, is there an algorithm that maintains the set $A$, with guarantees as above, that requires at most $\tilde{O}(1/\poly(\phi))$ time per update?
\end{question}

Surprisingly, while no algorithm is known that achieves the bound asked for in \Cref{question2:worstCase}, the algorithms in \cite{NSW17, saranurak2021expander, bernstein2022fully, hua2023maintaining, sulser2024} achieve worst-case update time $n^{o(1)}/ \poly(\phi)$, which already comes fairly close. 

At this point, the astute reader might wonder how such a runtime bound can be achieved while no non-trivial worst-case recourse bound is achieved simultaneously. The answer is that the algorithms mentioned above cannot update the set $A$ explicitly, but instead, they build a new set $A'$ in the background and then switch the pointer from $A$ to $A'$. The difference between these two sets $A$ and $A'$ might be as large as the set $A$ itself. This however makes it cumbersome (if not impossible) to use these expander pruning algorithms as subroutines in graph algorithms with worst-case time guarantees, one of their main area of application.

\paragraph{Our Contribution.} In this article, we resolve both questions and obtain an algorithm with low worst-case recourse and update time.

\begin{theorem}
    \label{thm:main_thm}
    Given an $m$-edge $\phi$-expander graph $G = (V, E)$ and a sequence of up to $\tilde{\Omega}(\phi \cdot  m)$ edge deletions to $G$. There is a deterministic algorithm that processes each edge deletion in time $\tilde{O}(1/\phi^2)$ and adds at most $\tilde{O}(1/\phi^2)$ vertices to the initially empty set $A$. Further, at any time, $G[V \setminus A]$ is a $\tilde{\Omega}(\phi)$-expander.    
\end{theorem}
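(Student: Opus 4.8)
The plan is to recast expander pruning as the task of incrementally maintaining a \emph{local flow} (unit-flow / push-relabel) certificate of expansion, and then to de-amortize both the flow computation and the growth of the pruned set $A$, so that the work done per step and the number of vertices added to $A$ per step are spread evenly over the entire deletion sequence.

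\emph{Step 1 (flow formulation).} Following the pruning framework behind the guarantee discussed in the introduction (\cite{saranurak2021expander}), I associate to the current deleted set $D$ a flow instance on $G$: every endpoint of a deleted edge receives $\Theta(1/\phi)$ units of source, every vertex $v$ has sink capacity $\deg_G(v)$, every edge has capacity $c = \Theta(\log m/\phi)$, and flow is routed subject to a height cap $h = \Theta(\log m/\phi)$. The standard analysis gives: if a preflow routes all source into sinks within these parameters, then removing the vertices still carrying excess leaves a $\tilde{\Omega}(\phi)$-expander, with that excess region having volume $O(|D|/\phi)$; moreover there is an incremental push-relabel implementation (labels only rise, absorbed flow is never released) whose \emph{total} running time over the whole deletion sequence is $\tilde{O}(|D|/\phi^2)$. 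The sole obstruction to worst-case bounds is de-amortization: a single deletion can set off a long cascade of pushes and relabels, and it can make the excess region (hence the candidate set $A$) jump by a large volume in one step.

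\emph{Step 2 (lazy monotone maintenance with a work budget).} I keep a single flow state that is never restarted, and define $A$ to contain every vertex that ever holds positive excess, together with every vertex \emph{reserved} as a future recipient of excess still queued in a work buffer. Each deletion appends its $\Theta(1/\phi)$ new units of source to the buffer; each step the algorithm performs exactly $W = \tilde{\Theta}(1/\phi^2)$ elementary flow operations drawn from the buffer and inserts into $A$ the $O(1)$ new vertices each operation exposes, so the per-step update time and per-step recourse are both $O(W) = \tilde{O}(1/\phi^2)$. To see the buffer cannot overflow, observe that one deletion injects only $O(1/\phi)$ units of mass, and---this is the per-deletion refinement of the known $\tilde{O}(|D|/\phi^2)$ total-work bound---that much mass generates at most $\tilde{O}(1/\phi^2)$ further operations before settling (its units descend at most $h$ levels, visit $\tilde{O}(1/\phi)$ vertices, and each visited vertex is relabeled at most $h$ times). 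A simple queueing (Lindley-recursion) argument then bounds the backlog by $\tilde{O}(1/\phi^2)$ at all times, so every queued operation executes within $O(1)$ steps of its creation. The global bound $\vol_G(A) = \tilde{O}(|D|/\phi)$ from the static analysis is untouched and consistent with $|D| \le \tilde{\Omega}(\phi m)$.

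\emph{Step 3 (validity at every time --- the crux).} It remains to prove that $G[V\setminus A]$ is a $\tilde{\Omega}(\phi)$-expander at \emph{all} times, not merely after each deletion has been fully processed. The key claim is that the current partial flow, restricted to $G\setminus D$ with all excess-carrying and reserved vertices deleted, is still a feasible routing of the not-yet-stuck source into sink capacity, with congestion at most $c$ and all labels at most $h$; the reservation rule guarantees this, since any vertex that could still receive excess has already been placed in $A$ before the corresponding push happens, so $V\setminus A$ only ever sees settled flow. Feeding this partial certificate into the unit-flow expansion lemma---in its robust form, which only requires each surviving vertex to have absorbed a $\deg$-proportional amount of flow rather than a fully terminated routing---yields the desired expansion. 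I expect the genuine difficulty to be twofold: (i) showing the reservation frontier does not cascade, which relies on the volume bound on total excess so that the mass buffered within any window can reach only $\tilde{O}(1)$ new vertices per step; and (ii) checking that the deferred relabels never violate the ``labels at most $h$ on $V\setminus A$'' invariant, which forces the bookkeeping to relabel a vertex (or move it into $A$) before any of its incident flow is allowed to cross into the surviving subgraph.
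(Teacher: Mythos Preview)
Your proposal has a genuine gap in Step~2, and it is the central one the paper is built to overcome. You assert that the $\Theta(1/\phi)$ units of mass injected by a single deletion ``generate at most $\tilde{O}(1/\phi^2)$ further operations before settling,'' and then invoke a queueing argument to bound the backlog. But this per-deletion bound is false: only the \emph{total} work over the whole sequence is $\tilde{O}(|D|/\phi^2)$, and a single deletion can trigger arbitrarily much of it. The paper spells out exactly this obstruction in the overview: a set of large volume that was barely routing across a cut can, after one more edge deletion, fail to route at all, causing the excess region to grow by $\Theta(m)$ in one step. Your Lindley-recursion argument requires that arrivals (work generated per deletion) be individually bounded by the service rate, which they are not; with bursty arrivals the backlog is unbounded, and both your worst-case time and your recourse collapse.

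Step~3 inherits the same problem. The ``reservation'' rule---pre-inserting into $A$ every vertex that could still receive excess---is either vacuous or too aggressive: if a cascade is about to touch $\Theta(m)$ vertices, you must reserve all of them immediately to keep the partial certificate valid, which destroys the per-step recourse bound. The paper's route around this is genuinely different from buffering push-relabel work. It runs the amortized pruner in \emph{batched} form (with the increasing-sink-capacity trick so that $O(\log m)$ nested levels lose only a polylog in expansion, not a constant per level), and then, crucially, maintains a separate \emph{flow certificate}: a flow routing $\deg(a)$ units \emph{out of} each not-yet-pruned vertex $a$ into the surviving graph. This certificate, combined with the expansion of $G[V\setminus A_{\text{amortized}}]$, proves $G[V\setminus\bar A]$ is an expander even while $\bar A$ lags behind. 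Pruning is then done by \emph{backtracking} this certificate flow on link-cut trees, which removes at most $\tilde{O}(1/\phi)$ source vertices per edge deletion regardless of how large the amortized jump was. The certificate, not a work-rate bound, is what decouples the recourse from the burstiness of the underlying flow computation.
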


Note that our algorithm is deterministic and retains the quality of the expander pruning maintenance up to a poly-logarithmic factor in $m$. We further believe that the new techniques developed to obtain \Cref{thm:main_thm} are of general interest, significantly advance our understanding of expander maintenance techniques, and have enormous potential to help solve various other open questions relating to expander pruning. We present these ideas in detail in \Cref{sec:overview}. We refer the reader to \Cref{thm:double_low_recourse} for a more detailed statement of \Cref{thm:main_thm}.

\paragraph{Applications in Dynamic Graph Algorithms.} 
Our result from \Cref{thm:main_thm} is highly motivated since expander pruning is a fundamental tool in dynamic graph algorithms \cite{ns17, NSW17, saranurak2021expander,chuzhoy2019new, chuzhoy2020deterministic, bernstein2020deterministic, chuzhoy2021deterministic, chuzhoy2021decremental, goranci2021expander, jin2022fully, bernstein2022deterministic, kyng2023dynamic, goranci2023fully, haeupler2024dynamic, jin2024fully, chuzhoy2024faster, el2025fully}, and worst-case update time guarantees are heavily sought after in the area. 

In particular, combining our result with the techniques from \cite{bernstein2022fully}, we immediately obtain the first algorithms to maintain spanners, cut and spectral sparsifiers with poly-logarithmic worst-case update time and recourse.

\begin{theorem}\label{thm:spannerThm}
Given an $m$-edge $n$-vertex weighted graph $G=(V,E,w)$ with aspect ratio $W$ undergoing at most $\tilde{O}(m)$ edge updates in the form of insertions and deletions. Then, there is a randomized algorithm, that explicitly maintains an $\tilde{O}(n \log mW)$-edge graph $H$ undergoing at most $\tilde{O}(\log mW)$ edge updates per adversarial edge update to $G$ such that at any time
\begin{itemize}
    \item (Distance-Preserving) for any two vertices $u,v \in V$, $\dist_H(u,v) \in \tilde{\Theta}(\dist_G(u,v))$, and
    \item (Cut-Preserving) for any cut $(S, V \setminus S)$, we have $w_H(E_H(S, V \setminus S)) \in \tilde{\Theta}(w_G(E_G(S, V \setminus S)))$, and
    \item (Spectral Sparsifier) for any vector ${x} \in \mathbb{R}^n$, we have that ${x}^T {L}_H {x} \in \tilde{\Theta}({x}^T {L}_G {x})$ where $L_H$ and $L_G$ are the graph Laplacians of $H$ and $G$ respectively.
\end{itemize}
The algorithm takes $\tilde{O}(m \log mW)$ time on initialization and then requires worst-case randomized update time $\tilde{O}(\log mW)$, works against an adaptive adversary\footnote{In the adaptive adversary model, the updates to $G$ can be designed by the algorithm on-the-go and thus with consideration of the output of the dynamic graph algorithm. This is in stark contrast to the \emph{oblivious} adversary model where the update sequence is required to be fixed by the adversary before the dynamic graph algorithm is initialized.}, and succeeds with high probability. 
\end{theorem}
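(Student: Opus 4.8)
The plan is to treat the worst-case expander pruning algorithm of \Cref{thm:main_thm} as a drop-in replacement for the amortized (or $n^{o(1)}$-recourse) expander pruning used as a black box inside the sparsification framework of \cite{bernstein2022fully}. That framework reduces the maintenance of a spanner / cut sparsifier / spectral sparsifier under edge updates to the following two ingredients: (i) a static \emph{expander decomposition} routine that, given a graph, partitions its edge set into a small number of pieces each of which induces a $\phi$-expander for $\phi = 1/\polylog$, plus a small set of inter-cluster edges; and (ii) a dynamic \emph{expander pruning} routine that, as edges are deleted from one of these expander pieces, maintains a pruned set $A$ so that the un-pruned part stays an expander, and then recursively sparsifies the pruned-off part. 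The resulting recursion has $O(\log_{?} m)$ levels (with a weighted version paying an extra $\log W$), and at each level the sparsifier of \cite{bernstein2022fully} is built from short cycles / random spanning-tree packings inside the expander pieces, which directly yields the distance-preserving, cut-preserving and spectral-sparsifier guarantees simultaneously.

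First I would recall the precise interface \cite{bernstein2022fully} demands from the pruning subroutine — essentially: an online sequence of edge deletions to an $m$-edge $\phi$-expander, a monotone pruned set $A$ of volume $O(t/\phi)$ after $t$ deletions, with $G[V\setminus A]$ remaining an $\Omega(\phi)$-expander — and observe that \Cref{thm:main_thm} supplies exactly this, additionally with worst-case per-update recourse and time $\tilde O(1/\phi^2)$ rather than amortized or batch-switched. Next I would handle \emph{insertions} and \emph{weights}: insertions are reduced to deletions in the standard way by rebuilding affected pieces in the background spread over future updates (a worst-case-friendly scheduling, which is compatible precisely because our pruning now has worst-case recourse), and edge weights in $[1,W]$ are handled by the usual bucketing into $O(\log W)$ weight classes, each treated as an (essentially) unweighted graph, costing the extra $\log mW$ factors in the statement. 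Then I would trace the recursion depth and the per-level work: each of the $\tilde O(1)$ levels performs $\tilde O(1/\phi^2) = \tilde O(1)$ pruning work and $\tilde O(1)$ sparsifier-maintenance work per update, and since $\phi = 1/\polylog(m)$, the total worst-case update time is $\tilde O(\log mW)$ and the total recourse on $H$ is $\tilde O(\log mW)$; the initialization cost is dominated by one static expander decomposition of $G$, which is $\tilde O(m\log mW)$. Finally I would cite the correctness proof of \cite{bernstein2022fully} for the three approximation guarantees and note that it goes through verbatim since we only strengthened the pruning subroutine without changing its output specification; the high-probability and adaptive-adversary claims are inherited from \cite{bernstein2022fully} because our pruning algorithm is deterministic and hence cannot leak randomness to the adversary.

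The main obstacle I anticipate is not any single inequality but the \emph{plumbing}: verifying that substituting a worst-case-recourse pruning oracle into the \cite{bernstein2022fully} machine genuinely upgrades the final update time and recourse from $n^{o(1)}$ (or amortized) to worst-case $\polylog$, i.e.\ that no other component of that framework secretly relies on amortization or on a batch-rebuild (``pointer-switch'') step of its own. Concretely I would need to re-examine (a) how \cite{bernstein2022fully} schedules the periodic rebuilds of expander pieces whose pruned volume has grown too large, and confirm that with $\tilde O(1/\phi^2)$ worst-case recourse these rebuilds can be deamortized by spreading each rebuild over the next $\tilde\Theta(\phi m)$ updates — which is exactly the regime in which \Cref{thm:main_thm} is stated to be valid (``up to $\tilde\Omega(\phi m)$ edge deletions''), so the bookkeeping should close — and (b) that the recursive sparsifier updates triggered by each pruned vertex are themselves worst-case and bounded by $\tilde O(1)$ per pruned vertex. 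Once this plumbing is confirmed, the theorem follows by assembling the pieces; the quantitative bounds $\tilde O(n\log mW)$ edges, $\tilde O(\log mW)$ recourse and worst-case update time, and $\tilde O(m\log mW)$ initialization drop out of the level count times the per-level cost.
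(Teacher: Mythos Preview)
Your proposal is correct and matches the paper's approach: the paper does not give a standalone proof of \Cref{thm:spannerThm} but simply states that it follows by plugging \Cref{thm:main_thm} into the framework of \cite{bernstein2022fully}, which is precisely the plan you outline. Your sketch is in fact more detailed than what the paper provides; the ``plumbing'' concerns you raise about deamortizing the rebuild schedule in \cite{bernstein2022fully} are legitimate but the paper defers all of that to the cited work.
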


Previously, this result was only known when allowing for either \emph{amortization}, an oblivious adversary assumption, or $m^{o(1)}$ worst-case update time (see \cite{bernstein2022fully}).

Since spanners and cut/spectral sparsifiers are important primitives in graph algorithms, we hope that our improvement paves the way for algorithms with polylogarithmic worst-case update times for many dynamic shortest paths and cut/flow problems. We point the reader to the following exciting results  \cite{ns17, NSW17, saranurak2021expander,chuzhoy2020deterministic, goranci2021expander, kyng2023dynamic, haeupler2024dynamic} that already achieve subpolynomial worst-case times and build on (dynamic) expander techniques.

\paragraph{Fully-Dynamic Connectivity with Worst-Case Time via Expander Pruning.} Our result is further motivated by the recent approach to the fully-dynamic connectivity problem \cite{ns17, NSW17, saranurak2021expander,chuzhoy2020deterministic}. The fully dynamic connectivity problem asks whether we can process edge updates efficiently and supports queries for whether any two given vertices $u,v \in V$ are connected in the current graph. Besides its fundamental nature and practical importance, the fully-dynamic connectivity problem has proven to be a test bed for developing a general toolbox for dynamic graph algorithms. 

While already studied in the early 1980s \cite{harel1982line, frederickson1983data}, the first randomized algorithm with polylogarithmic update and query time was first given in \cite{henzinger1995randomized} and this algorithm was later de-randomized \cite{holm2001poly}. But both of these algorithms only achieved polylogarithmic \emph{amortized} update and worst-case query times. In \cite{kapron2013dynamic}, the first randomized algorithm was given to obtain polylogarithmic \emph{worst-case} update and query time, a result that obtained the SODA best paper award in 2013. 

Since, the main open question in this line of research is whether a deterministic algorithm with worst-case update time exists. This question has been partially resolved by a sequence of impressive results \cite{ns17, wul17, NSW17, chuzhoy2020deterministic} where a deterministic algorithm with $m^{o(1)}$ \emph{worst-case} update and query time was given. 

This recent approach reduces the fully dynamic connectivity problem to pruning a decremental expander graph. In our work, we give the first algorithm that achieves \emph{polylogarithmic worst-case update time} for pruning decremental expanders and our algorithm is deterministic. Thus, our result can be interpreted as removing a fundamental obstacle on the path toward the holy grail in the area: a deterministic polylogarithmic worst-case update time algorithm for fully-dynamic connectivity. We point out, however, that in the reduction to decremental expander pruning, the framework in \cite{ns17, wul17, NSW17, chuzhoy2020deterministic} already loses an $m^{o(1)}$ factor as they require approximately finding sparsest cuts deterministically, and since their hierarchy incurs logarithmic recourse over each level.

\paragraph{Applications in Online Algorithms.} Naturally, the low worst-case recourse achieved by our algorithm in \Cref{thm:main_thm} also seems to be a powerful tool for online algorithms. 

In fact, the algorithms in \Cref{thm:spannerThm} are the first to achieve polylogarithmic recourse for spanners, cut- and spectral sparsifiers in the online setting against an adaptive adversary.

As another example, our algorithm significantly simplifies the construction in \cite{gupta2022online} to maintain an orientation of a dynamic graph of polylogarithmic discrepancy with polylogarithmic \emph{amortized} recourse. Here, an orientation's discrepancy is defined as the maximum difference between in- and out-degree achieved by any vertex with respect to the chosen orientation. To obtain similar \emph{worst-case} recourse bounds, a more careful combination of our tools with their framework is necessary, possibly requiring new ideas. We believe that such new ideas might then also lead to de-amortization of the recourse bounds in \cite{gupta2020online} which considers the online car-pooling problem.

\section{Overview}
\label{sec:overview}

In this overview, we present the techniques to achieve \Cref{thm:main_thm}. To illustrate our ideas, we focus on obtaining an online algorithm that achieves low worst-case recourse. We then outline how to implement these ideas efficiently to obtain an algorithm that also achieves polylogarithmic worst-case update time.

\paragraph{Review: Expander Pruning with Low Amortized Recourse.} Before we present our techniques, we first review how to maintain an expander with low amortized recourse. Here, we focus on a recent proof given most explicitly in \cite{saranurak2021expander} that relies on the max-flow min-cut theorem and that has also been heavily exploited algorithmically. 

We start by describing expander pruning for the special case that was previously mentioned in the introduction: Given a $\phi$-expander $G$ and a set of edges $B \subseteq E$, there is a set $A$ of small volume $O(|B|/\phi)$ such that after \emph{pruning} $A$ from $G$, we recover that the graph $(G\setminus B)[V \setminus A]$ is a $\Omega(\phi)$-expander. 

For this proof, let us define the following flow problem.

\begin{definition}[\cite{saranurak2021expander} flow instance]
    \label{def:overview_flow_instance}
    Given a $\phi$-expander graph $G = (V, E)$ and a set $B \subset E$ of edge deletions. Consider the flow network on $G \setminus B$ where for each edge $e \in B$, $8/\phi$ source flow is added to both endpoints of $e$, every vertex $v \in V$ is a sink of capacity $\deg(v)$, and all capacities are of value $8/\phi$. 
\end{definition}

Let $\ff$ be a maximum flow for the instance above, i.e. a flow that routes the maximum amount of flow from sources to sinks. Consider the most-balanced cut $(A, V \setminus A)$ such that every edge in the cut routes $8/\phi$ units of flow from $A$ to $V \setminus A$. By \emph{most-balanced}, we mean the cut that maximizes $\min\{\vol(A), \vol(V \setminus A)\}$. Let us proof that $A$ satisfies the properties claimed above:

\begin{itemize}
    \item \underline{$\vol_G(A) = O(|B|/\phi)$:} From the max-flow min-cut theorem, we have that $\ff$ is such that all sinks in $A$ absorb the maximum amount possible, i.e. each vertex $a \in A$ absorbs $\deg_G(a)$ units of flow. Thus, the flow $\ff$ routes at least $\vol_G(A) = \sum_{a \in A} \deg_G(a)$ flow units from sources to sinks. But since each edge $e \in B$ contributes at most $16/\phi$ units to the sources ($8/\phi$ per endpoint), this yields an upper bound on the volume of $A$ of $16|B|/\phi$. 

    \item \underline{$(G \setminus B)[V \setminus A]$ is $\Omega(\phi)$-expander:} Consider any cut  $(S, \bar{S} = V \setminus (S \cup A))$ in the graph $G[V \setminus A]$. Initially, $E_G(S, V \setminus S)$ contains at least $\phi \cdot \vol_G(S)$ edges.\footnote{Note that this claim is slightly incorrect as the volumes in $G$ and $(G\setminus B)[V \setminus A]$ are not proportional and thus the smaller side of a cut does not necessarily align. However, it is not hard to show for moderately large batches $B$, that the number of edges is at least $\frac{\phi}{2}\vol_G(S)$ which suffices for our argument.} Clearly, if half of these edges are still in $(G \setminus B)[V \setminus A]$, then the cut $(S, \bar{S})$ is \emph{not} $\frac{\phi}{2}$-sparse. Otherwise, we have that at least $\frac{\phi}{2} \cdot \vol_G(S) \cdot 8/\phi = 4 \cdot \vol_G(S)$ units of source flow are added to vertices in $S$. By the max-flow min-cut theorem, we further have that since $S$ does not intersect $A$, all of this source flow is routed to sinks by $\ff$. But the total sink capacity of all vertices in $S$ is exactly $\vol_G(S)$ and thus at least $3 \cdot \vol_G(S)$ units of flow are routed out of $S$ by $\ff$. Since no flow is sent into $A$ (by definition of $A$), we thus have that at least $3 \cdot \vol_G(S)$ units of flow are routed through the cut $(S, \bar{S})$ in $(G \setminus B)[V \setminus A]$. And since each edge has capacity at most $8/\phi$, this yields that the number of edges in the cut $(S, \bar{S})$ has to be at least $\frac{3 \cdot \vol_G(S)}{8/\phi} = \Omega(\phi) \cdot \vol_{(G\setminus B)[V \setminus A]}(S)$, as desired.
\end{itemize}

Next, note that the above proof framework extends almost seamlessly to give a low \emph{amortized} recourse algorithm: presented with a sequence of edge deletions $e_1, e_2, \ldots, e_k$ it can maintain $A$ to be monotonically increasing such that after processing the $t$-th deletion, $(G \setminus \{e_1, e_2, \ldots, e_t\})[V \setminus A]$ is a $c \cdot \phi$-expander, for some $c = \Omega(1)$, and $\vol_G(A)$ is at most $O(t/\phi)$. To obtain a proof of this result, it suffices to observe that the flow problem in \cite{saranurak2021expander} evolves such that when an edge is deleted, the maximum amount of flow that was routed via the edge in a previous flow solution is now added as source flow to both endpoints. Together with the fact that edge deletions can only decrease the capacity of cuts, this yields that each source-sink min-cut $A$ is a superset of all previous source-sink min-cuts. This implies the result.

However, in the above framework, the addition of flow caused by a single edge can result in $A$ growing rapidly: a set of large volume that could route barely through a cut might no longer be able to do so after removing just one more edge from the cut causing it to be entirely added to $A$. And while expander pruning has been heavily studied, no previous algorithm has been given that achieves any non-trivial \emph{worst-case} recourse bound.

\paragraph{Flow Certificates for Refined Expander Pruning.} Next, we describe how to use the above algorithm combined with a simple flow certificate to refine previous pruning techniques. To illustrate the power of this new technique, we show that it already yields an \emph{offline} low worst-case recourse algorithm for expander pruning. In this simpler setting, the algorithm can access the full sequence of deletions from the start. Regardless of the simplification, this already constitutes a significant stepping stone toward obtaining a low worst-case recourse algorithm in the dynamic setting. 

Given a $\phi$-expander $G$, we let $e_1, e_2, \ldots, e_k$ be the deletions to $G$ where $e_t$ is the edge deleted from $G$ at time $t$. We denote by $B \subset E$ the set of all these edges, i.e. $B = \{e_1, e_2, \ldots, e_k\}$. Then, our algorithm first instantiates a \cite{saranurak2021expander} flow instance as described in    \Cref{def:overview_flow_instance} for $G$ where set $B$ is deleted. We recall that this flow instance yields a set $A$ of volume $O(|B|/\phi) = O(k/\phi)$ such that $G[V \setminus A] \setminus B$ is a $\frac{\phi}{4}$-expander. 

Given this set $A$, our algorithm initializes a flow $\ff$. This flow $\ff$ is taken to be an arbitrary flow that routes $\deg_G(a)$ units of flow from every vertex $a \in A$ to sinks $b \in V \setminus A$ of capacity $\deg_{G}(b)$ in a network with unit edge capacity $1/\phi$. Such a flow always exists as long as the volume of $A$ is moderately bounded, i.e. $\vol_G(A) \leq \frac{1}{4} |E|$, which is immediate from the definition of expanders and the max-flow min-cut theorem. For the rest of this section, we assume w.l.o.g. that $\ff$ is acyclic, integral, and non-negative in each coordinate. 

We are now ready to describe our offline algorithm: we output the set $\bar{A}$ of pruned vertices and initialize it to be the empty set. Then, at time $t$, when the edge $e_t$ is deleted from $G$, we update the flow certificate $\ff$ as follows: while $\ff(e_t) > 0$, we backtrack the flow $\ff$ from the head of $e_t$ until we find a vertex $s$ that has no incoming flow. Then, we remove one flow unit from this path and update $\bar{A} \gets \bar{A} \cup \{s\}$ if $s \in A$. We repeat this procedure until there is no flow $\ff(e_t)$ anymore and then delete the edge $e$. 

We next show that the above algorithm maintains set $\bar{A}$ with low \emph{worst-case} recourse and such that when pruning it the graph remains expander. 

\begin{claim}
At any time step $t$, the graph $G[V \setminus \bar{A}] \setminus \{e_1, e_2, \ldots, e_t\}$ remains a $\frac{\phi}{12}$-expander and $\bar{A}$ grows by at most $1/\phi$ vertices.
\end{claim}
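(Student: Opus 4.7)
I would structure the proof around the following invariant, which is easy to maintain by induction: after processing $e_t$, $\ff_t$ is an acyclic, non-negative, integral flow on $G \setminus \{e_1,\ldots,e_t\}$ with edge capacity $1/\phi$, in which every vertex $a \in A \setminus \bar{A}$ is a source producing exactly $\deg_G(a)$ units, every $a \in \bar{A}$ is a partial source producing strictly less than $\deg_G(a)$ units, and every $b \in V\setminus A$ is a sink of capacity at most $\deg_G(b)$. This invariant persists across updates because each backtracking iteration corresponds to deleting one full source-to-sink unit path through $e_t$ (starting at the source $s$ returned by the backtrack and continuing past the head of $e_t$ through a still-saturated flow path to some sink), which simultaneously decrements $\ff(e_t)$, the production at $s$, and the absorption at its sink, thereby preserving flow balance.

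Given the invariant, the recourse bound is immediate. By the edge-capacity constraint, $\ff_{t-1}(e_t) \le 1/\phi$, so the while loop runs at most $1/\phi$ times during step $t$ and therefore adds at most $1/\phi$ vertices to $\bar{A}$.

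For the expansion statement, fix a cut $(S,\bar{S})$ of $H_t := G[V\setminus\bar{A}]\setminus\{e_1,\ldots,e_t\}$ with $\vol_{H_t}(S) \le \vol_{H_t}(\bar{S})$. I would split into two cases. In the first case, a constant fraction of the expansion edges of $G$ out of $S$, namely $E_G(S, V\setminus S)$, survives inside $E_{H_t}(S,\bar{S})$; the $\phi$-expansion of $G$ then immediately gives $|E_{H_t}(S,\bar{S})| \ge (\phi/2) \vol_G(S) \ge (\phi/2)\vol_{H_t}(S)$, which is stronger than needed. In the complementary case, at least half of $E_G(S, V\setminus S)$ has been either deleted or is incident to $\bar{A}$. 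Here I invoke the flow $\ff_t$: the invariant implies that the net outflow from $S$ in $\ff_t$ is at least $\vol_G(S\cap A) - \vol_G(S\setminus A)$, and this flow must escape through $E_{G\setminus\{e_1,\ldots,e_t\}}(S, V\setminus S)$. Combined with the fact that $G[V\setminus A]\setminus\{e_1,\ldots,e_t\}$, a supergraph of $G[V\setminus A]\setminus B$, is essentially a $\phi/4$-expander, a sub-case analysis on whether $S$ lies mostly inside or outside $A$ should yield the desired bound $|E_{H_t}(S,\bar{S})| \ge (\phi/12)\vol_{H_t}(S)$.

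The main technical obstacle is that source-to-sink paths in $\ff_t$ may traverse $\bar{A}$, crossing the cut $(S,\bar{S})$ without contributing any edge to $E_{H_t}(S,\bar{S})$. To handle this I would exploit that $\bar{A}$ is, in aggregate, a net source in $\ff_t$, so the net flow that can leak from $S$ into $\bar{A}$ is bounded by $\sum_{a\in\bar{A}} \alpha_a \le \vol_G(\bar{A})$ plus the flow that $\bar{A}$ subsequently discharges into $\bar{S}$; re-routing this discharge inside $H_t$ using the $\phi/4$-expansion of $G[V\setminus A]\setminus B$ should cap the total flow leakage through $\bar{A}$ and allow the flow-based lower bound to transfer to the cut in $H_t$. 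The constant $12$ arises from the product of the factor $4$ in the expansion of $G[V\setminus A]\setminus B$ and an additional factor of $3$ absorbed during the case split.
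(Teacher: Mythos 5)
Your invariant and recourse argument are right, and the overall flow-based framework is the correct one. But your plan goes wrong in the last paragraph. The ``main technical obstacle'' you describe---source-to-sink paths in $\ff_t$ traversing $\bar{A}$ and crossing the cut without contributing an edge of $H_t$---does not actually exist, and the workaround you sketch (bounding net leakage, re-routing discharges inside $H_t$) is therefore both unnecessary and misdirected. The crucial observation, which the paper leans on with the remark ``we only decrease flow paths that start in vertices with no in-coming flow,'' is that a vertex $s$ is added to $\bar{A}$ only at the moment it is found to have zero incoming flow, and since backtracking only ever decreases flow, every vertex of $\bar{A}$ has \emph{zero} incoming flow for the remainder of the process. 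Consequently no flow whatsoever enters $\bar{A}$, so no unit of flow leaving $S$ can disappear into $\bar{A}$: all flow out of $S$ must cross edges of $H_t$ from $S$ to $\bar{S}$. Your invariant as stated (``$a\in\bar{A}$ is a partial source producing strictly less than $\deg_G(a)$'') is too weak to deliver this; you must record that $\bar{A}$ receives no inflow at all, which is what makes the cut argument go through directly.

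With that fixed, the case split becomes the simple one in the paper: compare $\vol_G(S\cap(V\setminus A))$ against $\frac13\vol_G(S)$. If it is at least $\frac13\vol_G(S)$, the $\frac{\phi}{4}$-expansion of $G[V\setminus A]\setminus B$ applied to $S\cap(V\setminus A)$ gives at least $\frac{\phi}{12}\vol_G(S)$ edges, all of which lie in $H_t$. Otherwise $\vol_G(S\cap A)\ge\frac23\vol_G(S)$ while the sinks inside $S$ have capacity at most $\frac13\vol_G(S)$, so the net outflow of $S$ is at least $\frac13\vol_G(S)$; since none of it enters $\bar{A}$ and each edge carries at most $1/\phi$, we get at least $\frac{\phi}{3}\vol_G(S)$ edges in $E_{H_t}(S,\bar{S})$. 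Your proposed split on whether a constant fraction of $E_G(S,V\setminus S)$ survives is not wrong, but it ends up requiring exactly this volume-based sub-split inside your second case, so it adds a layer without buying anything.
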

\begin{proof}
To prove the first part of the claim, consider any cut $S \subseteq V \setminus \bar{A}$ with $\vol_{G}(S) \leq \vol_{G}(V \setminus S)$. We now exploit the following dichotomy: 
\begin{itemize}
    \item \underline{if $\vol_G(S \cap (V \setminus A)) \geq \frac{1}{3} \vol_G(S)$:} then we have that $|E_G(S, V \setminus A)| \geq \frac{\phi}{4} \vol_G(S \cap (V \setminus A)) \geq \frac{\phi}{12} \vol_G(S)$ where the first inequality follows from the fact that $G[V \setminus A] \setminus \{e_1, e_2, \ldots, e_t\} \supseteq G[V \setminus A] \setminus \{e_1, e_2, \ldots, e_k\}$ is a $\frac{\phi}{4}$-expander and a sub-graph of $G$. 
    \item \underline{otherwise (i.e. $\vol_G(S \cap (V \setminus A)) < \frac{1}{3} \vol_G(S)$):} then at most half the $\frac{2}{3} \vol_G(S)$ flow originating from $S \cap A$ can be absorbed by sinks in $S \cap (V \setminus A)$ since they have total capacity $\frac{1}{3} \vol_G(S)$, and therefore at least $\frac{1}{3} \vol_G(S)$ flow gets routed out of $S$ on edges of capacity $1/\phi$. Therefore, $|E_G(S, V \setminus S)| \geq \frac{\phi}{3} \vol_G(S)$. Note that we crucially rely here on the property that we only decrease flow paths that start in vertices with no in-coming flow as it ensures that no vertex is removed (by adding it to $\bar{A}$) that is required by the flow certificate $\ff$.
\end{itemize}

Finally, since the edge capacity of the flow is bounded by $1/\phi$, the set $\bar{A}$ grows by at most $1/\phi$ vertices per edge deletion since we backtrack at most $1/\phi$ flow paths. 
\end{proof}

\paragraph{Expanders are Robust, Even if the Amortized Pruning Suggests Otherwise.} Let us next give a simple online algorithm that gives a natural de-amortization of the algorithm from \cite{saranurak2021expander} for amortized pruning. However, here we make the (strongly) simplifying assumption that the set $A$ as maintained in \cite{saranurak2021expander}, whenever it is augmented by some vertices $\Delta A$ during time $t$, is afterwards not changed by the algorithm during the time steps $t+1, t+2, \ldots, t + \phi \cdot \vol_G(\Delta A)$. That is after every change to $A$, it enters a \emph{resting period}.

To obtain our online algorithm with $O(1/\phi)$ worst-case recourse, we need to leverage a simple but powerful insight: even if the algorithm \cite{saranurak2021expander} adds many vertices to the pruning set $A$ while processing some $t$-th edge deletion $e_t$, just before the update, the graph $G[V \setminus A] \setminus \{e_1, \ldots, e_{t-1}\}$ was still an $\Omega(\phi)$-expander and since expanders are robust, this single edge deletion $e_t$ should only require us to prune out very little.

More concretely, this suggests the following online algorithm: again we output the set $\bar{A}$ of pruned vertices and initialize it to be the empty set. We initialize the algorithm from \cite{saranurak2021expander} to maintain set $A$ with the well-known guarantees. Then, at time $t$, when the edge $e_t$ is deleted from $G$, the algorithm informs the data structure from \cite{saranurak2021expander} which, in turn, updates the set $A$. Let $\Delta A$ denote the vertices added to $A$ during the current time step $t$. 

Since we have that $G[A \setminus \Delta A] \setminus \{e_1, e_2, \ldots, e_{t-1}\}$ is a $c \cdot \phi$-expander for some $c = \Omega(1)$, we know that there is a flow certificate $\ff$ in this graph that routes $\deg_G(a)$ units of flow from every vertex $a \in \Delta A$ to sinks $b \in V \setminus A$ with edge capacities set to $c/\phi$ for some large hidden constant. Once this flow certificate is computed, at time $t$, we update the flow certificate $\ff$ as before: while $\ff(e_t) > 0$, we backtrack the flow $\ff$ from the head of $e_t$ until we find a vertex $s$ that has no incoming flow. Then, we remove one flow unit from this path and update $\bar{A} \gets \bar{A} \cup \{s\}$ if $s \in A$. We repeat this procedure until there is no flow $\ff(e_t)$ anymore and then delete the edge $e$. Further, at any stage $t, t+1, t+2, \ldots, t+\phi \cdot \vol_G(\Delta A)$, we pick for $1/\phi$ rounds, an edge from $E_G(\bar{A}, V \setminus A)$ that still carries flow in $\ff$ (if one exists) and then backtrack one unit of flow on this edge and remove the flow. 

It is not hard to verify that since at each time, we reduce the amount of source of $\ff$ by at least $1/\phi$ (unless there is no source left), and since initially there is only $\vol_G(A)$ units of source flow, we have that at after time $t + \phi \cdot \vol_G(\Delta A)$, the flow $\ff$ is the empty flow. Thus, by our update rule, we have at this point that $\bar{A} = A$. 

Correctness, i.e. that $G[V \setminus \bar{A}] \setminus \{e_1, e_2, \ldots, e_t\}$ remains a $\Omega(\phi)$-expander follows by extending our proof from the last section which uses the flow certificate $\ff$ and the guarantees on $G[V \setminus A] \setminus \{e_1, e_2, \ldots, e_t\}$. 

\paragraph{Relaxing the Resting Property via Batching and Flow Certificate Composition.} While achieving a resting property as assumed above seems unrealistic, it turns out that a straightforward relaxation of this property is already sufficient for our purposes: we request that the amortized algorithm grows the set $A$ such that at any time $t$ it increases by at most $d \cdot 2^i/\phi$ in volume for some $d = \tilde{O}(1)$ and $i$ being the largest integer such that $t$ is divisible by $2^i$. Thus, we allow for a change to $A$ of order $\tilde{O}(2^i / \phi)$ after every $2^i$ time steps. For a current time $t$, let $t' \leq t$ be the largest index such that that $t'$ is divisible by $2^i$, then we define $\Delta A_i$ to be the empty set if $t'$ is also divisible by $2^{i+1}$ and otherwise we define it to consist of the vertices added to $A$ at time step $t'$. Thus, $\Delta A_i$ was added to $A$ before $\Delta A_{i+1}$ for every $i$ and is of size at most $d \cdot 2^i / \phi$.

Next, we give an algorithm that only adds $\tilde{O}(1/\phi)$ vertices to $\bar{A}$ per update to $G$ and ensures that after time $t+2^{i-1}$ where $t$ divisible by $2^i$, we have $\Delta A_i \subseteq \bar{A}$. For our algorithm, we maintain $\lambda = \lceil \log_2(k) \rceil$ flow certificates $\ff_{\lambda}, \ff_{\lambda-1}, \ldots, \ff_0$ where each flow certificate $\ff_i$ is re-computed at any time $t$ divisible by $2^i$ to route from each source $a \in \Delta A_i$ exactly $(\lambda - i+1) \cdot \deg_G(a)$ units of source flow to sinks in $V \setminus A$ where each sink vertex $v$ has sink capacity $\deg_G(v)$ in the current graph $G[V \setminus (\bar{A} \cup \Delta A_{\lambda} \cup \Delta A_{\lambda-1} \cup \ldots \cup \Delta A_{i-1})]$ with sufficiently large edge capacities in $O(1/\phi)$.

A straightforward strategy would now be to delete $1/\phi$ flow paths from each of these flow certificates as proposed in the previous section. However, this process cannot ensure correctness. The problem is that $\ff_{\lambda}$ is no longer a flow certificate when $A$ is changed again because a lower level flow certificate $\ff_i$ might order a vertex $v$ to be added to $\bar{A}$ but through this vertex, we still route flow in the certificate $\ff_{\lambda}$. 

Instead, we maintain the flow certificate $\ff = \ff_{\lambda} + 2 \ff_{\lambda - 1} + 3 \ff_{\lambda - 2} +  \ldots + (\lambda+1) \ff_0$. It is not hard to observe that this flow sends $\deg_G(a)$ units of flow from every vertex $a \in \Delta A_i$ for any $i$. That is because for every such vertex $a$, the certificates $\ff_{\lambda}, \ff_{\lambda - 1}, \ldots, \ff_{i+1}$, each route at most $\deg_G(a)$ units of flow into $a \in \Delta A_i$ as it appears as a sink in these flow problems, but since $\Delta A_i$ is in $A$ at any stage that $\ff_i, \ff_{i-1}, \ldots, \ff_0$ are computed and thus it does not appear as a sink in these problems. Thus, the amount of flow sent to $a$ by all certificates is at most $(\lambda - i) \deg_G(a)$ units. But since $\ff_i$ sends away $(\lambda - i + 1)\deg_G(a)$ units of flow from $a$, we have that the net flow leaving $a$ is at least $\deg_G(a)$ in $\ff$. Further, each edge carries at most $c' = O(\lambda/\phi)$ flow in $\ff$. For simplicity, we assume for the rest of the section that $\ff$ is acyclic (while this is not true w.l.o.g., a similar property that suffices for our algorithm does hold).

We update the flow certificate $\ff$ similar to how we updated the certificate in the previous section. However, slightly more care is required and we end up with the following update rule: at each time $t$, we first update $\ff$ by removing all flow from the edge $e_t$ that is deleted at the current time from $G$. Again, we do so by backtracking the flow carefully to a vertex that has no in-flow. And thereafter, for $4d c' \cdot \lambda$ rounds (where $d$ bounds the recourse on $A$ and $c'$ is the maximum amount of flow on any edge in $\ff$), we find the smallest index $i$, such that $\Delta A_i \not\subseteq \bar{A}$, then find an edge $e \in E_G(\Delta A_i, V \setminus \bar{A})$ that still carries flow and backtrack the flow on this edge (note that all flow in $\ff$ leaves $\Delta A_i$ via these edges because of the max-flow min-cut theorem). Vertices $a \in A$ with less than $\deg_G(a)$ out-flow in $\ff$ are then added to $\bar{A}$ in the final step of the algorithm to process update $e_t$.

To establish correctness, observe that we reduce the amount of flow $\ff$ that leaves $\Delta A_i$ in each such round by at least $1$ unit, and the total amount of flow on $\Delta A_i$ is at most $c' \cdot \vol_G(A_i) \leq c' \cdot d\cdot 2^i$. It is thus not hard to show that for every index $i$, the flow $\ff$ does not have $\Delta A_i$ in its support after $2^{i-1}$ time steps
since the computation of $\ff_i$ have passed and therefore re-setting $\Delta A_i$ is safe. 

We can thus conclude that $\ff$ is indeed a flow certificate together with the graph $G[V \setminus \bar{A}] \setminus \{e_1, \ldots, e_t\}$. Since the maximum amount of flow on each edge in $\ff$ is larger by a factor $\lambda+1$ compared to the flow in the last section and each sink vertex might receive up to $\lambda+1$ times its degree units of flow, we can now only argue that it proves that $G[V \setminus \bar{A}]$ is a $\Omega(\phi/\log^2 m)$-expander since $\lambda = O(\log m)$. The recourse can be bounded by $\tilde{O}(1/\phi)$. See \Cref{fig:cut_detection} for an illustration of nesting flow certifcates and the detection of sparse cuts using \cite{saranurak2021expander}-flow instances.

\begin{figure}[h]
    \centering
    \includegraphics[width = 10cm]{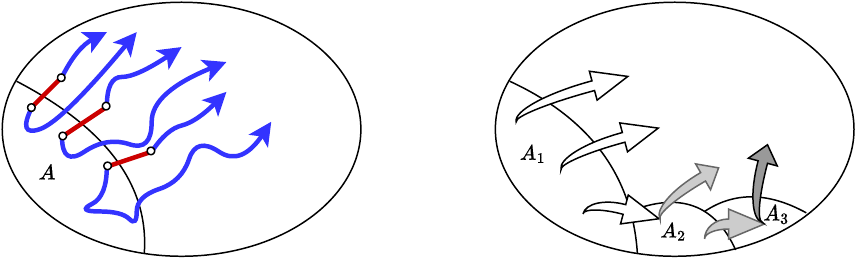}
    \caption{On the left: After deleting the red edges and adding $8/\phi$ source to their endpoints, it is still possible to route the flow to sinks $v \in V$ of capacity $\deg_G(v)$. If this is the case the graph is guaranteed to be a $\phi/10$-expander. Otherwise, a sparse cut $A$ is detected by the algorithm and thereafter the remaining graph $G[V \setminus A]$ is guaranteed to be a $\phi/10$-expander. On the right: Vertex sets $A_1, A_2$ and $A_3$ are yet to be pruned from the graph. Each of these maintains a flow certificate, such that adding them up ensures that the remaining graph is a $\tilde{\Omega}(\phi)$-expander.}
    \label{fig:cut_detection}
\end{figure}

\paragraph{An Amortized Online Algorithm with Resting Periods via Increasing Sink Capacities.} Finally, we still need an amortized algorithm that satisfies our relaxed resting period property: we request that the amortized algorithm grows the set $A$ such that at any time $t$ it increases by at most $\tilde{O}(2^i/\phi)$ in volume for $i$ being the largest integer such that $t$ is divisible by $2^i$. 

A natural approach is to use batching techniques: at any time step $t$ we compute $i$, the largest integer such that $t$ is divisible by $2^i$, and then compute a maximum flow $\ff_i$ for the \cite{saranurak2021expander}-flow instance as described in \Cref{def:overview_flow_instance} for the remaining graph $G_i \defeq G[V \setminus (\bar{A} \cup \Delta A_{\lambda} \cup \Delta A_{\lambda-1} \cup \ldots \cup \Delta A_{i+1})]$ and deletion batch $B_i$ containing all the edge deletions that occurred since $\Delta A_{i+1}$ was last updated, i.e. the last $2^i$ edge deletions. As shown directly after the statement of \Cref{def:overview_flow_instance}, such a flow instance yields a set $\Delta A_i$ that has volume at most $\tilde{O}(2^{i}/\phi)$, as well as the guaranteeing that $G_{i + 1}$ is a $\frac{\phi}{4}$-expander \emph{assuming} that $G_i$ is a $\phi$ expander. Furthermore, the set $\Delta A_i$ naturally only gets updated periodically every $2^i$ deletions. 

Unfortunately, the constant loss in expansion per level is not acceptable. Given that $G_\lambda$ has expansion $\phi$, the graph $G_i$ is only guaranteed to be a $\frac{\phi}{4^{\lambda -i}}$-expander, and thus the graph $G_0$ might only have expansion $\frac{\phi}{4^\lambda}$. In particular, for our previous choice $\lambda = O(\log m)$, this only yields the trivial bound $\frac{\phi}{\poly(m)}$ on the expansion of $G_0$. A possible remedy is to drastically decrease the number of levels $\lambda$ to $O(\sqrt{\log m})$. Even though this allows non-trivially bounding the expansion of $G_0$, it introduces an undesirable sub-polynomial loss of $2^{O(\sqrt{\log m})}$ in expansion and growth of the pruned set. 

To recover the desired $\Omega(\phi)$ bound using $\lambda = O(\log m)$ levels, we re-visit the idea of maintaining a single \cite{saranurak2021expander}-flow instance as described in \Cref{def:overview_flow_instance}. When describing the amortized recourse pruning algorithm, we argued that attempting to route a single extra particle of flow on the residual graph might saturate a large cut in the maximum flow instance, and therefore dis-proportionally blow up the size of the set $\Delta A_i$. We next describe a simple technique that remedies this issue. We point out that here, we merely extend a technique which previously appeared in \cite{chen2025parallel}, where it was exploited to parallelize \cite{saranurak2021expander}. 

Recall that given a set of deletions $B \subseteq E$ to a graph $G = (V, E)$ that was initially a $\phi$-expander, a \cite{saranurak2021expander}-flow instance places $\frac{8}{\phi}$ source capacity at vertex $v$ per edge $(u,v)$ in $B$ with endpoint $v$. It sets the sink capacity of every vertex $v \in V$ to $\deg_G(v)$, as well as setting all the edge capacities to $8/\phi$. Then we let $\ff$ denote a maximum flow for this instance, and we again let $(A, V \setminus A)$ be the most balanced cut for which every edge is saturated with flow from $A$ to $V \setminus A$. Then $\vol_G(A) \leq 16 |B|/\phi$ because all the sinks in $A$ are saturated, and $G[V \setminus A]$ remains a $\frac{\phi}{4}$-expander. Now, assume that we are given an update $\Delta B \subseteq E$ consisting of additional edge deletions and want to compute an updated flow $\ff' = \ff + \Delta \ff$, and set $A'$ for the deletion set $B' = B \cup \Delta B$ such that the volume of $\Delta A$ is bounded by $O(|\Delta B|/\phi)$ where $\Delta A = A' \setminus A$.

A simple way to update the flow $\ff$ is to add $\frac{8}{\phi}$ units of source at vertex $v$ for every edge $(u, v)$ in $\Delta B$ with endpoint $v$ and computing a maximum flow $\Delta \ff$ on the residual graph $\dir{G}_{\ff}[V \setminus A]$ after removing the edges in $\Delta B$. Adding $\frac{8}{\phi}$ source to every endpoint ensures that at most $\frac{16}{\phi}$ additional source is routed away per edge adjacent to any $v$ in $\Delta B$, and that the flow $\ff$ is repaired since very deleted edge carried at most $\frac{8}{\phi}$ flow. Therefore, $\ff' = \ff + \Delta \ff$ is a suitable flow. 

To control the size of $\Delta A$, we make the following crucial observation: Adding only $\frac{1}{2} \deg_G(v)$ sink capacity to every vertex initially only worsens the volume bound on $A$ by a factor $2$. Then, we can afford to add an \emph{additional} $\frac{1}{2} \deg_G(v)$ sink capacity to the residual problem after receiving the update $\Delta B$. But this additional sink capacity has to be saturated by every vertex in $\Delta A$, and we therefore obtain that the volume of $\Delta A$ is bounded by $O(|\Delta B|/\phi)$. 

Given this insight, it is easy to extend the scheme to batches $B_\lambda, \ldots, B_0$ arriving one-by-one: We initially set the sink capacity to $\frac{1}{\lambda + 1} \deg_G(v)$, and then increase it by $\frac{1}{\lambda + 1} \deg_G(v)$ when processing each individual batch. This only worsens our volume bound by a $\lambda + 1 = O(\log m)$ factor, while ensuring that the expansion is still bounded by $\Omega(\phi)$. 

\paragraph{Fast Flow Backtracking.} Finally, to turn the ideas presented so far into a fast algorithm, we need to compute and backtrack the flows very efficiently. Setting aside the issue of computing flow certificates, let us first discuss how to backtrack flows efficiently.

To this end, we propose the following simple algorithm: initialize a rooted, directed forest $T$ to be the empty forest. Whenever we remove flow from an edge $(u, v)$, we search for the first edge on the path from $u$ to its root in $T$ that carries $0$ flow if it exists. We remove this edge from $T$. Then, we search for the new root $r$ of $u$ in $T$, and remove $1$ flow from the corresponding path. Finally, we check if there is an edge $e$ adjacent to $r$ carrying non-zero flow into $r$. If such an edge $e = (w, r)$ exists, we add it to $T$. 

Although the above algorithm sometimes backtracks to a vertex that is not a true source, such a vertex $a$ can be erroneously backtracked to at most $\deg_G(a)$ times. If we initially send out an extra $\deg_G(a)$ flow for every vertex $a \in \Delta A_i$, and then only add $a$ to $\bar{A}$ once at least $\deg_G(a)$ units of source were removed from $a$, we maintain that the remaining flow is still an adequate flow certificate. Furthermore, backtracking on such a tree is easy to implement using link-cut trees \cite{ST83}, and this backtracking procedure nicely composes across levels. 

\paragraph{Fast Flow Computations.} Let us next discuss how to make the overall algorithm efficient. We start by discussing how to implement the flow computations efficiently. For now, we focus on obtaining fast amortized update times.  

Since running max-flow algorithms requires us to read in the entire graph $G$, and we often need to compute flow much faster, we turn to the idea already present already in  \cite{orecchia2014flow} to obtain fast \emph{local} flow algorithms. In particular, all the amortized pruning and flow certificates are computed on (almost-)expander graphs and thus can be computed using only very few iterations of Dinitz blocking flow algorithm. Since Dinitz algorithm can be implemented \emph{locally}, we can then argue that the runtime of each flow computation only scales in the support size of the flow.

However, we point out that this leads to various complications in the arguments. The above flow algorithms are only approximate and this requires great care when composing multiple levels in our schemes. However, we overcome these technical difficulties.

Finally, we also need to de-amortize flow computations. Here, we turn to classic de-amortization techniques: we compute the flows for pruning and flow certificates in the background and then only use them once the computation has finished. But at this point $G$ might have evolved and some of the edges that the flow uses no longer be present in $G$. While this is ok for the pruning algorithm, it is not for our flow certificates. But to remedy that flow certificates are no longer up-to-date, we can again use backtracking to repair certificates by removing flow that is on edges no longer in $G$. Since this again might take some time to finish, we might then have to iterate the scheme up to $O(\log m)$ times before completing. As the reader might already suspect, here we sweep many technical details, that require careful implementation, under the rug. 

\paragraph{Expander Pruning with Worst-Case Update Time and Recourse.} The above paragraph completes our description of the algorithm. We summarize briefly: over algorithm internally runs an amortized algorithm that prunes with a new resting property that allows us to only think about $O(\log m)$ batches of pruned sets at any time. We use flow certificates to de-amortize this algorithm as they allow us to explicitly check which vertices can still be pruned later and which require urgent attention. Finally, to de-amortize the runtime of both of these algorithms, we develop new techniques to backtrack flow on flow certificates and design a careful hierarchy to compose flow certificates and to de-amortize flow computations via background computations.

Together, this new framework yields \Cref{thm:double_low_recourse}, our main result.

\section{Preliminaries}

\paragraph{Vectors and matrices. } We denote vectors as bold lowercase letters $\xx$. For $\xx \in \R^n$ and $S \subseteq [n]$ we let $\xx(S)$ denote the sum of the entries of $\xx$ in the coordinates of $S$, i.e. $\xx(S) = \sum_{i \in S} \xx(i)$, and we let $\xx[S] \in \R^n$ denote $\xx$ restricted to $S$, i.e. $\xx[S](i) = \xx(i)$ if $i \in S$ and $\xx[S](i) = 0$ otherwise. We use $\mathbf{0}$ and $\mathbf{1}$ to denote the all-$0$ and the all-$1$ vectors respectively. We use $\mathbf{1}_u$ to denote the indicator vector which is zero everywhere but for coordinate $u$ where it is equal $1$. We use $\supp(\xx)$ to denote the set of coordinates of $\xx$ whose value is non-zero.

\paragraph{Graphs.} In this article, we work with both undirected and directed graphs, uncapacitated and capacitated. An undirected capacitated graph $G=(V,E,\cc)$ consists of a vertex set $V$, an edge set $E$ where each edge is represented as a two-element set, i.e. if there is an undirected edge between $u$ and $v$ in $G$, then $\{u,v\} \in E$, and finally we have capacities $\cc \in \R^{|E|}_{\geq 0}$. We let $\mathbf{deg}_G$ denote the degree vector of graph $G$ where for each vertex $v \in V$, we have $\mathbf{deg}_G(v)$ equal to the number of edges incident to $v$. We denote by $\vol_G(S)$ for any $S \subseteq V$, the sum of degrees of vertices in $S$. We denote by $E(A, B)$ for any $A, B \subseteq V$ the set of edges in $E$ with one endpoint in $A$ and another in $B$. Further, we let $\partial_G(S)$ be the set of edges between $S$ and $V \setminus S$ in $G$. 

For a directed graph $\dir{G} = (\dir{V}, \dir{E}, \dir{\cc})$, we have $\dir{V}$ again being the vertex set, and $\dir{E}$ be a set of two-tuples, where $(u,v) \in \dir{E}$ if and only if there is an edge directed from $u$ to $v$. We write $\dir{G} \gets \textsc{Directify}(G)$ to denote a transformation of an undirected graph $G$ into a directed (eulerian) graph where $\dir{V} = V$, $\dir{E} = \{ (u,v) | \{u,v\} \in E\}$ and each such tuple $(u,v)$ receives capacity $ \dir{\cc}(u,v) = \cc(\{u,v\})$. We let $\deg^{\text{out}}_{\dir{G}}$ be the out-degree vector of $\dir{G}$, i.e. the for vertex $v$, $\mathbf{deg^{out}}_{\dir{G}}(v)$ is the number of edges in $\dir{E}$ with tail $v$. We denote by $\dir{E}(A, B)$ for any $A, B \subseteq V$ the set of edges in $E$ with tail in $A$ and head in $B$. Given any set $S \subseteq V$, we let $\dir{G}[S]$ denote the graph induced by the vertex set $S$.

For convenience, we extend the capacity functions $\cc$/ $\dir{\cc}$ to all two-element sets $\{u,v\} \subseteq V$/ tuples $(u,v) \in V^2$ to be zero everywhere where it is not defined.

\paragraph{Expanders. } We call a graph a $\phi$-expander (with respect to conductance) if, for every non-empty set $S \subset V$ where $\vol_G(S) \leq \vol_G(V \setminus S)$, we have $E(S, V \setminus S) \geq \phi \vol_G(S)$. We define expansion with respect to the starting volume for decremental graphs, which will be the suitable measure throughout this article.

\begin{definition}[Decremental Expander Graph]
    \label{def:pruned_exp}
    We call a graph $G = (V, E)$ possibly undergoing a sequence of edge and vertex deletions, a $\phi$-expander, if for all $S \subset V$ so that $\vol_{G^{(0)}}(S) \leq \vol_{G^{(0)}}(S \setminus V)$ we have $|E(S, V \setminus S)| \geq \phi \vol_{G^{(0)}}(S)$ where $G^{(0)}$ denotes the initial graph $G$ before any deletions happened.
\end{definition}

Notice that measuring with respect to the starting volume implies that a decremental expander graph $G$ is also an expander with respect to conductance because the starting volume of a set is an over-estimate of the current volume of said set. 

\paragraph{Flows.} For flows, we only consider directed, capacitated graphs $\dir{G} = (\dir{V}, \dir{E}, \dir{\cc})$. We call a (pre-)flow $\ff \in \R^{\dir{E}}$ feasible if $\mathbf{0} \leq \ff \leq \dir{\cc}$. We let $\dd \in \R^{V}$ denote a demand vector, where positive entries correspond to sources and negative entries to sinks. We consider demands with non-positive sum, i.e. where we have at least as much sink capacity as source capacity. 

We let $\BB_{\dir{G}} \in \R^{V \times \dir{E}}$ be the edge-vertex incidence matrix 
\begin{align*}
    \BB_{\dir{G}}(v, e) = \begin{cases} 
        -1 & \text{if } e = (u,v) \\
        1 & \text{if } e = (v,u) \\
        0 & \text{otherwise}
    \end{cases}
\end{align*}
Observe that $(\BB_{G} \ff)(v)$ is precisely the net flow going out of $v$.
Let $\ss \in \R^V$ be a \emph{source function} which indicates, for each vertex $v$, the amount of flow initialized at $v$. 
Let $\tt \in \R^V_{\ge 0}$ be a \emph{sink function} which indicates, for each vertex $v$, the sink capacity that $v$ can absorb. 
Given a flow $\ff$ in $G$, we define the \emph{excess} $\xx_{\ss, \tt, G, \ff} = \max(\ss - \tt - \BB_{G} \ff, \veczero) \in \R^V$. 
Intuitively, $\xx_{\ss, \tt, G, \ff}(v)$ measures the total amount of flow at $v$ that is not absorbed. 
We omit some parameters in the subscript whenever it does not cause confusion.
If $\xx_{\ss, \tt, G, \ff} = \veczero$, then we say that \emph{$\ff$ routes source $\ss$ to sink $\tt$}.

For a subset $S$ of the vertices, we let the amount of source in $S$ refer to $\ss(S)$, i.e. the sum of the source values on vertices in $S$. 

We define the residual graph to be $\dir{G}_{\ff} = (\dir{V}, \dir{E}_{\ff}, \dir{\cc}_{\ff})$ where we have $\cc_{\ff}(u,v) = \max\{0, \cc(u,v) - \ff(u,v) + \ff(v,u)\}$ for all $(u,v) \in V^2$, and $(u,v)$ in $\dir{E}_{\ff}$ if and only if $\cc_{\ff}(u,v) > 0$.

\paragraph{Dinitz's Algorithm.} The blocking flow algorithm by Dinitz yields the following guarantee for our specific setting, which was first observed in \cite{orecchia2014flow}.

\begin{fact}[Local Blocking Flow, See \cite{orecchia2014flow}]
    \label{fct:dinitz}
    Given a graph $\dir{G} = (\dir{V}, \dir{E}, \dir{\cc})$, demands $\ss - \tt = \dd \in \R^{V}$ with $\sum_i \dd(i) \leq 0$ for $\ss \in \R^V, \tt \in \R^V_{\geq 0}$, and a parameter $h \in \mathbb{N}_{> 0}$. Then, running Dinitz Algorithm for $h$ rounds yields a feasible pre-flow $\ff$ such that no path consisting of less than $h$ edges exists from any vertex $u$ with $\xx_{\ff}(u) > 0$ to any vertex $v$ with $\dd(v) + [\BB \ff](v) < 0$ in the residual graph $\dir{G}_{\ff}$.  
    If $\tt \geq \deg_G/\lambda$ for some $\lambda \geq 1$, then Dinitz algorithm can be implemented in time $O(\lambda \cdot h \cdot \norm{\ss}_1)$.
\end{fact}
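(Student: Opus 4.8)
The plan is to reduce everything to the classical analysis of Dinitz's blocking-flow algorithm and then layer on the local implementation for the running-time bound. First I would put the instance in standard single-commodity form: inject $\ss(v)$ units of excess at every vertex $v$ (after the trivial cancellation $\ss \gets \ss - \min(\ss,\tt)$, $\tt \gets \tt - \min(\ss,\tt)$, which leaves both the residual graph and the notion of excess unchanged and makes $\ss,\tt$ have disjoint support), keep $\tt$ as the sink capacities, and run Dinitz in ``pre-flow mode'': in round $k$, build by a BFS in the current residual graph $\dir{G}_{\ff}$ the layered graph of shortest residual paths from the set $X = \{u : \xx_{\ff}(u) > 0\}$ of excess vertices to the set $Y = \{v : \dd(v) + [\BB \ff](v) < 0\}$ of not-yet-saturated sinks, then push a blocking flow along it. (Equivalently, attach a super-source $s^{*}$ with an arc $s^{*}\!\to\! v$ of capacity $\ss(v)$ and a super-sink $t^{*}$ with an arc $v \!\to\! t^{*}$ of capacity $\tt(v)$, and run ordinary $s^{*}$–$t^{*}$ Dinitz for $h$ rounds; a residual $s^{*}$–$t^{*}$ path of length $\ell$ is exactly a length-$(\ell-2)$ residual path from an excess vertex to a hungry sink.) The output after $h$ rounds is the claimed feasible pre-flow $\ff$, and feasibility $\veczero \le \ff \le \dir{\cc}$ holds throughout since every round only augments along residual arcs.

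For correctness I would invoke the standard monotonicity lemma. A blocking flow in the layered graph of round $k$ saturates every shortest residual $X$–$Y$ path, so after that augmentation every residual $X$–$Y$ path must traverse at least one arc that ``climbs a layer'', hence $\dist_{\dir{G}_{\ff}}(X,Y)$ strictly increases with each round; as this distance is at least $1$ so long as any augmenting path remains, after $h$ rounds $\dist_{\dir{G}_{\ff}}(X,Y) \ge h+1$, so in particular no residual path with fewer than $h$ edges joins a vertex of positive excess to an unsaturated sink (if no augmenting path survives at all, the statement is vacuous). Nothing here is specific to a single source or sink: the argument only uses that the pushed flow is \emph{blocking in the layered graph}, so the multi-terminal pre-flow variant goes through verbatim.

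For the running time I would use the local Dinitz implementation of \cite{orecchia2014flow}. Each round's layered-graph BFS is truncated at hop-depth $h$ (depth $h+2$ in the super-graph), so the search never looks farther than distance $h$ from an excess vertex, and the blocking flow is computed by a DFS with current-arc pointers, costing $O(1)$ per arc scan plus $O(h)$ per augmenting path found. The point that makes this \emph{local} is the hypothesis $\tt \ge \deg_G/\lambda$: every vertex the search ever touches has flow routed to it, and the sink arcs along any augmenting path are consumed up to capacity $\tt(v) \ge \deg_G(v)/\lambda$ before surplus is forwarded, while the total flow in the system is $\norm{\ss}_1$; this bounds the volume of the explored vertex set by $O(\lambda \norm{\ss}_1)$, so only $O(\lambda \norm{\ss}_1)$ arcs are ever scanned. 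Each such arc is scanned $O(1)$ times per round by the BFS and, thanks to the current-arc pointers, $O(1)$ times per round by the DFS, over $h$ rounds; the augmenting-path walks contribute an extra $O(h)$ per unit of flow, i.e. $O(h\norm{\ss}_1)$ overall. Multiplying through yields $O(\lambda h \norm{\ss}_1)$.

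The main obstacle is making the ``explored volume $O(\lambda\norm{\ss}_1)$'' step fully rigorous: a textbook Dinitz round runs a global BFS, which would touch the entire graph, so one must instead generate the layered graph lazily, interleaved with the flow push, and argue the DFS never descends into a region that flow has not reached — precisely the content of the local implementation in \cite{orecchia2014flow}, whose charging argument (each unit of displaced flow pays for a bounded amount of exploration, using the $\deg_G/\lambda$ absorption) I would adapt to our demands. Everything else — the reduction, the layer-monotonicity lemma, and the feasibility bookkeeping — is routine.
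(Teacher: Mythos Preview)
Your proposal is correct and follows essentially the same approach as the paper. The paper's own proof is a two-sentence sketch that only addresses the runtime bound (treating the distance-monotonicity part as standard): it defines a vertex as ``explored'' if the blocking flow pushes along one of its out-edges, notes this happens at most $h$ times, and charges each exploration of $v$ to the at least $\deg_G(v)/\lambda$ units of flow $v$ absorbs---exactly your absorption-charging argument. Your write-up is more thorough (the super-source reduction, the layer-monotonicity lemma, the explicit acknowledgement that the BFS must be done lazily to stay local), but the core idea is identical.
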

\begin{proof}
    We say a vertex is explored if the blocking flow sends flow to at least one of its out edges. Whenever a vertex $v$ is explored (which can happen up to $h$ times) it absorbs at least $\deg_G(v)/\lambda$ flow, which can pay for these explorations. 
\end{proof}

\section{Expander Pruning via Batching}
\label{sec:amortized}

We first present the foundation of our algorithm based on the batch pruning algorithm of \cite{saranurak2021expander}. It's runtime guarantees are also amortized, but unlike \cite{saranurak2021expander} it processes batches $B_1, \ldots, B_{k}$ for $k = \ceil{\log_2 m}$ in time \emph{proportional} to $|B_i|$ up to poly-logarithmic factors, where we choose batches that partition the set of edges deleted from $G$ up to the current time such that batch $B_i$ consists of $O(m/2^i)$ such deleted edges (later, we will additionally ensure that the pruned set only grows but it is useful to first ignore this technicality). This is achieved by extending a technique that was previously used in \cite{chen2025parallel} to parallelize \cite{saranurak2021expander}. 

Because the runtime per batch is roughly proportional to its size, standard de-amortization techniques could directly yield an algorithm with worst-case runtime guarantees. Unfortunately, algorithms derived in this manner rely on pre-computing information in the background and occasionally switching pointers to the current answer. This method creates large recourse. In particular, another algorithm using the data structure can only read all the changes in fast amortized time when they are published. 

We address this caveat in the following two sections. In \Cref{sec:amortized_low_rec} we show that low-recourse can be achieved in our framework, albeit still with amortized run-time. To the best of our knowledge, this is the first proof that low recourse expander pruning is structurally possible. Then, in \Cref{sec:worst_case_low_rec} we use relatively involved de-amortization techniques to improve this runtime to be poly-logarithmically bounded in the worst case. 

\begin{lemma}[Expansion Certificate, See Proposition A.3 in \cite{saranurak2021expander}] \label{lem:exp_cert}
    Let $G = (V, E)$ be a $\phi$-expander, $A \subset V$ and $B \subset E$. Let $G' = G[V \setminus A] \setminus B$.
    If there exists a flow $\ff$ routing $(\frac{8}{\phi}(\deg_G - \deg_{G'}))[V \setminus A]$ source to sinks $2\deg_G[V \setminus A]$ in $G'$ with edge capacities $\frac{32}{\phi}$, then $G'$ is a $\phi/10$ expander. 
\end{lemma}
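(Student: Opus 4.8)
The plan is to verify the expansion condition of \Cref{def:pruned_exp} for $G'$ directly, using the given flow $\ff$ as a certificate, via a two-case argument of the kind underlying the pruning proof of \cite{saranurak2021expander} sketched in \Cref{sec:overview}. Throughout, volumes are measured with respect to the degrees $\deg_G$ of the parent graph (this is the relevant normalization for pruned decremental expanders), and for $X,Y \subseteq V$ I write $E_B(X,Y) := E(X,Y) \cap B$. Fix an arbitrary nonempty $S \subseteq V \setminus A$ with $\vol_G(S) \le \vol_G((V \setminus A) \setminus S)$, set $\overline S := (V \setminus A) \setminus S$, and let $c_S := |E_{G'}(S, \overline S)|$ be the number of $G'$-edges leaving $S$. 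The goal is to show $c_S \ge \tfrac{\phi}{10}\vol_G(S)$.

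First I would relate $c_S$ to the original boundary of $S$ in $G$. Since $\overline S \subseteq V \setminus S$, the choice of $S$ gives $\vol_G(S) \le \vol_G(V \setminus S)$, so the $\phi$-expansion of $G$ yields $|E_G(S, V \setminus S)| \ge \phi\,\vol_G(S)$. Splitting the $G$-boundary of $S$ into edges to $A$, deleted edges to $\overline S$, and surviving edges to $\overline S$ gives $|E_G(S, V \setminus S)| = |E_G(S,A)| + |E_B(S,\overline S)| + c_S$ (edges of $B$ incident to $A$ are irrelevant as they are already absent from $G[V \setminus A]$). On the source side, for $v \in V \setminus A$ one has $\deg_G(v) - \deg_{G'}(v) = |E_G(\{v\},A)| + |E_B(\{v\}, V \setminus A)|$, so summing over $v \in S$,
\[
  \ss(S) = \tfrac{8}{\phi}\big(|E_G(S,A)| + |E_B(S,\overline S)| + 2|E_B(S,S)|\big) \ge \tfrac{8}{\phi}\big(|E_G(S,A)| + |E_B(S,\overline S)|\big),
\]
where $\ss := \tfrac{8}{\phi}(\deg_G - \deg_{G'})[V \setminus A]$ is the source of the flow instance. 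Combining the three relations gives the key inequality $c_S \ge \phi\,\vol_G(S) - \tfrac{\phi}{8}\ss(S)$.

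Next I would case on the size of $\ss(S)$ relative to $\vol_G(S)$, with threshold $6\,\vol_G(S)$. If $\ss(S) \le 6\,\vol_G(S)$, the key inequality already yields $c_S \ge \phi\,\vol_G(S) - \tfrac{6\phi}{8}\vol_G(S) = \tfrac{\phi}{4}\vol_G(S) \ge \tfrac{\phi}{10}\vol_G(S)$. If instead $\ss(S) > 6\,\vol_G(S)$, I invoke the flow: since $\ff$ routes $\ss$ to the sink $\tt := 2\deg_G[V\setminus A]$ we have $\xx_{\ss,\tt,G',\ff} = \veczero$, hence $[\BB_{G'}\ff](v) \ge \ss(v) - \tt(v)$ for every vertex $v$; summing over $v \in S$ shows the net flow that $\ff$ pushes out of $S$ is at least $\ss(S) - \tt(S) > 6\,\vol_G(S) - 2\,\vol_G(S) = 4\,\vol_G(S)$, using $\tt(S) = 2\,\vol_G(S)$. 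This net outflow crosses only the $c_S$ edges of $\partial_{G'}(S)$, each of capacity $\tfrac{32}{\phi}$, so $\tfrac{32}{\phi}c_S > 4\,\vol_G(S)$, i.e.\ $c_S > \tfrac{\phi}{8}\vol_G(S) \ge \tfrac{\phi}{10}\vol_G(S)$. In either case the expansion bound holds for the arbitrary cut $S$, so $G'$ is a $\phi/10$-expander.

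I do not expect a genuine obstacle: this is elementary max-flow/Cheeger-style bookkeeping. The two points needing care are (i) that all volumes are measured with respect to $\deg_G$, so that $S$ is indeed the lighter side in $G$ and the conclusion matches \Cref{def:pruned_exp}; and (ii) correctly attributing each deleted edge incident to $S$ to either $E_G(S,A)$ or $E_B(S,\overline S)$ (with $B$-edges interior to $S$ only helping), so that $\ss(S)$ is seen to dominate $\tfrac{8}{\phi}(|E_G(S,A)| + |E_B(S,\overline S)|)$. The doubled sink capacity $2\deg_G$ is precisely what forces the slightly weaker constant $\phi/10$ (instead of the familiar $\phi/6$ for unit sinks) and pins down the threshold $6\,\vol_G(S)$ in the case split.
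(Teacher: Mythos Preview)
Your proof is correct and uses the same underlying bookkeeping as the paper: you relate $\ss(S)$ to the boundary edges lost from $E_G(S,V\setminus S)$ and then invoke the edge-capacity bound on $\ff$. The only difference is organizational: the paper argues by contradiction (assume $c_S < \tfrac{\phi}{10}\vol_G(S)$, deduce $\ss(S) \ge \tfrac{72}{10}\vol_G(S)$ in one step, and derive a capacity contradiction), whereas you give a direct two-case proof with threshold $\ss(S)\lessgtr 6\,\vol_G(S)$. The contradiction version is marginally shorter since the assumed sparsity immediately forces the ``large source'' regime, but the two arguments are interchangeable.
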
 
\begin{proof}
    To arrive at a contradiction, consider a cut $S$ that is $\phi/10$ sparse in $G'$, i.e., $|E_{G'}(S, V(G') \setminus S)| < \phi/10 \cdot \vol_G(S)$
    We know that $|E_G(S, V \setminus S)| \geq \phi \vol_G(S)$ as $G$ is a $\phi$-expander. Therefore, the amount of source in $G'[S]$ is at least $\frac{8}{\phi}|E_G(S, V \setminus S)| - |E_{G'}(S, V(G') \setminus S)| \geq \frac{9\cdot 8}{10} \vol_G(S) = \frac{72}{10} \vol_G(S)$. But there are less than $\frac{\phi}{10}\vol_G(S)$ edges leaving the cut $S$ in $G'$, and therefore the total capacity of these edges is at most $\frac{32}{10}\vol_G(S)$ which yields a contradiction since at least $\frac{72}{10}\vol_G(S) - 2\vol_G(S) > \frac{32}{10}\vol_G(S)$ flow has to leave the set $S$ as only $\vol_G(S)$ of the flow can be absorbed by sinks in $S$.
\end{proof}

\subsection{The Meta-Algorithm: Batch Pruning}

For a given number of levels $k = \ceil{\log_2(m)}$ and $\lambda \defeq k + \ceil{\log_2(\phi^{-1})} + 1 \leq 4 \log_2 m$, the algorithm \textsc{BatchPruning}($G = (V, E), \{B_i\}_{i = 1}^\lambda$, $\{A_i\}_{i = 1}^\lambda$, $\phi$) (\Cref{alg:batch_pruning}) processes a sequence of batches $\{B_i\}_{i = 1}^{\lambda}$, where each batch $B_i$ consists of either $0$, $2^{k - i - 1}$ or $2^{k - i}$ edges to be deleted from a $\phi$-expander $G$ (the auxiliary batches $B_{k + 1}, \ldots B_{\lambda}$ are always empty).

It is additionally given a sequence $\{A_i\}_{i = 1}^\lambda$, where each $A_i \subseteq V$ consists of a set of vertices. These sets correspond to previously pruned vertices and are used to ensure the monotonicity of the pruned set. In the end, the algorithm returns a set $\widehat{V}$ so that $G' = G[\widehat{V}] \setminus \bigcup_{i = 1}^k B_i$ is still a $\Omega(\phi)$-expander and $\widehat{V} \cap A_i = \emptyset$ for all $i \in [k]$, i.e. all the vertices in the sets $\{A_i\}_{i = 1}^\lambda$ are pruned. The reader might want to assume they are all empty for an initial read. 

Throughout, the algorithm maintains sources $\ss$, sinks $\tt$ as well as a flow $\ff$ with edge capacity $8/\phi$ that routes part of the sources to some sinks in a subgraph $\widehat{G}$ of $G$. Our flow problems always attempt to route source $\ss$ to sinks $\tt$. However, for algorithmic efficiency, the flow problems are phrased in terms of a residual graph $\widehat{G}_{\ff}$ over the course of the algorithm. To ensure correctness of our algorithm, whenever an edge $e$ is deleted from $G$, we add $\BB \ff[e]$ to some additional demand vector $\ddelta$ after every edge deletion. 

We now describe the algorithm in more detail (see \Cref{alg:batch_pruning} for pseudocode). Initially $\widehat{G} \gets G$. Since the flow $\ff$ does not route all the demands, there may be some excess source (and sink) capacity left at some vertices over the course of the algorithm. The main loop at \Cref{alg:batch_pruning:main_loop} of \Cref{alg:batch_pruning} removes batch $B_i$ from the graph $\widehat{G}$, and then adds $8/\phi$ units of source to both endpoints of each removed edge in $\ss$. Then, it also removes the set $A_i$ from $\widehat{G}$ and again adds $8/\phi$ source to endpoints of removed edges. Finally, it adjusts the extra demand vector $\ddelta$ values of the endpoints with regards to the flow on the removed edge by adding the net demand of the removed edges to $\ddelta$. It finally removes the set of vertices in $S_i$ from the graph, and adds $8/\phi$ source to every endpoint of an edge in $\widehat{E}(S_i, \widehat{V} \setminus S_i)$ and again adjusts the source for flow going over that edge as before by reducing the source by the amount of flow carried from $S_i$ to $\widehat{V} \setminus S_i$, or increasing it if the flow went in the other direction. 

Furthermore, in every iteration, it adds $\frac{1}{\lambda} \cdot \deg(v)$ sink capacity to every vertex $v$, where we recall that $\lambda \defeq k + \log_2(1/\phi) + 1$ which yields a very structured flow problem.\footnote{In the following, we assume that all degrees are multiples of $\lambda$. This can be achieved by replacing every edge with $\lambda$ multi-edges, and causes a single $O(\log n)$ factor overhead.} Then, it reduces the amount of excess by first running a few iterations of Dinitz's blocking flow algorithm, which can be implemented very quickly given the structure of the problem. If the excess source is not reduced enough, we show that a cut whose removal reduces the source exists and is found. Once the excess is reduced enough, the next batch is processed. See the pseudocode presented in \Cref{alg:batch_pruning} for a detailed description of our algorithm. 

\begin{algorithm}[ht]
\caption{\textsc{BatchPruning}($G = (V, E), \{B_i\}_{i = 1}^{\lambda}, \{A_i\}_{i = 1}^{\lambda} , \phi$)}
\label{alg:batch_pruning}
\DontPrintSemicolon
$\widehat{G} = (\widehat{V}, \widehat{E}, {\cc}) \gets (V, E, \cc = \frac{8}{\phi} \cdot \vecone$); $\ss_0 \gets \veczero$; $\tt_0 \gets \veczero$; $\ff_0 \leftarrow \veczero$; $\ddelta_0 \gets \veczero$ \\
\For{$i = 1, \ldots , \lambda \defeq k + \log_2(\phi^{-1}) + 1$}{ \label{alg:batch_pruning:main_loop}
    $\widehat{G}_{\text{new}} = (\widehat{V}_{\text{new}}, \widehat{E}_{\text{new}}) \gets \widehat{G}[\widehat{V} \setminus A_i] \setminus B_i$ \tcp*{Compute next graph}
    $\ss_{i} \gets (\ss_{i - 1} + \frac{8}{\phi}(\deg_{\widehat{G}} - \deg_{\widehat{G}_{\text{new}}}))[\widehat{V}_{\text{new}}]$ \label{alg:batch_pruning:source_start} \tcp*{Add source values for $\widehat{G}_{\text{new}}$}
    $\ddelta_{i} \gets \ddelta_{i - 1}[\widehat{V}_{\text{new}}] + (\BB \ff_{i - 1})[\widehat{V}_{\text{new}}] - \BB_{\widehat{G}} \ff_{i - 1}[\widehat{E}_{\text{new}}]$ \\
    $\tt_i \gets \tt_{i - 1}[\widehat{V}_{\text{new}}]$ \\
    $\widehat{G} \gets \widehat{G}_{\text{new}}$ \label{alg:batch_pruning:graph} \\ 
    $\tt_i \gets \tt_i + \frac{1}{\lambda} \cdot \deg_{G}[\widehat{V}]$ \tcp*{Add extra sink for this round}
    $\ff_i \gets \ff_{i - 1}[\widehat{E}] + \textsc{Dinitz}({\widehat{G}}_{\ff_{i - 1}[\hat{E}]}, \ss_i - \tt_i - \BB \ff_{i - 1}, h = \frac{C \cdot \lambda \cdot \log_2 n \log_2 \log_2 n}{\phi})$ \label{alg:batch_pruning:f} \tcp*{$C \defeq 10^8$}
    \uIf(\tcp*[f]{Recall $\xx_{\ss_i, \tt_i, \ff_i} \defeq \max(\ss_i - \tt_i - \BB \ff_i, \veczero)$}){$\norm{\xx_{\ss_i, \tt_i, \ff_i}}_1 > 2^{k-i}/\phi $}{ \label{alg:batch_pruning:if} 
        $j \gets 0$; $S_{\leq 0} \gets \text{supp}(\xx_{\ss_i, \tt_i, \ff_i})$ \\
        \While{$|\widehat{E}_{\ff_i}(S_{\leq j}, \widehat{V} \setminus S_{\leq j})| \geq \frac{\phi}{10^6 \cdot \lambda \cdot \log_ n} (\vol_{G}(S_{\leq j}) + \norm{\xx_{\ss_i, \tt_i, \ff_i}}_1)$}{ \label{alg:batch_pruning:while}
            $j \gets j + 1$; $S_{\leq j} \gets \{v \in \widehat{V}: \dist_{\widehat{G}_{\ff_i}} (S_{\leq 0}, v) \leq j \}$ \\
        }
        $S_i \gets S_{\leq j}$ \label{alg:batch_pruning:proposal} \\
        $\ss_i \gets \ss_i[\widehat{V} \setminus S_i] + \frac{8}{\phi} \cdot (\mathbf{deg}_{\widehat{G}} - \mathbf{deg}_{\widehat{G}[\widehat{V} \setminus S_i]})$ \label{alg:batch_pruning:src_update}\\
        $\ddelta_i \gets \ddelta_i + (\BB \ff_i)[\widehat{V} \setminus S_i] - \BB \ff_i[E(\widehat{G}[\widehat{V}])]$ \\
        $\tt_i \gets \tt_i[\widehat{V} \setminus S_i]$ \label{alg:batch_pruning:st_end} \\
        $\widehat{G} \gets \widehat{G}[\widehat{V} \setminus S_i]$; \label{alg:batch_pruning:G_end} \\
        $\ff_i \gets \ff_i[\widehat{E}]$ \label{alg:batch_pruning:f_end} 
    }
    }
    \Return{$\widehat{V}$}
\label{alg:batchPruningAlg}
\end{algorithm}

\subsection{Analysis of Batch Pruning} 

In this section, we prove various properties of the batch pruning algorithm in a series of claims. These will set us up for the proving the main result of this chapter: A dynamic pruning algorithm that is directly suitable for de-amortization. We first show that the while loop at \Cref{alg:batch_pruning:while} terminates in less than $h$ iterations whenever it is called and the excess is bounded. 

\begin{claim}
    \label{clm:while_terminates}
    Assume that the total excess after \Cref{alg:batch_pruning:graph} of \Cref{alg:batch_pruning} is at most $\norm{\xx_{\ss_i, \tt_i, \ff_{i - 1}}}_1 \leq (64 \log_2 n) \cdot 100 \cdot 2^{k - i} / \phi$, and that $\norm{\ss_i}_1 \geq \norm{\tt_i}_1$. Then, the while loop at \Cref{alg:batch_pruning:while} terminates after $j < h$ iterations if it is entered. 
\end{claim}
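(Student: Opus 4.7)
My plan is to argue by contradiction: suppose the while-loop continuation condition
\[
|\widehat E_{\ff_i}(S_{\leq j'}, \widehat V \setminus S_{\leq j'})| \geq \tfrac{\phi}{10^6 \lambda \log_2 n}\bigl(\vol_G(S_{\leq j'}) + \norm{\xx_{\ss_i,\tt_i,\ff_i}}_1\bigr)
\]
holds at every layer $j' \in \{0, 1, \dots, h-1\}$. I will show this forces $\vol_G(S_{\leq j'})$ to grow multiplicatively with $j'$, contradicting the trivial bound $\vol_G(S_{\leq h}) \leq 2m$.

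\textbf{Step 1 (per-layer volume growth).} Every residual edge in $\widehat E_{\ff_i}(S_{\leq j'}, \widehat V \setminus S_{\leq j'})$ has its head in $S_{\leq j'+1}\setminus S_{\leq j'}$ by definition of the BFS layering. Since residual in-degrees are bounded by $2\deg_G(v)$ (one residual per direction of each undirected edge incident to $v$ in $\widehat G$), I have $|\widehat E_{\ff_i}(S_{\leq j'}, \widehat V \setminus S_{\leq j'})| \leq 2\vol_G(S_{\leq j'+1}\setminus S_{\leq j'})$. Combining with the continuation condition and writing $\varepsilon \defeq \phi/(2\cdot 10^6 \lambda \log_2 n)$ and $\xx \defeq \xx_{\ss_i,\tt_i,\ff_i}$,
\[
\vol_G(S_{\leq j'+1}) + \norm{\xx}_1 \geq (1+\varepsilon)\bigl(\vol_G(S_{\leq j'}) + \norm{\xx}_1\bigr).
\]

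\textbf{Step 2 (lower bound on the initial layer).} For each $v \in S_{\leq 0} = \supp(\xx)$, I have $\xx(v) \leq \ss_i(v)$. Inspecting the source updates at Lines~\ref{alg:batch_pruning:source_start} and~\ref{alg:batch_pruning:src_update}, the source at $v$ is incremented by $8/\phi$ only when an edge incident to $v$ is removed from $\widehat G$, and each incident edge can trigger at most two such increments (once via a deletion batch or pruned set on the far endpoint, once via the edge itself being removed). Hence $\ss_i(v) \leq 16\deg_G(v)/\phi$, and summing over the support yields $\vol_G(S_{\leq 0}) \geq (\phi/16)\norm{\xx}_1$. Since the while loop is reached only inside the if-branch at Line~\ref{alg:batch_pruning:if}, we further have $\norm{\xx}_1 > 2^{k-i}/\phi$, which combined with the above gives $\vol_G(S_{\leq 0}) + \norm{\xx}_1 = \Omega(2^{k-i}/\phi)$.

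\textbf{Step 3 (contradiction).} Iterating Step~1 for $h$ layers and plugging in $\vol_G(S_{\leq h}) \leq 2m = 2^{k+1}$ and the hypothesis $\norm{\xx}_1 \leq 6400\log_2 n \cdot 2^{k-i}/\phi$,
\[
(1+\varepsilon)^h \leq \frac{\vol_G(S_{\leq h}) + \norm{\xx}_1}{\vol_G(S_{\leq 0}) + \norm{\xx}_1}.
\]
The denominator is $\Omega(2^{k-i}/\phi)$ and the numerator is $O(2^k + \log_2 n \cdot 2^{k-i}/\phi)$, so the ratio is bounded by a polylogarithmic factor in $n$ once one carefully combines the if-branch lower bound and the hypothesis upper bound on $\norm{\xx}_1$. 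Taking logarithms and substituting $\varepsilon h = \Theta(C \log_2\log_2 n)$ from the definition of $h$, the left-hand side grows as $(\log_2 n)^{\Omega(C)}$, which exceeds the polylog ratio for the absolute constant $C = 10^8$ in the definition of $h$. This is the required contradiction, so the while loop terminates at some $j < h$.

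The main obstacle I expect is Step~3: controlling the volume ratio tightly enough that its logarithm is $O(\log_2\log_2 n)$ rather than $O(\log_2 n)$. This is the reason the hypothesis pins $\norm{\xx}_1$ between $2^{k-i}/\phi$ (from the if-branch) and a polylogarithmic multiple of $2^{k-i}/\phi$ (the assumption), so that numerator and denominator cancel up to polylogarithmic factors rather than polynomial ones; the sink-doubling structure of the algorithm (which causes $\lambda$ to appear in $\varepsilon$) is exactly what leaves enough slack in $h$ to absorb this.
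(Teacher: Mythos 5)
Your Step~1 (showing that each non-terminating iteration forces $\vol_G(S_{\leq j'}) + \norm{\xx}_1$ to grow by a multiplicative factor $1+\Theta(\phi/(\lambda\log n))$, via a bound on residual in-degrees) matches the paper's approach, and Step~2 (lower-bounding $\vol_G(S_{\leq 0})$) is harmless but unnecessary: the if-branch already gives $\vol_G(S_{\leq 0})+\norm{\xx}_1 > 2^{k-i}/\phi$ directly, which is all one needs for the base of the recursion.

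The real gap is in Step~3: your ceiling $\vol_G(S_{\leq h}) \leq 2m$ is far too weak to close the argument. With $2m \approx 2^{k+1}$ and a denominator of $\Theta(2^{k-i}/\phi)$, the ratio you need to control is $\Theta(2^{i}\phi + \log n)$, which is \emph{polynomial} in $m$ for deep levels $i$ close to $k$; the logarithm of that ratio is then $\Theta(\log m)$, not $O(\log\log n)$, and $h$ is not large enough to produce a contradiction. The ``once one carefully combines\ldots'' sentence is papering over exactly this hole: the hypothesis upper bound on $\norm{\xx}_1$ cannot tame the $2m$ term because it only constrains the excess, not the total graph volume. What the paper actually uses (and what your proposal is missing) is the Dinitz guarantee from \Cref{fct:dinitz}: after $h$ rounds, every vertex within distance $h$ of a vertex with positive excess has its sink \emph{fully saturated}. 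Since at iteration $i$ each vertex $v$ carries sink capacity $\tt_i(v) \geq \deg_G(v)/\lambda$, saturation gives the much tighter upper bound $\vol_G(S_{\leq h}) = O(\lambda \cdot \norm{\xx_{\ss_i,\tt_i,\ff_{i-1}}}_1) = O(\log^2 n) \cdot 2^{k-i}/\phi$, which depends on the hypothesis bound on the initial excess and is level-independent up to polylogs. It is this saturated-sink bound, not the trivial total-volume bound, that makes the numerator/denominator ratio polylogarithmic and lets the paper finish with $(1+\varepsilon)^{r_2} \geq (\log n)^{\Omega(1)}$ exceeding that ratio. You should replace your Step~3 ceiling with this observation; the rest of the skeleton then goes through.
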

\begin{proof}
    Using the if condition ($\norm{\xx_{\ss_i, \tt_i, \ff_i}}_1 > 2^{k-i}/\phi $), we have that every iteration of the while loop adds at least $2^{k - i}/(10^6  \cdot \lambda \cdot \log_2 n)$ to $\vol_G(S_{\leq j})$. Therefore, after $r_1 = \frac{5 \cdot 10^7 \cdot \lambda \log_2 n}{\phi}$ iterations, we have $\vol(S_{\leq r_1}) \geq 2^{k - i}/\phi$. 
    
    Since the volume $\vol(S_{\leq j})$ also grows multiplicatively by $\left(1+\frac{\phi}{10^6 \cdot \lambda \cdot \log_2 n}\right)$, we have after $r_1 + r_2$ iterations for any $r_2$ that $\vol(S_{\leq r_1 + r_2}) \geq \left(1 + \frac{\phi}{10^6 \cdot \lambda \cdot \log_2 n}\right)^{r_2} 2^{k - i}/\phi$. Fixing $r_2 = \frac{5 \cdot 10^7 \cdot \lambda \cdot \log_2 n \log_2 \log_2 n}{\phi}$ and thus $r_1 + r_2 \leq h$ this yields $\vol_G(S_{\leq r_1 + r_2}) > 10^5 \cdot \log_2^2(n) \cdot 2^{k - i}/\phi $ since $(1 + \epsilon)^{1/\epsilon} \geq 2$ for $\epsilon \leq 1$ and therefore $\vol_G(S_{\leq r_1 + r_2}) \geq 2^{50 \log_2 \log_2 n} \cdot 2^{k - i}/\phi \geq 10^5 \log_2^2 n 2^{k - i}/\phi$. But since the shortest path from a sources to sinks in the residual graph is at least $h$ by \Cref{fct:dinitz}, all the sinks inside $\vol(S_{\leq h})$ are saturated. 
    Thus, $\vol(S_{\leq h}) \leq (64 \log_2 n \lambda) \cdot 100 \cdot 2^{k - i}/\phi \leq 512 \log_2^2n  \cdot 100 \cdot 2^{k - i}/\phi$ by the assumption on the initial excess, and therefore we obtain a contradiction. This proves that the while loop terminates for some $j < h$. 
\end{proof}

\begin{claim}\label{clm:source_reduction}
Assume that the total excess after \Cref{alg:batch_pruning:graph} of \Cref{alg:batch_pruning} is at most $\norm{\xx_{\ss_i, \tt_i, \ff_{i - 1}}}_1 \leq (64 \log_2 n) \cdot 100 \cdot 2^{k - i} / \phi$, $A_i = \emptyset$, and that $\norm{\ss_i}_1 \geq \norm{\tt_i}_1$. Then, at the end of the $i$-th iteration of the for loop in \Cref{alg:batch_pruning:main_loop} of \Cref{alg:batch_pruning}, we have $\norm{\xx_{\ss_i, \tt_i, \ff_{i} }}_1 \leq 2^{k - i}/\phi$. Furthermore, $\vol_G(S_i) \leq (64 \log_2 n) 100 \cdot \lambda \cdot 2^{k - i}/\phi.$
\end{claim}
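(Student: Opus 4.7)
The plan is to split on whether the if-branch at \Cref{alg:batch_pruning:if} fires. If it does not, then the negation of its condition gives $\norm{\xx_{\ss_i,\tt_i,\ff_i}}_1 \leq 2^{k-i}/\phi$ directly and $S_i$ is (implicitly) empty, so both bounds hold trivially. The real content lies in the case in which the if-branch executes; there \Cref{clm:while_terminates} guarantees that the BFS terminates at some $j < h$, and we take $S_i = S_{\leq j} \subseteq S_{\leq h}$.

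For the volume bound I would re-use the sink-saturation argument from the proof of \Cref{clm:while_terminates}. Since every vertex $v \in S_{\leq h}$ is at residual distance $< h$ from the excess support, \Cref{fct:dinitz} forces $v$ not to be slack under $\ff_i$, i.e.\ $\ff_i$ fully saturates $\tt_i(v)$. But line~8 inserts a fresh $\tfrac{1}{\lambda}\deg_G(v)$ of sink at every $v \in \widehat{V}$, so the Dinitz phase must newly absorb at least $\tfrac{1}{\lambda}\deg_G(v)$ at each $v \in S_{\leq h}$. The total new absorption created by Dinitz is bounded by the pre-Dinitz global excess, which is only reduced by the line~8 sink addition and hence is at most $\norm{\xx_{\ss_i,\tt_i,\ff_{i-1}}}_1 \leq (64\log_2 n) \cdot 100 \cdot 2^{k-i}/\phi$ by assumption. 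Summing over $v \in S_{\leq h}$ yields $\tfrac{1}{\lambda}\vol_G(S_{\leq h}) \leq (64\log_2 n) \cdot 100 \cdot 2^{k-i}/\phi$, and rearranging gives $\vol_G(S_i) \leq \vol_G(S_{\leq h}) \leq (64\log_2 n) \cdot 100 \cdot \lambda \cdot 2^{k-i}/\phi$.

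For the excess bound I would unfold the definition of $\xx_{\ss_i,\tt_i,\ff_i}$ after the updates at lines~\ref{alg:batch_pruning:src_update}--\ref{alg:batch_pruning:f_end}. For every $v \in \widehat{V}\setminus S_i$ we have $\xx_i(v) = 0$ since $\supp(\xx_i) = S_{\leq 0} \subseteq S_i$, so the new excess at $v$ is upper-bounded by the positive part of the change in $\ss_i - \tt_i - \BB\ff_i$ at $v$ caused by pruning $S_i$. Each cut edge $\{u,v\}$ with $u \in S_i$ adds $\tfrac{8}{\phi}$ to the source at $v$ and removes its share $\ff_i(v,u) - \ff_i(u,v)$ from $(\BB\ff_i)(v)$ upon restriction to $\widehat{G}[\widehat{V}\setminus S_i]$; the net contribution at $v$ from this edge is exactly $\tfrac{8}{\phi} - f_{uv}$, where $f_{uv} := \ff_i(u,v) - \ff_i(v,u)$, which is precisely the residual capacity of the directed edge $(u,v)$ in $\widehat{G}_{\ff_i}$. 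Cut edges fully saturated from $S_i$ into $\widehat{V}\setminus S_i$ thus contribute $0$, and summing gives
\[
\norm{\xx_{\ss_i,\tt_i,\ff_i}}_1 \;\leq\; \sum_{\{u,v\} \in E(S_i,\widehat{V}\setminus S_i)}\!\!\bigl(\tfrac{8}{\phi} - f_{uv}\bigr) \;\leq\; \tfrac{16}{\phi}\cdot |\widehat{E}_{\ff_i}(S_i,\widehat{V}\setminus S_i)|.
\]
Plugging in the while-loop termination bound $|\widehat{E}_{\ff_i}(S_i,\widehat{V}\setminus S_i)| < \tfrac{\phi}{10^6 \cdot \lambda \cdot \log_2 n}(\vol_G(S_i) + \norm{\xx_i}_1)$ together with the volume bound from the previous paragraph and $\norm{\xx_i}_1 \leq (64\log_2 n) \cdot 100 \cdot 2^{k-i}/\phi$ then yields $\norm{\xx_{\ss_i,\tt_i,\ff_i}}_1 \leq 2^{k-i}/\phi$, with room to spare in the constants.

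The main delicate point I anticipate is the accounting claim that Dinitz's newly absorbed sink is bounded by the pre-Dinitz global excess: this relies on the fact that Dinitz can only transport pre-existing unabsorbed source along augmenting paths and cannot manufacture new source, so net absorption increases exactly by net source sent from excess vertices. A secondary subtlety is verifying that the auxiliary demand vector $\ddelta_i$, which is updated alongside the source in the same block, is kept outside of the definition of $\xx_{\ss_i,\tt_i,\ff_i}$ and therefore plays no role in the above bound.
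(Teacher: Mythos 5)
Your proposal is correct and matches the paper's argument: you split on the if-branch, derive the volume bound from the saturated sinks in $S_{\leq h}$ each having capacity at least $\frac{1}{\lambda}\deg_G(v)$ together with the assumed bound on the pre-Dinitz excess, and derive the excess bound by observing that each cut edge contributes its residual capacity (at most $16/\phi$) to the new excess, then plugging in the while-loop termination threshold. The one cosmetic difference is your remark that $\ddelta_i$ plays no role: the paper phrases the cancellation in terms of $\ddelta_i$, but since $\ddelta_i$ merely records the change to $\BB\ff_i$ caused by restricting the flow to the pruned edge set, it is bookkeeping for exactly the quantity $f_{uv}$ you track directly, and the resulting per-edge bound of $\frac{8}{\phi} - f_{uv} \le 16/\phi$ is identical.
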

\begin{proof}
    If the condition at \Cref{alg:batch_pruning:if} evaluates to false, then the claim immediately follows. Otherwise, the while-loop at \Cref{alg:batch_pruning:while} terminates after less than $h$ iterations by \Cref{clm:while_terminates}. 

    Then, we notice that every edge $e = (u,v)$ in $E_G(S_i, \widehat{V} \setminus S_i)$ that is not leaving $S_i$ in the residual graph is saturated with flow. Therefore, the adjustment in $\ddelta_i$ cancels out the contribution to $\ss_i$ of all such edges. Thus, only edges that are cut in the residual graph contribute excess source. 

    Observe that the total volume of $\vol_G(S_i) \leq (64 \log_2 n) \cdot 100 \cdot \lambda \cdot 2^{k - i}/\phi$ because all the sinks inside $S_i$ are saturated by \Cref{fct:dinitz} and each vertex is a sink of at least $\frac{1}{\lambda}$ times its volume. We have that 
    \begin{align*}
        |\widehat{E}_{\ff}(S_i, \widehat{V} \setminus S_i)| &\leq \frac{\phi}{10^6  \cdot \lambda \cdot \log_2 n} \left(\vol_G(S_i) + \norm{\xx_{\ss_{i}, \tt_{i}, \ff_{i}}}_1\right) \\
        &< \frac{\phi}{10^6 \cdot \lambda \cdot \log_2 n}\left(2 \cdot (64 \log_2 n) 100 \cdot \lambda \cdot 2^{k - i}/\phi\right) \\
        &\leq 2^{k - i}/16
    \end{align*}
    where the first inequality follows from the termination condition of the while loop, the second inequality follows from the upper bound on the initial excess and total volume.

    Therefore, the total excess is at most $2^{k - i}/\phi$, since every such edge contributes at most $16/\phi$ total source (it contributes only to the endpoint not in $S_i$ with $8/\phi$ added to $\ss_i$ and at most $8/\phi$ from $\ddelta_i$). This concludes the proof. 
\end{proof}

Given that at every level $i$, the amount of new source added due to deleted edges in $B_i$ is at most $\frac{16}{\phi} 2^{k - i}$, we could directly conclude that the algorithm produces a valid flow whenever all the sets $A_i$ are empty. In the next section, we carefully describe the batching scheme and prove that it always produces a flow routing the demands. 

\subsection{A Dynamic Algorithm via Batching} \label{sec:batching}

Before we wrap up the analysis, we describe the update scheme that we employ. While simpler update schemes exist, this particular one is tailored towards de-amortization. This deamortization is explained in detail in \Cref{sec:worst_case_low_rec}.

\paragraph{The Update Scheme. } In this paragraph we describe how we run \Cref{alg:batch_pruning} in conjunction with a batching scheme. Whenever an new deletion occurs, we update the contents of the sets $B_i, \ldots, B_k$ and $A_i, \ldots, A_{\lambda}$ for some $i$. Then, we re-start the main-loop at \Cref{alg:batch_pruning:main_loop} of \Cref{alg:batch_pruning} at index $i$. The total number of updates the algorithm processes is $\phi 2^{k - 1} /(10^7 \log_2^6 n)$.

Concretely, all the deletions received so far are stored in the sets $B_1, \ldots, B_k$, where every edge deletion is stored in a distinct set $B_i$. For all $i$, the set $B_i$ always contains either $0, 2^{k - i - 1}$ or $2^{k - i}$ edges. 

\begin{definition}
    If the set $B_i$ contains $0$ edges we call it empty, if it contains $2^{k - i - 1}$ edges we call it half-full, and if it contains $2^{k - i}$ edges we call it full. 
\end{definition}

Initially, all the sets $\{B_i\}_{i = 1}^{\lambda}$ and $\{A_i\}_{i = 1}^{\lambda}$ are empty. We then process updates as follows. To enable our update scheme, we additionally ensure that the set $B_k$ is always empty after we finished processing a deletion. 
\begin{itemize}
    \item When a new deletion arrives, it first gets added to $B_k$.
    \item We let $i \in [k]$ be the largest index such that $B_i$ is either empty or half-full. Notice that $i < k$ because $B_k$ is full when it contains a single edge. %
    We then say the update triggers a rebuild at level $i$ and we say that all layers $j \geq i$ are affected by this rebuild. 
    \item We update the level $i$ as follows: $B_i \gets B_i \cup B_{i + 1}$. $A_i \gets S_i \cup A_i \cup S_{i + 1} \cup A_{i + 1}$. 
    \item We update levels $j = i + 1, \ldots, \lambda$ as follows: $B_j \gets B_{j + 1}$, $A_j \gets S_{j + 1} \cup A_{j + 1}$. To simplify the description, we let $B_{\lambda + 1}, A_{\lambda + 1}, S_{\lambda + 1} = \emptyset$ be empty sets. Notice that $B_{k}, \ldots, B_{\lambda}$ are then empty by the description of our algorithm. 
    \item Then, we re-run the main-loop  at \Cref{alg:batch_pruning:main_loop} of \Cref{alg:batch_pruning} from index $i$. 
\end{itemize}

We notice that this description directly ensures that the graph $\widehat{G}$ considered at level $i$ of the algorithm is a smaller and smaller sub-graph of $G$ as time goes on. This provides the monotonicity we are after. We first show that the volume of the pruned set grows nicely with the number of deletions. 

We then prove a simple claim about our batching scheme that is required for synchronization. 

\begin{claim}
    \label{clm:rebuild_timing}
    Whenever level $i$ is affected by a rebuild, it is not affected by a rebuild for the next $2^{k - i - 1} - 1$ deletions.
\end{claim}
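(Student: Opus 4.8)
The plan is to reduce the statement to tracking a single integer, namely $c \defeq \sum_{j=i+1}^{k}|B_j|$, the number of deletions currently stored in levels $i+1$ through $k$. I will show that right after level $i$ is affected we have $c = 2^{k-i-1}-1$; that every later deletion increases $c$ by exactly one until level $i$ is affected again; and that level $i$ is affected again precisely when $c$ returns to its maximal value $2^{k-i}-1$. A one-line count then yields the bound. (For $i \ge k-1$ the quantity $2^{k-i-1}-1$ is nonpositive and the statement is vacuous, so one may assume $i \le k-2$.)

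The first step is to characterize when level $i$ is affected. By definition the rebuild triggered by an update occurs at the largest index $i'$ with $B_{i'}$ not full, and level $i$ is affected iff $i' \le i$. Since $i'$ is the \emph{largest} non-full index, $i' \le i$ holds iff $B_{i+1}, \ldots, B_k$ are all full; and because $|B_j| \le 2^{k-j}$ with equality exactly when $B_j$ is full, this is equivalent to $c = \sum_{j=i+1}^{k} 2^{k-j} = 2^{k-i}-1$, measured at the moment the new deletion has just been appended to $B_k$. In particular, when level $i$ is not affected, some $B_j$ with $i+1 \le j \le k$ has deficit at least $1$ (a half-full $B_j$ with $j\le k-1$ has deficit $2^{k-j-1}\ge 1$, and a non-full $B_k$ is empty with deficit $1$), so $c \le 2^{k-i}-2$.

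The second step, the one requiring the most care, is to pin down $c$ immediately after level $i$ is affected, uniformly over the two cases in which this can happen: a rebuild at level $i$ itself, or a rebuild at some level $i' < i$. In both cases the shift $B_j \gets B_{j+1}$ applies to every $j \ge i'+1$, hence to every $j \ge i+1$, while $B_i$ is at most touched by the union $B_i \gets B_i\cup B_{i+1}$. Since all batches strictly above the rebuild level are full beforehand, each $B_j$ with $i+1 \le j \le k-1$ becomes a copy of the formerly full batch $B_{j+1}$, which has $2^{k-j-1}$ edges — exactly the half-full size for level $j$ — while $B_k$ inherits the empty auxiliary batch $B_{k+1}$. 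Hence right after the rebuild $B_k$ is empty and $c = \sum_{j=i+1}^{k-1} 2^{k-j-1} = 2^{k-i-1}-1$.

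The final step is the counting argument. A subsequent deletion that does \emph{not} affect level $i$ triggers a rebuild at some level $i' \ge i+1$, and such a rebuild only unions and shifts batches among levels $\ge i+1$ and pushes $B_k$'s content down to $B_{k-1}$; together with the single edge appended to $B_k$ this increases $c$ by exactly $1$. Thus $t$ deletions after level $i$ was affected, the value of $c$ at the decision point equals $(2^{k-i-1}-1)+t$, which by the characterization of the first step equals $2^{k-i}-1$ — forcing the next rebuild to a level $\le i$ and hence affecting level $i$ — exactly when $t = 2^{k-i-1}$. Therefore level $i$ is not affected for the first $2^{k-i-1}-1$ deletions following the one that affected it, proving the claim, and the analysis shows the bound is tight.
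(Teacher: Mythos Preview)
Your proof is correct and follows the same approach as the paper's: after a rebuild affects level $i$, the batches $B_{i+1},\ldots,B_{k-1}$ are half-full and $B_k$ is empty, so $2^{k-i-1}$ further deletions are needed before they are all full and level $i$ can be affected again. Your explicit tracking of $c=\sum_{j>i}|B_j|$ makes the counting precise where the paper's four-sentence version is terse, but the underlying argument is identical.
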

\begin{proof}
    The batch $B_i$ for every level $i = 0, \ldots, k$ is initialized to be empty. Then, whenever an edge $e$ is deleted, we look for the empty or half-empty batch (containing $0$ or $2^{k - l - 1}$ edges) of highest index $l$. Thereafter, all the batches larger than this batch are half-full, and $i$ can only be affected by a rebuild after they are all full again. This can only happen after $2^{k - i - 1} - 1$ additional deletions. 
\end{proof}

\begin{claim}[Volume of the Pruned Set] \label{clm:volume}
    After $D$ deletions occurred, we have $\vol_G(\bigcup_{j \in [\lambda]} S_j \cup A_j) \leq O(D \cdot \frac{\log^3 n}{\phi})$. 
\end{claim}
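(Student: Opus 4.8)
The plan is to bound the total pruned volume by charging it against the edges deleted so far. I would fix a current time after $D$ deletions and, for each level $j$, bound $\vol_G(S_j \cup A_j)$ by tracking how much of the deletion budget is associated with level $j$. The key driver is \Cref{clm:source_reduction}: whenever level $i$ is (re-)processed with $A_i = \emptyset$ and the excess entering the while-loop is within the stated bound, we get $\vol_G(S_i) \le (64\log_2 n)\cdot 100\cdot \lambda \cdot 2^{k-i}/\phi = \tilde O(2^{k-i}/\phi)$. So a single rebuild at level $i$ produces a fresh $S_i$ of volume $\tilde O(2^{k-i}/\phi)$. Since $2^{k-i}$ is (up to constants) the capacity of batch $B_i$, which by \Cref{clm:rebuild_timing} can only be rebuilt once per $\approx 2^{k-i-1}$ deletions, the amortized volume contributed per deletion at level $i$ is $\tilde O(1/\phi)$.

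Concretely, I would argue as follows. First observe from the update scheme that each time level $i$ is rebuilt, its $A_i$ gets reset to absorb $S_i \cup A_i \cup S_{i+1} \cup A_{i+1}$ — i.e. the pruned vertices accumulated at levels $i$ and $i+1$ are pushed down one level. So $A_j$ at any time is a union of $S$-sets produced at earlier rebuilds of levels $\le j$ that have not yet been pushed further down. Therefore $\vol_G\bigl(\bigcup_{j}(S_j\cup A_j)\bigr) \le \sum_{\text{rebuilds }r} \vol_G(S^{(r)})$, where the sum runs over all rebuilds that have occurred so far, because every pruned vertex lives in exactly the $S$-set of the rebuild that created it (or in an $A_j$ it was later copied into, but we just upper bound by summing all created $S$-sets). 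Each rebuild at level $i$ contributes $\tilde O(2^{k-i}/\phi)$ volume by \Cref{clm:source_reduction}. The number of rebuilds at level $i$ within $D$ deletions is at most $D/(2^{k-i-1}-1) + 1 = O(D\cdot 2^{i-k})$ by \Cref{clm:rebuild_timing} (plus the trivial fact that there are at most $D$ deletions total, handling small $2^{k-i}$). Summing over $i = 1,\dots,\lambda$: $\sum_i O(D\cdot 2^{i-k}) \cdot \tilde O(2^{k-i}/\phi) = \sum_i \tilde O(D/\phi) = \tilde O(\lambda \cdot D/\phi) = O(D\cdot \frac{\log^3 n}{\phi})$, using $\lambda = O(\log m) = O(\log n)$ and absorbing the $\tilde O$'s $\log$ factors.

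I would also need to handle the edge cases cleanly. The high-index levels $j \ge k$ have $2^{k-j} \le 1$, so a rebuild there contributes only $\tilde O(1/\phi)$ volume and there are at most $\lambda$ of them per deletion (a deletion triggers a rebuild at one level $i$ and affects all $j \ge i$), contributing $\tilde O(\lambda^2/\phi)$ total for these levels over all $D$ steps — wait, more carefully, each deletion causes a cascade affecting $O(\lambda)$ levels, each producing an $S$-set of volume $\tilde O(2^{k-j}/\phi)$, and $\sum_{j\ge i} 2^{k-j} = O(2^{k-i})$, so per deletion the cascade adds $\tilde O(2^{k-i}/\phi)$ — but level $i$ only cascades once per $2^{k-i-1}$ steps, so this is already accounted for. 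So the clean way to organize it is: sum over deletions $d$, and for the rebuild triggered by deletion $d$ at level $i(d)$, the total volume of all $S$-sets created is $\sum_{j \ge i(d)} \tilde O(2^{k-j}/\phi) = \tilde O(2^{k-i(d)}/\phi)$; then sum $\sum_d 2^{k-i(d)}$ and use that level $i$ is triggered at most $O(D\cdot 2^{i-k})$ times to get $\sum_i 2^{k-i}\cdot O(D\cdot 2^{i-k}) = O(\lambda D)$, hence total volume $O(D\lambda \cdot \frac{\log^2 n}{\phi}) = O(D\frac{\log^3 n}{\phi})$.

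The main obstacle I anticipate is not the counting — that is routine geometric-series bookkeeping — but verifying that \Cref{clm:source_reduction}'s hypotheses (bounded incoming excess $\norm{\xx_{\ss_i,\tt_i,\ff_{i-1}}}_1 \le \tilde O(2^{k-i}/\phi)$, $A_i = \emptyset$ when the volume bound is applied, and $\norm{\ss_i}_1 \ge \norm{\tt_i}_1$) actually hold at every rebuild in the dynamic update scheme. In particular, the update scheme sets $A_i \gets S_i \cup A_i \cup S_{i+1}\cup A_{i+1}$, so $A_i$ is \emph{not} empty at the moment of rebuild; one must check that the contribution of pruning $A_i$ (which adds source per cut edge of $A_i$) is separately controlled, or that the relevant volume bound is instead inherited from the previous rebuild that created those vertices. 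This is the delicate point and likely requires an inductive invariant over the sequence of rebuilds asserting both the excess bound and the running volume bound simultaneously — so I would phrase the whole claim as part of such an invariant rather than proving it in isolation.
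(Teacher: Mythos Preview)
Your approach is essentially the paper's: bound each freshly created $S_i$ by $\vol_G(S_i)\le (64\log_2 n)\cdot 100\cdot\lambda\cdot 2^{k-i}/\phi$ via \Cref{clm:source_reduction}, observe that level $i$ is rebuilt at most once per $2^{k-i-1}$ deletions via \Cref{clm:rebuild_timing}, and sum over $\lambda=O(\log n)$ levels to get $O(D\log^3 n/\phi)$. Your observation that every vertex in any $A_j$ was born as some earlier $S$-set, so summing all $S$-volumes ever produced suffices, is exactly the right accounting and matches the paper's (terse) argument.

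The concern you raise at the end is well taken and in fact reflects how the paper is organized: \Cref{clm:source_reduction} literally assumes $A_i=\emptyset$, which is not the case in the dynamic scheme. The paper's short proof of \Cref{clm:volume} glosses over this and the real justification is deferred to \Cref{lem:main_invariant}, which establishes precisely the inductive invariant you anticipate (bottleneck-cut bounds on $A_i$ control the extra excess, giving the needed hypotheses and the per-level volume bound $\vol_G(A_i\cup S_i)\le 10^7\log_2^3(n)\cdot 2^{k-i}/\phi$). So your instinct to fold this claim into an invariant is exactly what the paper does.
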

\begin{proof}
    Whenever a layer $i$ is affected by a rebuild it adds at most volume $\vol_G(S_i) \leq (64 \log_2 n) 100 \cdot \lambda \cdot 2^{k - i}/\phi$ to the pruned set by \Cref{clm:source_reduction}. Since layer $i$ is affected by a rebuild after $2^{k - i - 1}$ deletions since the last rebuild and $\lambda \leq 8 \log_2 n$, the claim follows. 
\end{proof}

However, these sets $A_i$ unfortunately also complicate the analysis, because we have to carefully bound the amount of extra excess they introduce. By examining our algorithm, we observe that every set $A_i$ is the union of various (previous) sets $S_j$ where $j \geq i$. When these were initially pruned, they were chosen because they did not increase the source by too much. In the following, we show that this is a structural property independent of the current flow $\ff_i$. We first define bottleneck cuts, the structural property we aim to exploit. See also \Cref{fig:bottleneck} for an illustration of \Cref{def:bottleneck_cut}. 

\begin{definition}[$\gamma$-Bottleneck Cut] \label{def:bottleneck_cut}
    Let $G = (V, E)$, $A \subseteq V$, and $B \subseteq E$. We call $C$ a $\gamma$-bottleneck cut of $G$ at level $i$ with respect to $(A, B)$ if for all $A' \supseteq A$ and $B' \supseteq B$ we have 
    \begin{align*}
        \frac{8}{\phi}(|\partial_G(A' \cup C)| - |\partial_G(A')| - (\deg_{G[V \setminus A']} - \deg_{G[V \setminus A'] \setminus B'})(C \setminus A')) \leq \gamma - \frac{i}{\lambda} \vol_G(C \setminus A').
    \end{align*}
\end{definition}

\begin{figure}
    \centering
    \includegraphics[width=0.35\linewidth]{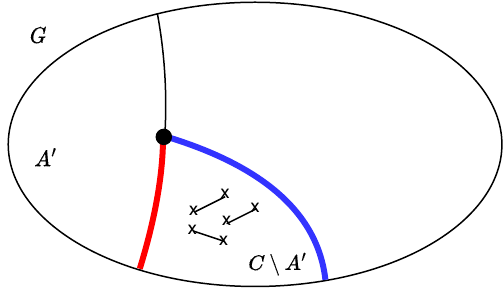}
    \caption{$C$ is a $\gamma$-bottleneck cut with respect to $(A, B)$ if $\frac{8}{\phi}(\text{\#red} - \text{\#blue}) + \frac{8}{\phi}(\deg_{G[V \setminus A']} - \deg_{G[V \setminus A'] \setminus B})(C \setminus A') + \frac{i}{\lambda}\vol_G(C \setminus A') \leq \gamma$ for every set $A' \supseteq A$.  Notice that $\frac{i}{\lambda}\vol_G(C \setminus A') = \tt(C \setminus A')$ and that $\frac{8}{\phi}(\deg_{G[V \setminus A']} - \deg_{G[V \setminus A'] \setminus B})(C \setminus A')$ corresponds to the amount of source added for endpoints of deleted edges in $B$ inside the set $C \setminus A'$. In the figure, the deleted edges are marked with $x$ at both endpoints. For each such endpoint, a deleted edge adds $\frac{8}{\phi}$ source.}
    \label{fig:bottleneck}
\end{figure}
We first observe that we can can add to the sets $A$ and $B$ and decrease the level while maintaining that a cut is a bottleneck cut. 

\begin{observation}[Relaxations] \label{clm:weaker}
    Let $G = (V, E)$, $A \subseteq V$, and $B \subseteq E$. Let $C$ a $\gamma$-bottleneck cut of $G$ at level $i$ with respect to $(A, B)$. Then, C is a $\gamma$-bootleneck cut of $G$ at level $j \leq i$ with respect to $(A', B')$ for all $A' \supseteq A$ and $B' \supseteq B$. 
\end{observation}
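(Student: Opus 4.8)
The plan is to prove the statement by reducing the case of general $(A', B')$ with $j \le i$ to the defining inequality for $(A, B)$ at level $i$, exploiting that every quantity in the bottleneck-cut inequality is monotone in the right direction. Recall that $C$ being a $\gamma$-bottleneck cut of $G$ at level $i$ with respect to $(A, B)$ means: for every $A'' \supseteq A$ and $B'' \supseteq B$,
\[
\tfrac{8}{\phi}\bigl(|\partial_G(A'' \cup C)| - |\partial_G(A'')| - (\deg_{G[V \setminus A'']} - \deg_{G[V \setminus A''] \setminus B''})(C \setminus A'')\bigr) \le \gamma - \tfrac{i}{\lambda}\vol_G(C \setminus A'').
\]
We want the same bound with $i$ replaced by $j \le i$ and with the quantifier ranging over $A''' \supseteq A'$ and $B''' \supseteq B'$.

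The key observation is that any $A''' \supseteq A'$ satisfies $A''' \supseteq A' \supseteq A$, and any $B''' \supseteq B'$ satisfies $B''' \supseteq B' \supseteq B$; so the defining inequality for $(A,B)$ at level $i$ already applies verbatim to the pair $(A''', B''')$. Thus for every such $A''', B'''$ we get
\[
\tfrac{8}{\phi}\bigl(|\partial_G(A''' \cup C)| - |\partial_G(A''')| - (\deg_{G[V \setminus A''']} - \deg_{G[V \setminus A'''] \setminus B'''})(C \setminus A''')\bigr) \le \gamma - \tfrac{i}{\lambda}\vol_G(C \setminus A''').
\]
It remains to observe that the right-hand side only increases when $i$ is replaced by $j$: since $j \le i$ and $\vol_G(C \setminus A''') \ge 0$, we have $\gamma - \tfrac{i}{\lambda}\vol_G(C \setminus A''') \le \gamma - \tfrac{j}{\lambda}\vol_G(C \setminus A''')$. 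Chaining the two inequalities gives exactly the level-$j$ bottleneck inequality for $(A', B')$, which is what we needed.

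There is essentially no obstacle here — the statement is a pure monotonicity bookkeeping argument. The only point requiring a moment's care is the nesting of quantifiers: one must check that the "for all $A', B'$" in the definition of a bottleneck cut is closed under passing to larger $A', B'$, which it trivially is, so no separate case analysis on $A'$ versus $A'''$ is needed. In particular we do not even use the precise form of $|\partial_G(\cdot)|$ or the source term; only that the level $i$ appears solely in the term $-\tfrac{i}{\lambda}\vol_G(C\setminus A''')$ with a non-negative coefficient multiplying it, and that the hypothesis is universally quantified over supersets.
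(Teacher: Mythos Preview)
Your proof is correct and is precisely the unwinding of the paper's one-line justification (``Directly follows from the definition of bottleneck cuts''): the universal quantifier over supersets absorbs the passage from $(A,B)$ to $(A',B')$, and replacing $i$ by $j\le i$ only weakens the right-hand side since $\vol_G(C\setminus A''')\ge 0$.
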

\begin{proof}
    Directly follows from the definition of bottleneck cuts. 
\end{proof}

Next, we show that bottleneck cuts compose nicely. 

\begin{claim}[Composition] \label{clm:comp}
    Assume that $C_1$ is a $\gamma_1$-bottleneck cut of a graph $G$ with respect to $(A, B)$ at level $i$, and that $C_2$ is a $\gamma_2$-bottleneck cut of a graph $G$ with respect to $(A \cup C_1, B)$ at level $i$. Then $C_1 \cup C_2$ is a $(\gamma_1 + \gamma_2)$-bottleneck cut with respect to $(A, B)$ at level $i$. 
\end{claim}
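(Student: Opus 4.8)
The plan is to unfold the definition of a $\gamma$-bottleneck cut for $C_1 \cup C_2$ and reduce the required inequality to the sum of the two hypotheses by a telescoping argument on the boundary terms. Fix arbitrary $A' \supseteq A$ and $B' \supseteq B$; we must bound
\[
\tfrac{8}{\phi}\bigl(|\partial_G(A' \cup C_1 \cup C_2)| - |\partial_G(A')| - (\deg_{G[V \setminus A']} - \deg_{G[V\setminus A'] \setminus B'})((C_1\cup C_2)\setminus A')\bigr)
\]
from above by $(\gamma_1 + \gamma_2) - \tfrac{i}{\lambda}\vol_G((C_1\cup C_2)\setminus A')$. The natural move is to insert the intermediate set $A' \cup C_1$: write $|\partial_G(A'\cup C_1 \cup C_2)| - |\partial_G(A')| = \bigl(|\partial_G(A'\cup C_1)| - |\partial_G(A')|\bigr) + \bigl(|\partial_G(A'\cup C_1\cup C_2)| - |\partial_G(A'\cup C_1)|\bigr)$, so the first parenthesized difference is exactly the boundary term appearing when we apply the $C_1$-hypothesis with the set $A'$, and the second is the boundary term when we apply the $C_2$-hypothesis with the set $A'' := A' \cup C_1$ (which indeed satisfies $A'' \supseteq A \cup C_1$, as required by the second hypothesis).

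Next I would split the deleted-edge source term and the sink term the same way. For the source term, $(C_1 \cup C_2)\setminus A' = (C_1 \setminus A') \,\dot\cup\, (C_2 \setminus (A'\cup C_1))$ is a disjoint union, and since $\deg_{G[V\setminus A']} - \deg_{G[V\setminus A']\setminus B'}$ evaluated on a vertex set is additive over disjoint sets, this contribution splits as the $C_1$-piece (measured against $A'$) plus the $C_2$-piece (measured against $A'' = A' \cup C_1$); here one has to check that $\deg_{G[V\setminus A']}$ restricted to vertices of $C_2 \setminus A''$ agrees with $\deg_{G[V\setminus A'']}$ on those vertices — it does, because removing the extra vertices $C_1$ from the graph only removes edges incident to $C_1$, none of which is counted in the degree of a vertex outside $A''$ toward the quantity in question once we also remove $B'$; more carefully, the relevant quantity $(\deg_{G[V\setminus A']} - \deg_{G[V\setminus A']\setminus B'})(S)$ counts, for $S$ disjoint from $A'$, the number of $B'$-edge-endpoints lying in $S$, which does not depend on whether we view the ambient graph as $G[V\setminus A']$ or $G[V\setminus A'']$. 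Similarly $\vol_G((C_1\cup C_2)\setminus A') = \vol_G(C_1\setminus A') + \vol_G(C_2 \setminus A'')$ by disjointness, and this splits the $-\tfrac{i}{\lambda}\vol_G(\cdot)$ term on the right-hand side correctly. Adding the two instances of the bottleneck inequality (the $C_1$-instance at $(A',B')$ and the $C_2$-instance at $(A'',B')$) then yields exactly the desired bound with $\gamma_1 + \gamma_2$.

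The one genuine obstacle is the bookkeeping around the deleted-edge / degree-difference term: one must verify that $(\deg_{G[V\setminus A']} - \deg_{G[V\setminus A']\setminus B'})(C_2\setminus A'')$ — the quantity that the second hypothesis controls, phrased in terms of the graph $G[V\setminus A'']$ — equals the corresponding piece of the term for $C_1\cup C_2$ phrased in terms of $G[V\setminus A']$. The clean way to dispatch this is to first observe that for any $A'$ and any vertex set $S$ with $S \cap A' = \emptyset$, the quantity $(\deg_{G[V\setminus A']} - \deg_{G[V\setminus A']\setminus B'})(S)$ is simply the number of endpoints of edges of $B'$ (with multiplicity over edges) that lie in $S$ and whose other endpoint also lies in $V\setminus A'$... and then note that in our telescoping both applications only ever remove the source for edges with at least one endpoint in the currently-pruned region, so the bottleneck-cut inequality as stated (which quantifies over \emph{all} $B' \supseteq B$) already absorbs any discrepancy. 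I would state this reconciliation as a short sub-observation and then the rest is a one-line addition. I expect the whole proof to be under half a page.
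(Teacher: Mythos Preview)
Your approach is essentially identical to the paper's: telescope the boundary term through the intermediate set $A'' = A' \cup C_1$, split the degree-difference and volume terms over the disjoint decomposition $(C_1\cup C_2)\setminus A' = (C_1\setminus A') \,\dot\cup\, (C_2\setminus A'')$, and add the two bottleneck inequalities.

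One small correction to your bookkeeping: the two degree-difference terms you want to identify are \emph{not} equal in general. If some edge of $B'$ has one endpoint in $C_2\setminus A''$ and the other in $C_1\setminus A'$, it contributes to $(\deg_{G[V\setminus A']}-\deg_{G[V\setminus A']\setminus B'})(C_2\setminus A'')$ but not to $(\deg_{G[V\setminus A'']}-\deg_{G[V\setminus A'']\setminus B'})(C_2\setminus A'')$. What you actually need (and what holds) is the inequality
\[
(\deg_{G[V\setminus A']}-\deg_{G[V\setminus A']\setminus B'})(C_2\setminus A'') \;\ge\; (\deg_{G[V\setminus A'']}-\deg_{G[V\setminus A'']\setminus B'})(C_2\setminus A''),
\]
since shrinking the ambient vertex set can only remove counted $B'$-endpoints. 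Because this term enters with a minus sign in the bottleneck definition, the inequality goes the right direction and the $C_2$-hypothesis still applies. The paper glosses over this point entirely, so your instinct to flag it is good; just fix the resolution to ``$\ge$'' rather than ``$=$'' and drop the digression about quantifying over $B'$, which is not what makes it work.
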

\begin{proof}
    For $A' \supseteq A$, we have 
    \begin{align} \label{eq:decomp_boundary}
        |\partial_G(A' \cup C_1 \cup C_2)|  - |\partial_G(A')| = |\partial_G(A' \cup C_1 \cup C_2)| - |\partial_G(A' \cup C_1)| + |\partial_G(A' \cup C_1)| - |\partial_G(A')|.
    \end{align}
    We also have 
    \begin{align} \label{eq:decomp_bsource}
        - (\deg_{G[V \setminus A']} - \deg_{G[V \setminus A'] \setminus B'})((C_1 \cup C_2) \setminus A')) = &- (\deg_{G[V \setminus A']} - \deg_{G[V \setminus A'] \setminus B'})(C_1 \setminus A')) \\ & - (\deg_{G[V \setminus A']} - \deg_{G[V \setminus A'] \setminus B'})(C_2 \setminus (A'\cup C_1))). 
    \end{align}
    To bound 
    \begin{align*}
        |\partial_G(A' \cup C_1 \cup C_2)|  - |\partial_G(A')| - (\deg_{G[V \setminus A']} - \deg_{G[V \setminus A'] \setminus B'})((C_1 \cup C_2) \setminus A'))
    \end{align*}
    it then suffices to bound 
    \begin{align} \label{eq:first_prune}
        |\partial_G(A' \cup C_1 \cup C_2)| - |\partial_G(A' \cup C_1)| - (\deg_{G[V \setminus A']} - \deg_{G[V \setminus A'] \setminus B'})(C_2 \setminus (A'\cup C_1)))
    \end{align}
    and
    \begin{align} \label{eq:second_prune}
        |\partial_G(A' \cup C_1)| - |\partial_G(A')| - (\deg_{G[V \setminus A']} - \deg_{G[V \setminus A'] \setminus B'})(C_1 \setminus (A' \cup C_1)) 
    \end{align}
    separately by \eqref{eq:decomp_boundary} and \eqref{eq:decomp_bsource}. By our assumption \eqref{eq:first_prune} is bounded by $\gamma_2 - \frac{i}{k}\vol_G(C_2 \setminus A')$ and \eqref{eq:second_prune} is bounded by $\gamma_1 - \frac{i}{\lambda}\vol_G(C_1 \setminus A')$. We then conclude that $\eqref{eq:first_prune} + \eqref{eq:second_prune}$ is bounded by $\gamma_1 + \gamma_2 - \frac{i}{\lambda}\vol_G(C_1 \cup C_2 \setminus A')$. This concludes the proof of the claim. 
\end{proof}

In the following, we show that pruning a $\gamma$-bottleneck cut does not increase excess by more than $\gamma$. This will be sufficient for bounding the extra pruning due to the sets $A_i$. 

\begin{claim}[Excess Preservation] \label{clm:excess_preservation}
    Consider the $i$-th iteration of the \Cref{alg:batch_pruning}. Assume that after replacing $A_i$ with $\emptyset$, the excess after \Cref{alg:batch_pruning:graph} of \Cref{alg:batch_pruning} is at most $\xi$. Furthermore, assume that $A_i$ is a $\gamma$-bottleneck cut with respect to $\left(\bigcup_{j < i} A_j \cup S_j, \bigcup_{j \leq i} B_j\right)$. Then, the excess after \Cref{alg:batch_pruning:graph} of \Cref{alg:batch_pruning} of batch pruning is at most $\xi + \gamma$.
\end{claim}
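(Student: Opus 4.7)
I will compare the actual iteration $i$ (``scenario $A$'', pruning $A_i$) with the hypothetical variant where $A_i$ is replaced by $\emptyset$ (``scenario $\emptyset$''), tracking the associated excess totals $\xi_A$ and $\xi_\emptyset \le \xi$. Write $A := A_i$, $A_{\mathrm{prev}} := \bigcup_{j < i}(A_j \cup S_j)$, $B' := \bigcup_{j \le i} B_j$, $a := |E_G(A, \widehat V \setminus A)|$, $a' := |E_G(A_{\mathrm{prev}}, A)|$, $b_1 := |B'(A, A)|$, $b_2 := |B'(A, \widehat V \setminus A)|$, and let $\xx_A, \xx_\emptyset$ denote the pointwise excess vectors.

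\emph{Pointwise comparison.} Unrolling the source/flow updates (lines \ref{alg:batch_pruning:source_start}--\ref{alg:batch_pruning:graph}), for every $v \in \widehat V \setminus A$,
\[
\Delta(v) := (\ss^A - \tt^A - \BB\ff^A)(v) - (\ss^\emptyset - \tt^\emptyset - \BB\ff^\emptyset)(v) = \tfrac{8}{\phi}\,|E_{\widehat G \setminus B_i}(v, A)| + (\text{net flow $v \to A$ in } \ff_{i-1}),
\]
the first summand being the new source added at boundary edges of $A$ and the second the flow previously absorbed through $A$ that is now severed. Since each boundary edge has capacity $\tfrac{8}{\phi}$, the second term is at least $-\tfrac{8}{\phi}|E_{\widehat G\setminus B_i}(v,A)|$, so $\Delta(v) \ge 0$. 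The elementary inequality $(y+\delta)_+ - (y)_+ \le \delta$ for $\delta \ge 0$ then yields $\xi_A \le \sum_{v \notin A} \xx_\emptyset(v) + \sum_{v \notin A} \Delta(v)$.

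\emph{Global bookkeeping.} By flow conservation across the cut $(A, \widehat V \setminus A)$, $\sum_{v \notin A}(\text{net flow $v \to A$}) = -(\BB\ff^\emptyset)(A)$, giving $\sum_{v \notin A}\Delta(v) = \tfrac{8}{\phi}(a - b_2) - (\BB\ff^\emptyset)(A)$. Summing $\xx_\emptyset(v) \ge (\ss^\emptyset - \tt^\emptyset - \BB\ff^\emptyset)(v)$ over $v \in A$ yields $-(\BB\ff^\emptyset)(A) \le \sum_{v \in A}\xx_\emptyset(v) + \tt^\emptyset(A) - \ss^\emptyset(A)$. Two algorithm-invariants close the loop: (i) each $v \in A$ has belonged to $\widehat V$ throughout iterations $1,\ldots,i-1$, so the sink augmentations accumulate to $\tt^\emptyset(A) = \tfrac{i-1}{\lambda}\vol_G(A)$; and (ii) each edge of $E_G(A_{\mathrm{prev}}, A)$ deposits $\tfrac{8}{\phi}$ on its $A$-endpoint at the iteration when its other endpoint is pruned (or at its deletion, whichever comes first), each edge of $B'(A, A)$ deposits $\tfrac{8}{\phi}$ per endpoint, and each edge of $B'(A, \widehat V \setminus A)$ deposits $\tfrac{8}{\phi}$ on its $A$-endpoint; since $\ss$-values are never decreased at persisting vertices, these contributions accumulate to $\ss^\emptyset(A) \ge \tfrac{8}{\phi}(a' + 2b_1 + b_2)$.

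\emph{Applying the bottleneck property and concluding.} Instantiating \Cref{def:bottleneck_cut} at $A' := A_{\mathrm{prev}}$ and $C := A$ gives $\tfrac{8}{\phi}(a - a' - 2b_1 - b_2) \le \gamma - \tfrac{i}{\lambda}\vol_G(A)$. Chaining with the preceding display,
\[
\sum_{v \notin A}\Delta(v) - \sum_{v \in A}\xx_\emptyset(v) \;\le\; \tfrac{8}{\phi}(a - b_2) + \tt^\emptyset(A) - \ss^\emptyset(A) \;\le\; \gamma - \tfrac{1}{\lambda}\vol_G(A) - \tfrac{8}{\phi}b_2 \;\le\; \gamma,
\]
which combined with the pointwise comparison gives $\xi_A \le \sum_v \xx_\emptyset(v) + \gamma \le \xi + \gamma$. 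The main obstacle is invariant (ii): one has to verify, by induction on iterations with a careful case analysis on whether an edge of $E_G(A_{\mathrm{prev}},A)$ is deleted before its $A_{\mathrm{prev}}$-endpoint is pruned, that every past pruning/deletion incident to $A$ deposits the prescribed $\tfrac{8}{\phi}$ of source on the correct vertex of $A$ and that these values are preserved by all subsequent operations (they are, since $\ss$ is only restricted to surviving vertices). Once this bookkeeping is in place, the displayed three-line chain delivers the bound.
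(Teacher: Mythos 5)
Your proof is correct and follows essentially the same route as the paper's: both compare the iteration with $A_i$ pruned against the hypothetical with $A_i = \emptyset$, use flow conservation across the cut $(A_i, \widehat V \setminus A_i)$, account for the accumulated source and sink mass on $A_i$, and close by instantiating \Cref{def:bottleneck_cut} at $A' = \bigcup_{j<i}(A_j \cup S_j)$ and $C = A_i$. The only difference is presentational — you render the paper's aggregated ``removed excess / added excess'' bookkeeping as an explicit pointwise comparison $\Delta(v)$ plus the $(y+\delta)_+ - (y)_+ \le \delta$ step, and your careful treatment of the sink term $\tfrac{i-1}{\lambda}\vol_G(A)$ even produces a small extra slack $-\tfrac{1}{\lambda}\vol_G(A) - \tfrac{8}{\phi}b_2$ that the paper's terser computation does not surface — but the decomposition and the key lemma are the same.
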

\begin{proof}
Let $\hat{V}_i^{\text{(before)}} = V \setminus \left(\bigcup_{j < i} A_j \cup S_j\right)$ denote the vertex set of the graph at the start of the $i$-th iteration, $A_i' = A_i \cap V_i^{\text{(before)}}$ and $\hat{V}_i = \hat{V}_i^{\text{(before)}} \setminus A_i$. 

We start by considering the amount of excess that is currently placed on $A_i'$ (and hence removed when we take away $A_i$ from the graph). Let $\eta$ denote the total amount of flow routed across the cut from $A'_i$ to $\hat{V}_i^{\text{(before)}}$, i.e. $\eta = \sum_{(u,v) \in \partial_{G[\hat{V}_i^{\text{(before)}}]}(A'_i)} \ff_{i - 1}(u, v)$. Then, the excess inside $A'_i$ is exactly 
\[
\ss_i(A_i')  - \frac{i}{\lambda} \vol_G(A_i') - \eta
\]
since $\ff_{i-1}$ is supported on $\hat{V}^{(before)}$ and thus the excess on $A_i'$ is the amount of flow that originates from $A_i'$, i.e. $\ss_i(A_i')$, minus the amount of flow absorbed on $A_i'$ or routed away.

On the other hand, the amount of new excess that is produced by inducing on the complement of $A_i'$ can be upper bounded by 
\[
\frac{8}{\phi}|\partial_{G[\hat{V}_i^{\text{(before)}}]}(A'_i)| - \eta
\]
because every edge in the cut can contribute at most $8/\phi$ units of source and inducing the flow means that the flow previously routed into $A_i'$ is now no longer routed and thus considered excess.

Thus, the net change in excess can be upper bounded by
\begin{align*}
    \xi - \underbrace{\left(\ss_i(A'_i) - \eta - \frac{i}{\lambda}\vol_G(A'_i)\right)}_{\geq \text{ excess inside $A'_i$}} &+  \underbrace{\frac{8}{\phi}|\partial_{G[\hat{V}_i^{\text{(before)}}]}(A'_i)| - \eta}_{= \text{ additional excess}} \\ &= \xi + \frac{8}{\phi}|\partial_{G[\hat{V}_i^{\text{(before)}}]}(A'_i) - \ss_i(A'_i) - \frac{i}{\lambda}\vol_G(A'_i).
\end{align*}
It remains to observe that for $C = A_i'$ and $A' = V \setminus \hat{V}_i^{\text{(before)}} $, we have 
    \begin{enumerate}
        \item $|\partial_{G[\hat{V}_i^{\text{(before)}}]}(A'_i)| - |\partial_{G[A_i]}(A'_i)| = |\partial_G(A' \cup C)| - |\partial_G(A')|$: this follows by investigating the cuts carefully.
        \item $\ss_{i}(A'_i) \geq \frac{8}{\phi}((\deg_{G[V \setminus A']} - \deg_{G[V \setminus A'] \setminus B'})(C \setminus A') + |\partial_{G[A_i]}(A'_i)|)$: it can be seen from the algorithm that $\ss_{i}(A'_i) = \frac{8}{\phi}((\deg_{G} - \deg_{G[V \setminus A'] \setminus B'})(C \setminus A')$ and $\deg_G(C \setminus A')$ can be seen to be the sum of degree within the subgraph induced on $G \setminus A'$ and all the edges incident to $C \setminus A'$ leaving to $A'$. 
    \end{enumerate}
It follows that the total excess can be upper bounded by 
\begin{align*}
\xi &+ \frac{8}{\phi}|\partial_{G[\hat{V}_i^{\text{(before)}}]}(A'_i) - \ss_i(A'_i) - \frac{i}{\lambda}\vol_G(A'_i)\\
&\leq \xi + |\partial_G(A' \cup C)| - |\partial_G(A')| - (\frac{8}{\phi}((\deg_{G[V \setminus A']} - \deg_{G[V \setminus A'] \setminus B'})(C \setminus A')) - \frac{i}{\lambda}\vol_G(A'_i)\\
&\leq \xi + \gamma
\end{align*}
where the last inequality stems from the definition of bottleneck cuts in \Cref{def:bottleneck_cut} and the assumptions of the claim.  
\end{proof}

Given \Cref{clm:excess_preservation}, the main remaining difficulty lies in proving that all $A_i$ are $\gamma$-bottleneck cuts with respect to $\left(\bigcup_{j < i} A_j \cup S_j, \bigcup_{j < i} B_j\right)$ at level $i$. 

\begin{claim}
    Assume that the total excess after \Cref{alg:batch_pruning:graph} of \Cref{alg:batch_pruning} in iteration $i$ is at most $\norm{\xx_{\ss_{i}, \tt_{i}, \ff_{i}}}_1 \leq 100 \cdot 2^{k - i}$, then we have that $S_i$ is a $\gamma_i$-bottleneck cut at level $i$ with respect to $\left(A_i \cup \bigcup_{j < i} A_j \cup S_j, \bigcup_{j \leq i} B_i\right)$ where $\gamma_i = 2^{k - i}/\phi$. 
\end{claim}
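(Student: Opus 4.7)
The plan is to verify the bottleneck cut inequality by combining an algebraic reduction with flow conservation for $\ff_i$ on the graph $\widehat{G}^{(i)}$ obtained at the start of iteration $i$ (after \Cref{alg:batch_pruning:graph}). Writing $C' = S_i \setminus A'$ and $V_3 = V \setminus (A' \cup S_i)$, a direct expansion of $|\partial_G(A' \cup S_i)| - |\partial_G(A')|$ together with the degree-drop term $T := (\deg_{G[V\setminus A']} - \deg_{G[V\setminus A']\setminus B'})(C')$ rewrites the target inequality $\Psi \leq \gamma_i$ equivalently as
\[
\tfrac{8}{\phi}|E_G(C', V_3)| + \tfrac{i}{\lambda}\vol_G(C') \leq \gamma_i + \ss^{(A', B')}(C'),
\]
where $\ss^{(A', B')}(v) := (8/\phi)(\deg_G(v) - \deg_{G[V\setminus A']\setminus B'}(v))$ is the source that would be placed on $v$ in the hypothetical $(A', B')$-setup.

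Since $A' \supseteq A_i \cup \bigcup_{j<i}(A_j \cup S_j)$ and $B' \supseteq \bigcup_{j\le i} B_j$, we have $G[V \setminus A'] \setminus B' \subseteq \widehat{G}^{(i)}$; a pointwise degree comparison then gives $\ss^{(A', B')}(v) \geq \ss_i(v)$ for every $v \in V \setminus A'$, so it suffices to prove the above inequality with $\ss^{(A', B')}(C')$ replaced by $\ss_i(C')$. Decomposing $|E_G(C', V_3)| = |E_{\widehat{G}^{(i)}}(C', V_3)| + |B \cap E(C', V_3)|$ (no other kinds of $G$-edges are possible between $C'$ and $V_3$, since both endpoint-sets lie in $V \setminus A$), the $B$-cut contribution cancels exactly against the matching $B$-incidence piece of $\ss_i(C')$. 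The remaining boundary term satisfies $|E_{\widehat{G}^{(i)}}(C', V_3)| \leq |E_{\widehat{G}^{(i)}}(S_i, \widehat{V}^{(i)} \setminus S_i)|$ since $C' \subseteq S_i$ and $V_3 \subseteq \widehat{V}^{(i)} \setminus S_i$, reducing the problem to controlling the $\widehat{G}^{(i)}$-boundary of $S_i$.

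For that, I would apply flow conservation to $\ff_i$ on $S_i$. By \Cref{fct:dinitz} combined with \Cref{clm:while_terminates}, every sink inside $S_i$ is saturated, so $(\BB \ff_i)(S_i) = \ss_i(S_i) - \tt_i(S_i) - \xx(S_i)$. A saturated-vs-residual classification of the $\widehat{G}^{(i)}$-edges leaving $S_i$ yields the capacity bridge
\[
\tfrac{8}{\phi}|E_{\widehat{G}^{(i)}}(S_i, \widehat{V}^{(i)} \setminus S_i)| \leq (\BB \ff_i)(S_i) + \tfrac{16}{\phi}|\widehat{E}_{\ff_i}(S_i, \widehat{V}^{(i)} \setminus S_i)|,
\]
since each residual-outward edge can contribute at most $8/\phi$ units of back-flow to the net-outflow deficit. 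Chaining this with the reduction above and using $\tt_i(S_i) = (i/\lambda)\vol_G(S_i)$, one sees that the $\ss_i(S_i)/\tt_i(S_i)$ contributions telescope, with the over-pruning slack $(8/\phi)|E_G(A, S_i \cap A')|$ and $(16/\phi)|B \cap E(\cdot,\cdot)|$ absorbing the difference between $\ss_i(S_i)$ and $\ss_i(C')$ introduced by passing from $A$ to $A'$.

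Finally, the if-clause at \Cref{alg:batch_pruning:if} must have triggered (otherwise $S_i = \emptyset$ and the claim is trivial), so $\supp(\xx_{\ss_i, \tt_i, \ff_i}) \subseteq S_i$ by the construction of $S_{\leq 0}$, and hence $\xx(S_i) = \|\xx\|_1 > 2^{k-i}/\phi$. The while-loop termination condition together with the volume bound $\vol_G(S_i) = \tilde O(\lambda \cdot 2^{k-i}/\phi)$ (from the argument of \Cref{clm:source_reduction}) forces $(16/\phi)|\widehat{E}_{\ff_i}(S_i, \widehat{V}^{(i)} \setminus S_i)|$ to be of lower order than $2^{k-i}/\phi$, giving $\Psi \leq \gamma_i = 2^{k-i}/\phi$. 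The main obstacle is the algebraic bookkeeping in the second step: $B$-edges incident to $C'$ and edges touching $A' \setminus A$ contribute in subtle ways to both sides of the inequality, and only after careful cancellation does the flow-certified boundary quantity on $S_i$ emerge clean.
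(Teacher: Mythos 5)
Your proof takes a genuinely different route from the paper's, and it has a gap precisely at the step you flag as ``the main obstacle.'' The paper's argument has two phases: (1) establish the bottleneck inequality for the \emph{base} pair $(A',B') = \bigl(A_i \cup \bigcup_{j<i}(A_j\cup S_j),\ \bigcup_{j\le i}B_j\bigr)$, using the while-loop termination and the resulting excess bound; and (2) argue monotonicity over the enlargement $A'\supseteq A$, $B'\supseteq B$, i.e.\ that each extra deletion removes at most $8/\phi$ of routed flow while injecting at least $8/\phi$ additional source inside $S_i$, so the bottleneck quantity does not increase. Your proposal collapses these two phases into one chain of estimates, and this is where things go wrong.

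The concrete problem is the combination of your Steps 2 and 4. In Step 2 you replace $\ss^{(A',B')}(C')$ by the pointwise-smaller $\ss_i(C')$; the discarded quantity $\ss^{(A',B')}(v)-\ss_i(v) = \tfrac{8}{\phi}\bigl(\deg_{\widehat G^{(i)}}(v) - \deg_{G[V\setminus A']\setminus B'}(v)\bigr)$ is exactly the extra source a vertex $v\in C'$ acquires from its neighbors in $A'\setminus A$ and from edges in $B'\setminus B$ — and that source is precisely what should compensate the boundary loss when $A'$ grows. In Step 4 you further relax $|E_{\widehat G^{(i)}}(C', V_3)|$ to $|E_{\widehat G^{(i)}}(S_i, \widehat V^{(i)}\setminus S_i)|$, which over-counts when $C'\subsetneq S_i$. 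After these two relaxations, the right-hand side no longer tracks $A'$, so flow conservation on $S_i$ (Steps 5--6) yields a bound that depends only on the base case and cannot absorb the contribution of the enlarged $A'$. Your Step 7 appeals to an ``over-pruning slack $(8/\phi)|E_G(A, S_i\cap A')|$,'' but that term was exactly what Step 2 threw away, so it is not available to you. This is not a bookkeeping detail: the flow certificate $(\ff_i, R)$ constrains only the boundary of $S_i$, not arbitrary sub-cuts $C'=S_i\setminus A'$, so no amount of chaining recovers the missing monotonicity. The repair is to separate the two phases as the paper does: verify the bottleneck inequality for the base pair $(A,B)$ using $\ff_i$, then show $\Psi(A',B')\le\Psi(A,B)$ for all $A'\supseteq A$, $B'\supseteq B$ by a local exchange argument. (Separately, note that the claim's stated hypothesis $\norm{\xx_{\ss_i,\tt_i,\ff_i}}_1\le 100\cdot 2^{k-i}$ is inconsistent with the if-guard $\norm{\xx}_1 > 2^{k-i}/\phi$ that you invoke in Step 7 — almost certainly a typo in the paper for $\norm{\xx_{\ss_i,\tt_i,\ff_{i-1}}}_1$ with a $1/\phi$ factor, as in \Cref{clm:source_reduction}.)
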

\begin{proof}
Assume that $S_i \neq \emptyset$ is not empty. Otherwise, we are done. 
    
The while loop terminated by \Cref{clm:while_terminates}. Therefore, the flow $\ff_i$ routed all but $2^{k - i}/\phi$ of the source inside $S_i$. But now consider a larger pruned set or more deletions. Every extra deletion can cause at most $8/\phi$ flow to no-longer be routed out, but also introduces at least $8/\phi$ extra source flow inside $S_i$. The claim follows. 
\end{proof}

\begin{invariant} \label{lem:main_invariant}
    The following invariants hold after processing any update.  
    \begin{enumerate}
        \item For all $i \in [\lambda]$, $A_i$ is a $12 \gamma_i \log_2 n$-bottleneck cut with respect to $\left(\bigcup_{j < i} A_j \cup S_j, \bigcup_{j < i} B_j\right)$ at level $i$ where $\gamma_{i} = 2^{k - 1}/\phi$. \label{inv:item1}
        \item The total excess after \Cref{alg:batch_pruning:graph} of \Cref{alg:batch_pruning} is at most $\norm{\xx_{\ss_{i}, \tt_{i}, \ff_{i - 1}}}_1 \leq (64 \log_2 n) 100 \cdot 2^{k - i} / \phi$, $A_i = \emptyset$, and that $\norm{\ss_{i}}_1 \geq \norm{\tt_{i}}_1$. \label{inv:item2}
        \item For all $i \in [\lambda]$, $\vol_G(A_i \cup S_i) \leq 10^7 \log_2^3(n) \cdot 2^{k - i}/\phi$. 
        \label{inv:item3}
    \end{enumerate}
\end{invariant}
\begin{proof}
    Before we prove the invariant, we strengthen \Cref{inv:item1} to simplify the proof. In the following, we prove that:
    \begin{itemize}
        \item If $B_i$ is full, then $A_i$ is a $4(\lambda - i + 1) \gamma_i$-bottleneck cut at level $i$. 
        \item If $B_i$ is half-full or empty, then $A_i$ is a $2(\lambda - i + 1) \gamma_i$-bottleneck cut at level $i$.
    \end{itemize}
    We first observe that whenever $B_i$ is empty, so is $A_i$ by the description of our batching scheme because then all sets $B_{i'}$ (and $A_{i'}$ by induction) with $i' \leq i$ are empty as well. Therefore the invariant follows directly. Furthermore, we observe that whenever layer $i$ is affected by an update and $B_i$ was full before the update, $B_i$ is empty after the update. Thus, \Cref{inv:item1} is satsified after initialization. 
    \begin{itemize}
        \item \underline{\Cref{inv:item1} implies \Cref{inv:item2}:} We first show that if \Cref{inv:item1} holds after an update, then \Cref{inv:item2} follows by induction. For level $1$, we have that the initial amount of source is at most $\frac{8}{\phi} 2^{k - 1}$, and therefore the invariant is true at level $1$ by \Cref{clm:excess_preservation}. Then, at layer $i$, we have that the excess from the previous layer is at most by $2^{k - i + 1}/\phi$. The new excess due to $B_i$ is at most $\frac{8}{\phi} 2^{k - i}$. Therefore, the invariant at level $i$ follows from \Cref{clm:excess_preservation} and \Cref{inv:item2} since $\lambda < 8 \log_2 n$
        \item \underline{\Cref{inv:item2} implies \Cref{inv:item1} after update:} We now show  that \Cref{inv:item2} holds before an update implies that \Cref{inv:item1} holds after updating the sets $A_i$. This then, implies that \Cref{inv:item2} holds after re-running the algorithm from index $i$ by the previous item. 

        We refer to some object $X$ before and after the update as $X^{\text{(before)}}$ and $X^{\text{(after)}}$ respectively. Then, we have $A_i^{\text{(after)}} = A_i^{\text{(before)}}  \cup S_i^{\text{(before)}} \cup A_{i + 1}^{\text{(before)}} \cup S_{i + 1}^{\text{(before)}}$. Firstly, $A_i^{\text{(before)}} \cup S_{i + 1}^{\text{(before)}}$ is a 
        \begin{align*}
            2(\lambda - i + 1)\gamma_i + \gamma_i 
        \end{align*} bottleneck cut at level $i$ with respect to $(\bigcup_{j < i} A_j \cup S_j, \bigcup_{j \leq i} B_j)$ by the invariant, \Cref{clm:weaker}, and \Cref{clm:comp}. Then, $A_i^{\text{(before)}}  \cup S_i^{\text{(before)}} \cup A_{i + 1}^{\text{(before)}}$ is a 
        \begin{align*}
            4(\lambda - i + 1)\gamma_i - \gamma_i 
        \end{align*} bottleneck cut at level $i$ with respect to $(\bigcup_{j < i} A_j \cup S_j, \bigcup_{j \leq i} B_j)$, again by the invariant, \Cref{clm:weaker}, and \Cref{clm:comp}. Finally, another application of these claims yields that 
        $A_i^{\text{(before)}}$ is a $4(\lambda - i + 1)\gamma_i$-bottleneck cut as desired. 

        Finally, for the levels $i' > i$, we have that $A_{i'}^{\text{(after)}} = A_{i' + 1}^{\text{(before)}} + S_{i' + 1}^{\text{(before)}}$ is a $2(\lambda - i' + 1)\gamma_{i'}$-bottleneck cut  for $(\bigcup_{j < i'} A_j \cup S_j, \bigcup_{j \leq i} B_j)$ at level $i'$, again by the invariant, \Cref{clm:weaker}, and \Cref{clm:comp}.
        \item \underline{Proof of \Cref{inv:item3}:} Given \Cref{inv:item1} and \Cref{inv:item2}, we now prove \Cref{inv:item3}. We have $\vol_G(S_i) \leq (64 \log_2 n) 100 \cdot \lambda \cdot 2^{k - i}/\phi \leq 10^5 \log^2(n) \cdot 2^{k - i}/\phi$ throughout by \Cref{clm:source_reduction} and $\lambda \leq 8 \log_2 n$. We then strengthen the invariant to $\vol_G(A_i) \leq 2(\lambda - i + 1) \leq 10^5 \log^2(n) \cdot 2^{k - i}/\phi$ when $B_i$ is half-full or empty and $\vol_G(A_i) \leq 4(\lambda - i + 1) \leq 10^5 \log^2(n) \cdot 2^{k - i}/\phi$ when $B_i$ is full. The invariant intially holds because all sets are empty. Then, whenever a rebuild at layer $i$ happens, it is full after and the volume is bounded by the sum of 
        \begin{align*}
            \vol_G(A_i^{\text{(before)}}) + \vol_G(A_{i + 1}^{\text{(before)}}) \leq 4(\lambda - i + 1) \leq 10^5 \log^2(n) \cdot 2^{k - i}/\phi - 2 \cdot 10^5 \log^2(n) \cdot 2^{k - i}/\phi
        \end{align*}
        and
        \begin{align*}
            vol_G(S_i^{\text{(before)}}) + \vol_G(S_{i + 1}^{\text{(before)}}) \leq 2 \cdot 10^5 \log^2(n) \cdot 2^{k - i}/\phi. 
        \end{align*}
        For the layers $j > i$, we have that
        \begin{align*}
            \vol_G(A_{i + 1}^{\text{(before)}}) + \vol_G(S_{i + 1}^{\text{(before)}}) \leq 4(\lambda - j + 1) \leq 10^5 \log^2(n) \cdot 2^{k - i}/\phi
        \end{align*}
        as desired. The item follows directly from the definition of the new sets $A_j^{(after)}$. 
    \end{itemize}   
    Finally, we remark that $\norm{\ss_{i}}_1 \geq \norm{\tt_{i}}_1$ throughout because the total amount of source at layer $i$ is at most $(24 \log_2 n) 100 \cdot 2^{k - i} / \phi$. But because our pruning algorithm ensures stops whenever $2^{k - 1}/(10^7 \log_2^4 n)$ edges are deleted, and a layer only contributes source if it is half-full or full, this is at most $2^{k - 1}/(10 \phi \log_2^3 n)$. Therefore, this condition is fulfilled throughout. 
\end{proof}

Next, we show that the amortized runtime is $\tilde{O}(\phi^{-1})$

\begin{claim} \label{clm:runtime}
    The algorithm processes $D$ deletions in total time $\tilde{O}(D/\phi)$ and rebuilding layer $i$ takes time $\tilde{O}(2^{k - i}/\phi)$.
\end{claim}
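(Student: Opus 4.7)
The plan is to charge each rebuild to its batch level and then amortize via the rebuild frequency from \Cref{clm:rebuild_timing}. First I would bound the cost of a single rebuild triggered at layer $i$ by summing the work performed in the iterations $j = i, i{+}1, \ldots, \lambda$ of the main for-loop of \Cref{alg:batch_pruning}. The dominant cost in iteration $j$ is the Dinitz call at \Cref{alg:batch_pruning:f} with depth parameter $h = \tilde{O}(1/\phi)$. Because \Cref{alg:batch_pruning} explicitly adds $\frac{1}{\lambda}\deg_G$ sink capacity at every vertex, the precondition $\tt \geq \deg_G/\lambda$ of \Cref{fct:dinitz} holds with $\lambda = O(\log m)$, so a single Dinitz call runs in time $O(\lambda \cdot h \cdot \norm{\ss_j}_1) = \tilde{O}(\norm{\ss_j}_1/\phi)$. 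By \Cref{inv:item2} of \Cref{lem:main_invariant}, the source norm at the start of iteration $j$ is $\norm{\ss_j}_1 = \tilde{O}(2^{k-j}/\phi)$, so the Dinitz computation costs $\tilde{O}(2^{k-j}/\phi)$.

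Next I would bound the cost of the sparse-cut extraction at \Cref{alg:batch_pruning:while}. By \Cref{clm:while_terminates} the while loop terminates after fewer than $h$ iterations, and each iteration only probes edges incident to newly expanded vertices in the residual graph, so the work is proportional to $\vol_G(S_i)$. \Cref{clm:source_reduction} bounds this volume by $\tilde{O}(2^{k-j}/\phi)$, which is subsumed by the Dinitz cost. The additional bookkeeping on \Cref{alg:batch_pruning:source_start}--\Cref{alg:batch_pruning:f_end} touches only edges incident to $B_j$, $A_j$, and $S_j$, whose total size is also $\tilde{O}(2^{k-j}/\phi)$ by \Cref{inv:item3}. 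Summing over $j = i, \ldots, \lambda$ gives a geometric series dominated by the $j = i$ term, yielding total per-rebuild cost $\tilde{O}(2^{k-i}/\phi)$, as claimed.

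Finally I would amortize across the $D$ deletions. By \Cref{clm:rebuild_timing}, layer $i$ is rebuilt at most once every $2^{k-i-1}-1$ updates, hence at most $O(D/2^{k-i})$ times during the entire sequence. Each such rebuild contributes $\tilde{O}(2^{k-i}/\phi)$ to the runtime, so the total work attributable to layer-$i$ rebuilds is $\tilde{O}(D/\phi)$. Summing over the $\lambda = O(\log m)$ levels (absorbed into $\tilde{O}$) yields the claimed bound $\tilde{O}(D/\phi)$.

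The main obstacle is ensuring that the precondition $\tt \geq \deg_G/\lambda$ of \Cref{fct:dinitz} and the source bound from \Cref{inv:item2} remain valid after inducing the graph on $\widehat{V} \setminus S_i$ at \Cref{alg:batch_pruning:G_end}, since the algorithm then continues iterating at the next level with a modified flow $\ff_i[\widehat{E}]$. Here one must observe that the sink capacity $\frac{1}{\lambda}\deg_G$ added at the current iteration is preserved under the restriction $\tt_i[\widehat{V} \setminus S_i]$, and that the invariant is re-established at the next iteration before the subsequent Dinitz call, so the cost bound carries through the entire rebuild.
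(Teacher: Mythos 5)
Your overall strategy — bound each Dinitz call via \Cref{fct:dinitz} and the excess bound from \Cref{inv:item2} of \Cref{lem:main_invariant}, sum the geometric series over $j \ge i$ within a rebuild, then amortize via \Cref{clm:rebuild_timing} — is exactly the paper's approach (the paper's own proof is just a terser version of yours). However, your computation contains an arithmetic slip that drops a factor of $\phi^{-1}$: you correctly establish that a single Dinitz call at iteration $j$ costs $O(\lambda \cdot h \cdot \norm{\ss_j}_1) = \tilde{O}(\norm{\ss_j}_1/\phi)$ and that $\norm{\ss_j}_1 = \tilde{O}(2^{k-j}/\phi)$, but then you multiply these and write the result as $\tilde{O}(2^{k-j}/\phi)$. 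The product is in fact $\tilde{O}(2^{k-j}/\phi^2)$. Carrying the correct value through the geometric sum and the amortization gives a per-rebuild cost of $\tilde{O}(2^{k-i}/\phi^2)$ and a total of $\tilde{O}(D/\phi^2)$.

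Interestingly, your final figures happen to coincide with what the claim \emph{states}, but the paper's own proof text explicitly says ``a rebuild of layer $i$ takes time $\tilde{O}(2^{k-i}/\phi^2)$,'' and the downstream result \Cref{thm:amortized_low_recourse} cites \Cref{clm:runtime} to derive a total running time of $\tilde{O}(D/\phi^2)$. So the exponents in the claim's statement appear to be a typo, and the correct bounds are indeed $\tilde{O}(2^{k-i}/\phi^2)$ per rebuild and $\tilde{O}(D/\phi^2)$ total. You should fix the multiplication in the second paragraph and correct the final bounds accordingly; the rest of your argument (the source-norm bound, the geometric summation across sub-iterations, and the amortization over rebuild frequency) is sound.
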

\begin{proof}
    Layer $i$ gets rebuilt every $2^{k - i - 1}$ deletions, and a rebuild of layer $i$ takes time $\tilde{O}(2^{k - i}/\phi^2)$ since the runtime is dominated by the call to \textsc{Dinitz}(), whose runtime follows from \Cref{lem:main_invariant} and \Cref{fct:dinitz}. The claim follows from summing up the layers.
\end{proof}

We finally prove that the graph $G[\hat{V}] \setminus B$ remains a $\Omega(\phi)$-expander. 

\begin{lemma}\label{lem:maintains_exp}
    $G[\hat{V}] \setminus B$ is a $\phi/10$-expander throughout.
\end{lemma}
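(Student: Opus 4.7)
The plan is to invoke \Cref{lem:exp_cert} with the flow $\ff_\lambda$ produced by \Cref{alg:batch_pruning} after processing any update, treating $G' = G[\hat{V}] \setminus B$ as the graph and $A = \bigcup_{i=1}^{\lambda} (A_i \cup S_i)$ as the total pruned set.

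First, I would argue that $\ff_\lambda$ has zero excess at the end of iteration $\lambda$. Combining \Cref{lem:main_invariant}~(\ref{inv:item2}) with \Cref{clm:excess_preservation} ensures that the excess after \Cref{alg:batch_pruning:graph} is bounded by $(64 \log_2 n) \cdot 100 \cdot 2^{k-i}/\phi$ at every level, so the hypothesis of \Cref{clm:while_terminates} holds and the while loop at \Cref{alg:batch_pruning:while} terminates within $h$ rounds. Re-running the counting of \Cref{clm:source_reduction} (which relies only on while-loop termination and on the sink saturation within $S_i$, not on $A_i=\emptyset$) then bounds the residual excess by $2^{k-i}/\phi$ at the end of iteration $i$. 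For $i = \lambda = k + \lceil\log_2(1/\phi)\rceil + 1$ this bound is $\frac{1}{2}$. Under the footnote's normalization (replace each edge by $\lambda$ parallel copies), all capacities, sources, and sinks are integers, Dinitz's algorithm produces an integer flow, and the nonnegative integer excess vector with $\ell_1$-norm below $1$ must be identically zero. Therefore $\BB \ff_\lambda \ge \ss_\lambda - \tt_\lambda$ inside $G'$, and $\ff_\lambda$ is feasible with edge capacities $8/\phi$.

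Next, I would identify $\ss_\lambda$ and $\tt_\lambda$ with the quantities required in \Cref{lem:exp_cert}. Telescoping the source updates in \Cref{alg:batch_pruning:source_start} and \Cref{alg:batch_pruning:src_update} yields
\[
\ss_\lambda \;=\; \frac{8}{\phi}\bigl(\deg_G - \deg_{G'}\bigr)[V \setminus A],
\]
since each removed edge (whether in some $B_i$ or incident to some pruned vertex in $A_i \cup S_i$) contributes exactly $8/\phi$ of source to each of its surviving endpoints. The monotone shrinking of $\hat{V}$ ensures every surviving vertex $v$ accumulates the full $\lambda \cdot \frac{1}{\lambda}\deg_G(v) = \deg_G(v)$ in sink capacity, so $\tt_\lambda = \deg_G[V \setminus A] \le 2\deg_G[V \setminus A]$. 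Since $8/\phi \le 32/\phi$, the flow $\ff_\lambda$ witnesses the hypothesis of \Cref{lem:exp_cert}, and we conclude that $G'$ is a $\phi/10$-expander.

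The main obstacle is cleanly extending the excess bound of \Cref{clm:source_reduction} to iterations where the pruned set $A_i$ inherited from previous updates is nonempty. I would use \Cref{clm:excess_preservation} together with the bottleneck-cut structure from \Cref{lem:main_invariant}~(\ref{inv:item1}) to absorb the extra excess introduced by deleting $A_i$ into the starting-excess bound $(64 \log_2 n) \cdot 100 \cdot 2^{k-i}/\phi$, and then invoke the while-loop/volume accounting to conclude an $\ell_1$-excess of at most $2^{k-i}/\phi$. A secondary subtlety is verifying inductively that the flow restriction $\ff_{i-1}[\hat{E}]$ together with the Dinitz call at \Cref{alg:batch_pruning:f} preserves the pointwise routing identity $\BB \ff_i \ge \ss_i - \tt_i - \xx_i$ on the current graph across all $\lambda$ iterations, so that this identity survives to level $\lambda$ on the final graph $G'$.
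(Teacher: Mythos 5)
Your proposal is correct and it faithfully reconstructs the reasoning the paper compresses into a one-line proof ("the excess of the final layer is $0$, so apply the expansion certificate lemma"). In fact you have repaired a small reference slip in the paper: its proof cites \Cref{lem:flow_cert_guarantees_exp}, which cannot apply here (it would be circular, since its hypothesis is exactly that the pruned subgraph is already an expander), whereas the lemma that is actually wanted, and that you invoke, is \Cref{lem:exp_cert}. Your telescoping of $\ss_\lambda$, your accounting that every surviving vertex accumulates $\lambda \cdot \frac{1}{\lambda}\deg_G(v) = \deg_G(v)$ of sink capacity, and your observation that the final excess bound $2^{k-\lambda}/\phi \le 1/2 < 1$ drives the certificate, are all exactly what the paper's \Cref{lem:main_invariant} is designed to deliver, and your plan to push through the nonempty-$A_i$ case via \Cref{clm:excess_preservation} and the bottleneck-cut invariant is precisely the role those tools play in the paper.

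One small stylistic difference worth noting: to kill off the sub-unit residual excess, you appeal to integrality of the Dinitz flow, which requires $8/\phi$ to be an integer after the normalization in the footnote. The paper instead (in the proof of \Cref{lem:double_low_recourse_correctness}) observes that the $<1$ leftover excess can be absorbed into the $2\deg_G$ sink slack that \Cref{lem:exp_cert} already provides, which avoids any assumption on $\phi$. Either route is sound, but the sink-slack argument is slightly more robust and is the one the paper implicitly relies on. I would also suggest that in the final write-up you not belabor the "secondary subtlety" about $\BB\ff_i \ge \ss_i - \tt_i$ surviving restriction: the definitions already set $\ff_i \gets \ff_i[\widehat{E}]$ and the excess $\xx_{\ss_i,\tt_i,\ff_i}$ is recomputed on the current graph at each level, so the only genuine content is the excess bound supplied by \Cref{lem:main_invariant}.
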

\begin{proof}
    The batching scheme maintains a valid flow certificate since the excess of the final layer is $0$. Therefore, the lemma follows from \Cref{lem:flow_cert_guarantees_exp}. 
\end{proof}

Given the previous claims, we could already give an expander pruning algorithm that processes every update in poly-logarithmic time by noting that the rebuild at layer $i$ can be started as soon as layer $i + 1$ is full, which allows us to adequately distribute the computational cost. However, this algorithm would still require a pointer switch when the pre-processing finished. This means that the set $\vol_G(\bigcup_{j \in [\lambda]} S_j \cup A_j)$ sometimes grows a lot, and an algorithm could not read all the updates in poly-logarthmic time after an update finished. We will address this caveat in the following two sections. 

\section{Low Worst-Case Recourse Expander Pruning}
\label{sec:amortized_low_rec}

The goal of this section is to prove the following theorem that formalizes guarantees for our low worst-case recourse expander pruning algorithm, albeit still with amortized runtime.

\begin{restatable}{theorem}{lowRecourseAmortTime}
\label{thm:amortized_low_recourse}
    There exists an algorithm that given a $\phi$-expander $G$ and a sequence of up to $\tilde{\Omega}(\phi m)$ deletions to $G$ maintains a set $S_0 \subseteq V$ such that $S_0$ grows by at most $\tilde{O}(1/\phi^2)$ vertices after every deletion and $G[V \setminus S_0]$ remains a $\Omega(\phi/\log_2^4 m)$-expander throughout.

    The total processing time of $D$ deletions is $\tilde{O}(D/\phi^2)$.
\end{restatable}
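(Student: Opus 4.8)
\textbf{Proof plan for \Cref{thm:amortized_low_recourse}.}

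The plan is to wrap \Cref{alg:batch_pruning} with the batching scheme from \Cref{sec:batching} and, on top of the amortized pruned set $\bigcup_j S_j \cup A_j$, build an explicit low-recourse set $S_0 \subseteq V$ via the flow-certificate/backtracking mechanism sketched in the overview. First I would recall from \Cref{lem:main_invariant} and \Cref{lem:maintains_exp} that the batching scheme maintains, at every time step, sets $\{S_i, A_i\}_{i \in [\lambda]}$ with $\vol_G(A_i \cup S_i) \le 10^7 \log_2^3(n)\cdot 2^{k-i}/\phi$ and a valid flow certificate $\ff$ witnessing that $G[\widehat V]\setminus B$ is a $\phi/10$-expander, where $\widehat V = V \setminus \bigcup_j (A_j \cup S_j)$. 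The point is that $\widehat V$ is not monotone: when layer $i$ rebuilds, $S_i$ can change drastically. The set $S_0$ I output is a \emph{monotone under-approximation} of $V \setminus \widehat V$: it only ever grows, and I will show it catches up to the current complement quickly enough.

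The key steps, in order: (1) For each level $i$, whenever layer $i$ is (re)built producing a fresh cut $\Delta A_i$ (the new vertices it wants pruned), by \Cref{clm:source_reduction} and \Cref{fct:dinitz} the flow certificate routes $\Theta(1/\phi)$ flow per degree out of $\Delta A_i$ to sinks in the current graph, on edges of capacity $O(\log n/\phi)$; hence there is a flow certificate of total support $\tilde O(2^{k-i}/\phi^2)$ certifying that $\Delta A_i$ can be pruned. Combine these $\lambda = O(\log m)$ certificates with increasing multipliers (exactly as in the overview paragraph ``Relaxing the Resting Property'') into a single acyclic certificate $\ff$ whose edge congestion is $O(\lambda^2/\phi) = \tilde O(1/\phi)$. (2) Maintain $S_0$ by the backtracking rule: on deletion of $e_t$, first remove all $\ff$-flow through $e_t$ by backtracking to true sources (using the link-cut-tree forest $T$ of \Cref{fct:dinitz}-style, but here we only need existence, not speed), then for $\Theta(\lambda^2 \cdot \log n / \phi)$ rounds pick the smallest $i$ with $\Delta A_i \not\subseteq S_0$, find an edge leaving $\Delta A_i$ still carrying flow, and backtrack one unit; finally add to $S_0$ any vertex $a$ whose out-flow in $\ff$ dropped below $\deg_G(a)$. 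By \Cref{clm:rebuild_timing} layer $i$ is quiescent for $2^{k-i-1}-1$ steps after a rebuild, and since the total flow incident to $\Delta A_i$ is $\tilde O(2^{k-i}/\phi^2)$ while we drain $\ge 1$ unit from it per round on $\Theta(\tilde O(1/\phi))$ of the rounds, within $2^{k-i-1}$ steps we have $\Delta A_i \subseteq S_0$ — so $S_0$ is reconciled before the next rebuild of that level, and $S_0$ is monotone. (3) Correctness: at every time step $\ff$ restricted to its current support, together with $G[V \setminus S_0]\setminus B$, still satisfies the hypotheses of an expansion certificate in the spirit of \Cref{lem:exp_cert}/\Cref{lem:flow_cert_guarantees_exp} — because we only ever backtrack flow paths starting at vertices with no in-flow, no vertex still needed by $\ff$ is added to $S_0$, and every vertex in $(V\setminus\widehat V)\setminus S_0$ still emits its full $\deg_G$ of source. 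The extra $\lambda+1$ factor in edge congestion and in per-sink absorption costs a $\log^2 m$ factor in expansion, giving $\Omega(\phi/\log^4 m)$. (4) Recourse: $S_0$ grows by at most the number of vertices whose out-flow dropped in one step, which is $O(\text{congestion}\cdot \#\text{rounds}) = \tilde O(1/\phi^2)$. (5) Runtime: the amortized cost is the $\tilde O(D/\phi^2)$ of \Cref{clm:runtime} for running \Cref{alg:batch_pruning}, plus $\tilde O(1/\phi)$ backtracking work per step amortized over computing the certificates (each certificate of support $\tilde O(2^{k-i}/\phi^2)$ is computed once per $2^{k-i-1}$ steps), for a total of $\tilde O(D/\phi^2)$.

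The main obstacle I expect is step (3) — arguing that the \emph{composed, partially-drained} certificate $\ff$ remains a valid expansion certificate after an arbitrary interleaving of deletions and rebuilds at different levels. The subtlety is that when layer $i$ rebuilds, the graph on which the lower certificates $\ff_{i-1},\dots,\ff_0$ live changes, and a freshly built $\ff_i$ must route to sinks in a graph that already excludes $S_0 \cup \Delta A_\lambda \cup \dots \cup \Delta A_{i-1}$; one must check that such a routing exists with the claimed congestion (this is where the increasing-sink-capacity trick and the $\lambda$-fold extra source are used, mirroring \Cref{lem:exp_cert}), and that the multiplier scheme $\ff = \sum_i (\lambda-i+1)\,\ff_i$ makes the net out-flow at each $\Delta A_i$-vertex at least $\deg_G$ even after lower levels have absorbed flow at it as a sink. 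A secondary nuisance is handling the non-acyclicity of the Dinitz pre-flows (the paper flags this): one works with the ``extra $\deg_G(a)$ source'' slack so that a vertex backtracked to spuriously at most $\deg_G(a)$ times is still not prematurely added to $S_0$. I would isolate these into a lemma ``the composed certificate is always valid'' proved by induction over rebuild events, leaning on \Cref{clm:excess_preservation}, \Cref{clm:source_reduction}, and \Cref{lem:main_invariant}.
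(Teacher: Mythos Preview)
Your proposal is correct and follows essentially the same approach as the paper: run \Cref{alg:batch_pruning} with the batching scheme, maintain a composed flow certificate over the levels, and grow $S_0$ by backtracking flow paths. The paper packages the multiplier scheme slightly differently (each $\ff_i$ is computed to route $(10ik+i+1)\dd$ source rather than being post-multiplied, and the composition is formalized via \Cref{fct:flow_cert_comp} and \Cref{thm:batch_flow_cert}), and the lemma you anticipate isolating is exactly \Cref{lem:cert_maint}; otherwise the structure, the obstacle you flag, and its resolution all match.
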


To do so, we introduce flow certificates and describe their maintenance under edge deletions. These are crucial objects for our algorithm allowing us to slowly prune a discarded part while ensuring that the original graph remains a good expander throughout. On a high level, we run the dynamic batch pruning algorithm $\textsc{BatchPrune}()$ (\Cref{alg:batch_pruning}) with the dynamic batch update scheme introduced in \Cref{sec:batching} while slowly pruning the discarded parts $S_i$ instead of removing them all at once. If we ensure that whenever a rebuild affects layer $i$, the set $S_i$ is fully pruned, then this algorithm nicely interfaces with the batching scheme. 

\subsection{Flow Certificates and their Nesting}

We first define flow certificates. They certify that a discarded part can be kept around while approximately maintaining expansion.

\begin{definition}[Flow certificate]
    Given a graph $G = (V, E)$, we say a tuple $(\ff, S)$ is a $(\gamma_{\text{source}}, \gamma_{\text{sink}}, c)$-\emph{flow certificate} for $G$ where $\gamma_{\text{source}}, \gamma_{\text{sink}}, c \in \mathbb{R}^+$ with $c \geq 1$ if we have:
    \begin{enumerate}
        \item $S \subset V$ and
        \item  $\ff$ is a feasible flow on the current graph $G$ that routes at least $\gamma_{\text{source}} \cdot \deg_{G}(s)$ source from every $s \in S$ to sinks $t \in V \setminus S$ of capacity at most $\gamma_{\text{sink}} \cdot \deg_{G}(t)$ with capacity $c$ on every edge.
    \end{enumerate}
    \label{def:flow_cert}
\end{definition}

If $G[V \setminus S]$ has good expansion, then a flow certificate ensures that the whole graph $G$ is a good expander. 

\begin{lemma}
    \label{lem:flow_cert_guarantees_exp}
    Let $G = (V, E)$ be a graph and let $S \subseteq V$ be such that $G[V \setminus S]$ is an $\alpha$-expander. Then, a $(\gamma_{\text{source}}, \gamma_{\text{sink}}, c)$-flow certificate $(S, \ff)$ where $\gamma_{\text{source}} \geq  \gamma_{\text{sink}}/\delta \geq 1$ for $\delta \geq 1$ and $c \geq 3/\alpha$ shows that $G$ is a $\frac{1}{3(c + \delta)}$-expander.
\end{lemma}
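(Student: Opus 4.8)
The plan is to show the contrapositive: assume $G$ has a sparse cut and derive a sparse cut in $G[V\setminus S]$, contradicting the $\alpha$-expansion assumption. Concretely, suppose $(T, V\setminus T)$ is a cut in $G$ with $\vol_G(T)\le\vol_G(V\setminus T)$ and $|E_G(T, V\setminus T)| < \frac{1}{3(c+\delta)}\vol_G(T)$. I want to argue that $T\setminus S$ (or possibly $(V\setminus T)\setminus S$, whichever is the smaller side inside $G[V\setminus S]$) forms a cut that is too sparse for an $\alpha$-expander, unless $T$ is "mostly" inside $S$, in which case I use the flow certificate to route a large amount of flow out of $T$ and again get a contradiction from the small edge boundary.

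\textbf{The dichotomy.} Let $T' = T\setminus S$. First I would handle the case $\vol_G(T\cap S)$ is a small fraction of $\vol_G(T)$, say $\vol_G(T\cap S)\le\frac{1}{2}\vol_G(T)$, so that $\vol_G(T')\ge\frac{1}{2}\vol_G(T)$. Then $T'$ is a nonempty subset of $V\setminus S$ and, since $G[V\setminus S]$ is an $\alpha$-expander (measuring volume appropriately — I should be slightly careful here about whether $T'$ is the smaller side of its cut in $G[V\setminus S]$, but a factor-$2$ slack absorbs this as in the footnote earlier in the paper), we get $|E_{G[V\setminus S]}(T', (V\setminus S)\setminus T')| \ge \alpha\vol_G(T')/2 \ge \alpha\vol_G(T)/4$. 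All these edges lie in $E_G(T, V\setminus T)$, contradicting the sparsity bound once we check $\alpha/4 > \frac{1}{3(c+\delta)}$, which follows from $c\ge 3/\alpha$ (so $\alpha \ge 3/c$, giving $\alpha/4 \ge \frac{3}{4c} > \frac{1}{3(c+\delta)}$ since $3(c+\delta) \ge 3c > 4c/3$... actually I should track constants: with $c\ge 3/\alpha$ we have $\frac{1}{3(c+\delta)}\le\frac{1}{3c}\le\frac{\alpha}{9} < \frac{\alpha}{4}$). The other case is $\vol_G(T\cap S) > \frac{1}{2}\vol_G(T)$: now use the flow certificate. The flow $\ff$ routes at least $\gamma_{\text{source}}\deg_G(s)$ units out of each $s\in S$, so at least $\gamma_{\text{source}}\vol_G(T\cap S)\ge\frac{\gamma_{\text{source}}}{2}\vol_G(T)$ units originate inside $T$. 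The sinks inside $T\setminus S$ absorb at most $\gamma_{\text{sink}}\vol_G(T')\le\gamma_{\text{sink}}\vol_G(T)$ units, and since $\gamma_{\text{source}}\ge\gamma_{\text{sink}}/\delta$, i.e. $\gamma_{\text{sink}}\le\delta\gamma_{\text{source}}$... hmm, this needs $\gamma_{\text{source}}/2 - \gamma_{\text{sink}}$ to be positive, so I should instead be careful: the amount routed across $\partial_G(T)$ is at least $\frac{\gamma_{\text{source}}}{2}\vol_G(T) - \gamma_{\text{sink}}\vol_G(T)$, and since each of the $|E_G(T,V\setminus T)|$ boundary edges carries at most $c$ units, $|E_G(T,V\setminus T)| \ge \frac{(\gamma_{\text{source}}/2 - \gamma_{\text{sink}})\vol_G(T)}{c}$. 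Using $\gamma_{\text{source}}\ge 1$ and relating $\gamma_{\text{sink}} = \delta\gamma_{\text{source}}/\delta\le\gamma_{\text{source}}$...

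\textbf{The main obstacle.} The delicate point is getting the constants to line up in the second case, because $\gamma_{\text{sink}}$ can be as large as $\delta\gamma_{\text{source}}$ which would swamp the $\gamma_{\text{source}}/2$ term. So the split threshold between the two cases cannot be $\frac{1}{2}$ — it should depend on $\delta$. I would instead split on whether $\vol_G(T\cap S) \ge \left(1 - \frac{1}{2(c+\delta)}\cdot 3\right)\vol_G(T)$ or use the sink capacity more cleverly: since $G[V\setminus S]$ is an expander, a large sink-volume slice $T'$ of it cannot be "cheaply" disconnected, so flow that wants to stay inside $T$ is forced across $\partial_G(T)$ in proportion to how much source exceeds the sink capacity actually reachable within $T$. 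The cleanest route is probably: in case 2, note that at least $\frac{\gamma_{\text{source}}}{2}\vol_G(T)$ source is in $T\cap S$ but total sink capacity in $T$ is at most $\gamma_{\text{sink}}\vol_G(T\setminus S)$, and one must choose the case threshold so $\gamma_{\text{source}}\vol_G(T\cap S) - \gamma_{\text{sink}}\vol_G(T\setminus S)\ge\Omega(\vol_G(T))$; balancing this against case 1 requires setting the threshold at roughly $\vol_G(T\cap S)\ge\left(1-\frac{1}{4(c/\alpha... )}\right)\vol_G(T)$. I expect the bookkeeping of these constants — ensuring the worse of the two cases still beats $\frac{1}{3(c+\delta)}$ — to be the crux, and the rest to be routine. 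I would finish by remarking that the volume-measurement subtlety (current vs. initial volume, smaller side of the cut) is handled exactly as in the paper's earlier footnoted arguments, costing only a constant factor already absorbed into the bound.
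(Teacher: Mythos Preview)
Your approach is essentially the paper's: split on how much of the cut $T$ lies in $S$, use the flow certificate when $T$ is mostly in $S$, and use the $\alpha$-expansion of $G[V\setminus S]$ otherwise. You also correctly diagnose the obstacle with the naive threshold $\tfrac{1}{2}$ and realize the split must depend on $\delta$.

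The one thing you are circling around but do not land on is the \emph{clean} choice of threshold: the paper splits at $\vol_G(T\cap S)\ge\bigl(1-\tfrac{1}{3\delta}\bigr)\vol_G(T)$, with the threshold depending on $\delta$ \emph{alone}, not on $c$. With this choice the flow case gives
\[
\gamma_{\text{source}}\vol_G(T\cap S)-\gamma_{\text{sink}}\vol_G(T\setminus S)\;\ge\;\gamma_{\text{source}}\Bigl(\vol_G(T\cap S)-\delta\vol_G(T\setminus S)\Bigr)\;\ge\;\tfrac{\gamma_{\text{source}}}{3}\vol_G(T),
\]
hence $|E_G(T,V\setminus T)|\ge\tfrac{1}{3c}\vol_G(T)$ (using $\gamma_{\text{source}}\ge 1$), while the expander case gives $\vol_G(T\setminus S)\ge\tfrac{1}{3\delta}\vol_G(T)$ and thus $|E_G(T,V\setminus T)|\ge\tfrac{\alpha}{3\delta}\vol_G(T)$. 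Both bounds dominate $\tfrac{1}{3(c+\delta)}$ once $c\ge 3/\alpha$. Your proposed thresholds that mix in $c$ would also work but make the arithmetic messier; the $\delta$-only threshold is what makes the bookkeeping you flagged as ``the crux'' become a two-line computation.
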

\begin{proof}
Consider any cut $C \subset V$ with $\vol_{G}(C) \leq \vol_{G}(V \setminus C)$. Let $A = C \cap S$ and $B = C \setminus S$. Let us prove the claim by case distinction.

\underline{If $\vol_{G}(A) \geq (1 - \frac{1}{3\delta}) \vol_{G}(C)$:} This implies $\vol_{G}(B) < \frac{1}{2\delta} \vol_{G}(C)$ and $\vol_{G}(A) \geq \frac{2}{3}$ since $\delta \geq 1$. Thus at least $\gamma_{\text{source}} \cdot \vol_{G^{(0}}(A) - \gamma_{\text{sink}} \cdot \vol_{G}(B) \geq \gamma_{\text{source}} \cdot (\vol_{G^{(0}}(A) - \delta \vol_{G}(B)) \geq \frac{\gamma_{\text{source}}}{3} \cdot \vol_{G}(C)$ source mass on $C$ has to be routed out of $C$. But since $(\ff, S)$ is a flow certificate, we have that $\ff$ sends this amount of flow over the cut $E_{G}(C, V \setminus C)$ and since each edge can transport at most $c$ units of flow, we have that $|E_{G}(C, V \setminus C)| \geq \frac{\gamma_{\text{source}}}{3c} \cdot \vol_{G}(C) \geq \frac{1}{3c} \cdot \vol_{G}(C)$, as desired since $\delta \geq 1$.

\underline{Otherwise, $\vol_{G}(A) < (1 - \frac{1}{3\delta}) \vol_{G}(C)$:} this implies $\vol_{G}(B) \geq \frac{1}{3\delta} \vol_{G}(C)$. But from \Cref{def:pruned_exp}, we have that $|E_{G \setminus S}(B, V \setminus (B \cup S))| \geq \vol_{G}(B) \geq \frac{1}{\delta 3} \vol_{G}(C)$ and since $E_{G \setminus S}(B, V \setminus (B \cup S)) \subseteq E_G(C, V \setminus C)$, the claim follows.
\end{proof}

We will make extensive use of the following simple observation that enables us to compose flow certificates by simply adding them up.  

\begin{observation}
    \label{fct:flow_cert_comp}
    Given a $(\gamma_{\text{source}}, \gamma_{\text{sink}}, c)$ flow certificate $(S, \ff)$ for a graph $G$ and a $(\gamma_{\text{source}} + \gamma_{\text{sink}}, \gamma_{\text{sink}}', c')$ flow certificate $(S', \ff')$ for $G \setminus S$, we have that $(S \cup S', \ff + \ff')$ is a $(\gamma_{\text{source}}, \gamma_{\text{sink}} + \gamma_{\text{sink}}', c + c')$ flow certificate. 
\end{observation}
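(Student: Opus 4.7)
The plan is to verify each of the three clauses of the flow-certificate definition (feasibility with edge capacity $c+c'$, source strength at every $s \in S \cup S'$, and sink capacity at every $t \in V \setminus (S \cup S')$) for the composed pair $(S \cup S', \ff + \ff')$. Everything will follow by linearity of the net-flow operator combined with the single structural fact that $\ff'$ is supported on $G \setminus S$, so $\ff'(e) = 0$ for every edge $e$ of $G$ incident to $S$.

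Feasibility is immediate: on any edge $e$ of $G$ incident to $S$ we have $(\ff + \ff')(e) = \ff(e) \le c \le c + c'$, and on any other edge $e \in E(G \setminus S)$ we have $(\ff + \ff')(e) \le \ff(e) + \ff'(e) \le c + c'$. The sink side is equally short: for $t \in V \setminus (S \cup S')$, $t$ is a legal sink in the outer certificate (absorbing at most $\gamma_{\text{sink}} \deg(t)$ from $\ff$) and also in the inner certificate (absorbing at most $\gamma_{\text{sink}}' \deg(t)$ from $\ff'$), so the net inflow of $\ff + \ff'$ at $t$ is at most $(\gamma_{\text{sink}} + \gamma_{\text{sink}}') \deg(t)$, as required.

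The only content lies in the source condition, which I would split into two cases. If $s \in S$, then $s$ is not even a vertex of $G \setminus S$, so $\ff'$ contributes nothing at $s$, and the net outflow of $\ff + \ff'$ at $s$ equals that of $\ff$, hence is at least $\gamma_{\text{source}} \deg(s)$. If $s \in S' \setminus S$, then $s$ lies in $V \setminus S$ and could therefore have been a saturated sink for $\ff$; the worst case is net inflow $\gamma_{\text{sink}} \deg(s)$ at $s$ in $\ff$. In $\ff'$, however, $s$ emits a net outflow of at least $(\gamma_{\text{source}} + \gamma_{\text{sink}}) \deg(s)$. Adding, the net outflow of $\ff + \ff'$ at $s$ is at least
\[
(\gamma_{\text{source}} + \gamma_{\text{sink}}) \deg(s) - \gamma_{\text{sink}} \deg(s) = \gamma_{\text{source}} \deg(s).
\]

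There is no real obstacle; the ``subtle'' step is merely recognising why the hypothesis demands the stronger source parameter $\gamma_{\text{source}} + \gamma_{\text{sink}}$ for the inner certificate instead of just $\gamma_{\text{source}}$. That $+\gamma_{\text{sink}}$ is precisely the overpayment needed to cancel the flow that vertices of $S' \setminus S$ may have been absorbing as sinks in the outer certificate; without it, composition would undershoot the required $\gamma_{\text{source}} \deg(s)$ at exactly these vertices.
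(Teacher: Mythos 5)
Your proof is correct. Since the paper leaves this as an unproved Observation, there is no canonical proof to compare against, but your direct verification of the three conditions (edge capacity, sink capacity, source strength) is exactly the argument the authors implicitly have in mind, and you correctly identify the single nontrivial point: the inner certificate's inflated source parameter $\gamma_{\text{source}} + \gamma_{\text{sink}}$ exists to cancel the flow that a vertex of $S'$ may have absorbed while acting as a sink in the outer certificate. One small caveat worth being explicit about: your cancellation $(\gamma_{\text{source}}+\gamma_{\text{sink}})\deg(s) - \gamma_{\text{sink}}\deg(s)$ silently assumes both certificates measure degrees against the same fixed vector $\dd$; if the inner certificate instead used $\deg_{G\setminus S}$, the arithmetic would not close for vertices of $S'$ with edges into $S$. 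This assumption is in fact the paper's convention (all flow certificates in the algorithm are defined with respect to the initial degree vector $\dd = \deg_{G^{(0)}}$), but stating it would make the proof airtight.
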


\subsection{Batch Flow Certificate Maintenance}

In this section, we describe our data structure for maintaining flow certificates (\Cref{def:flow_cert}) through a sequence of edge deletions/flow removals. 

\paragraph{Dynamic Trees.} We use the following classic link-cut tree data structure from \cite{ST83} that allows us to update flows on (directed) trees. 

\begin{definition}
We say a directed graph $\dir{F} = (\dir{V}, \dir{E})$ is a rooted forest if each connected component $T$ in $\dir{G}$ (where connected is in the undirected sense) consists of vertices with exactly one in-edge except for a single vertex per component, the root vertex, that has no in-edge.
\end{definition}
\begin{lemma}[Dynamic trees, see \cite{ST83}]
\label{algo:LCT}
Given a static, $m$-edge, directed graph $\dir{G} = (\dir{V}, \dir{E})$, there is a deterministic data structure $\textsc{DynTree}()$ that maintains a rooted forest $T \subseteq \dir{G} = (\dir{V}, \dir{E})$ under insertion/deletion of edges and maintains a flow $\ff \in \mathbb{R}^{\dir{E}}$:
\begin{enumerate}
    \item $\textsc{Initialize}(\ff_{init})$: Initializes the collection of trees to be the empty graph with no edges. And initializes the flow $\ff$ to $\ff_{init}$ via a pointer change.
    \item $\textsc{Insert}(e)$ / $\textsc{Delete}(e)$: Insert/delete edges $e$ to/from $T$, under the condition that $T$ is always a rooted forest, i.e. each vertex has at most one in-coming edge.
    \item $\textsc{FindRoot}(u)$: given vertex $u$, returns the root of $u$ in the tree $T$, i.e. the first vertex on the maximal path containing $u$.
    \item $\textsc{FindMin}(u)$: Given vertex $u \in V$, returns the edge $e$ on the root-of-$u$ to $u$ path that carries minimal flow $\ff(e)$ (and if there are multiple such edges, it returns the edge $e$ closest to $u$). 
    \item $\textsc{UpdateFlow}(u, \Delta)$: For any real $\Delta \in \mathbb{R}$, and vertex $u \in V$, adds $\Delta$ to the flow of every edge $e$ on the root-of-u to u path in $T$.
    \item $\textsc{ReadCurrentFlow}(e)$: Returns current $\ff(e)$.
\end{enumerate}
The data structure requires time $\tilde{O}(1)$ for initialization. Each additional operation can be implemented in worst-case time $O(\log m)$.
\end{lemma}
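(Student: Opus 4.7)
The plan is to derive all claimed operations from the standard Sleator--Tarjan link-cut tree (LCT) with two lightweight augmentations: subtree--minimum aggregation and lazy path addition. Recall that an LCT represents a rooted forest by partitioning each tree into preferred paths, each stored as a balanced BST (splay tree) keyed by depth; the standard \textsc{access}, \textsc{link}, \textsc{cut}, and \textsc{findroot} primitives all run in $O(\log m)$ amortized, and it is well known that they can be made worst-case $O(\log m)$ using the biased search tree implementation of Sleator--Tarjan.

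The first step is to attach the flow value $\ff(e)$ to the child endpoint of each tree edge $e$, so that ``the flow on edges of the root-to-$u$ path'' becomes ``the values stored at the ancestors of $u$.'' I would augment each splay-tree node with two fields: (i) the minimum flow value among all nodes in its subtree of the auxiliary tree, together with the node in the subtree closest to $u$ attaining that minimum (for the tie-breaking rule in \textsc{FindMin}); and (ii) a lazy additive tag $\Delta$ that is pushed down on each splay rotation. Both fields can be recomputed in $O(1)$ per splay rotation from the children's fields, so the amortized/worst-case bounds of the underlying LCT are preserved.

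With these augmentations the operations are implemented as follows. \textsc{Insert}$(u\!\to\!v)$ and \textsc{Delete}$(u\!\to\!v)$ are a \textsc{link}/\textsc{cut} combined with writing or clearing the stored flow value at the child endpoint; the precondition that $T$ stays a rooted forest is what allows \textsc{link} to be used directly. \textsc{FindRoot}$(u)$ is a standard \textsc{access}$(u)$ followed by walking to the leftmost element of the resulting auxiliary tree. \textsc{UpdateFlow}$(u,\Delta)$ is implemented by \textsc{access}$(u)$, which exposes the entire root-to-$u$ path as a single auxiliary tree, and then adding $\Delta$ to the root's lazy tag. \textsc{FindMin}$(u)$ is the same \textsc{access} followed by reading the augmented ``min of subtree (closest to $u$)'' field at the root of the auxiliary tree. \textsc{ReadCurrentFlow}$(e)$ splays the child endpoint of $e$ and returns its current value after pushing accumulated lazy tags. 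Each of these touches $O(1)$ \textsc{access} / splay calls, giving the claimed $O(\log m)$ per operation.

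Finally, \textsc{Initialize}$(\ff_{\mathrm{init}})$ just stores a pointer to the provided array $\ff_{\mathrm{init}}$; because the initial forest is empty, no splay structure has to be built, and the pointer swap is $\tilde O(1)$. The main obstacle I anticipate is not the LCT mechanics (which are classical) but rather verifying that the tie-breaking rule in \textsc{FindMin} (``edge closest to $u$ among minimizers'') composes correctly with the lazy tag: the tag is additive and thus preserves the argmin set, but one must be careful to push tags before reading the stored ``closest attaining node'' so that the returned node is expressed in the current auxiliary tree order, which after \textsc{access}$(u)$ coincides with depth order from the root to $u$.
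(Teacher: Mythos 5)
The paper does not prove this lemma; it is stated as a citation to Sleator--Tarjan \cite{ST83} and used as a black box. So there is no paper proof to compare against, and your job was really to check whether the claimed interface can be derived from the classical data structure. Your sketch is a correct and reasonable description of the standard construction: augment each auxiliary-tree node with (i) a subtree-minimum aggregate together with the minimizer closest to the query vertex, and (ii) a lazy additive tag, both of which are maintainable in $O(1)$ per rotation; implement the six operations by $\textsc{access}$/$\textsc{link}$/$\textsc{cut}$; and invoke the biased-search-tree (rather than splay) implementation from \cite{ST83} to convert the amortized $O(\log m)$ bound into a worst-case one. Your attention to the $\textsc{FindMin}$ tie-breaking rule (propagate tags before reading the ``closest minimizer'' field so the stored pointer is consistent with the post-$\textsc{access}$ in-order sequence) is exactly the right thing to worry about.

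Two small points are worth making explicit, since they are glossed over in the sketch. First, $\textsc{ReadCurrentFlow}(e)$ can be queried for edges $e\in\dir{E}$ that are not currently in the tree $T$ (the paper's calling code does exactly this); for those you must fall back to a direct array lookup in $\ff$, and more importantly $\textsc{Insert}$/$\textsc{Delete}$ must keep the array and the tree node in sync by writing the current LCT value back into the array on $\textsc{Delete}$ and reading it in on $\textsc{Insert}$, since $\textsc{UpdateFlow}$ mutates values only through the tree. Second, the $\tilde O(1)$ initialization bound implicitly assumes that per-vertex auxiliary-tree nodes are allocated lazily on first access (or are considered pre-allocated outside the initialization cost); your pointer swap covers the flow array but not the vertex nodes, so a one-line remark on lazy allocation is needed to fully justify the $\tilde O(1)$ claim. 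Neither of these affects the $O(\log m)$ per-operation cost and both are routine, so your proposal is essentially complete.
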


\paragraph{Batch Flow Certificate. }

We first state the main theorem of this section. Notice that while we do not explicitly require an upper bound on the size of the discarded sets $S'_i$ in this section, we do need to ensure that whenever a set gets replaced the previous set has been fully pruned. When we use this data structure in our full algorithm, the size of the set $S'_i$ will be roughly proportional to $\approx 2^{k - i}$. The algorithm then issues additional calls to $\textsc{RemoveEdge}()$ targeting the volume of these sets to ensure that the sets steadily get pruned. 

\begin{theorem}[Batch Flow Certificate]
    \label{thm:batch_flow_cert}
    There is a data structure \textsc{BatchFlowCert} (\Cref{alg:flow_cert_maint}) that supports the following operations: 
    \begin{enumerate}
        \item $\textsc{Initialize}(G, \phi, k)$: Initializes the graph $\widehat{G}$ maintained by the data structure with $G$. Sets $S_0, \ldots S_k \gets \emptyset$. These sets remain disjoint. We let $S := \bigcup_{i = 0}^k S_i$. The algorithm ensures that $S$ is monotonically growing.
        \item $\textsc{ReInitialize}(G, \{S'_l, \ldots S'_k \} , \phi)$: Assumes that 
        \begin{itemize}
            \item $l \geq 1$, 
            \item $S_i \subseteq S_0$ for all $i = l, \ldots, k$, 
            \item $\widehat{G}\left[V \setminus \left(\bigcup_{i = 0}^{l - 1} S_i \cup \bigcup_{i = l}^{j} S'_i \right)\right]$ is a $\phi/10$-expander for all $j = l, \ldots, k$, 
            \item $\vol_{G}(S_j) \leq \vol_{G}\left(V \setminus \left(\bigcup_{i = 0}^{l - 1} S_i \cup \bigcup_{i = l}^{j} S'_i\right)\right)/20 k$ for all $j = l, \ldots, k$. 
        \end{itemize}
        Sets $S_{i} \gets S'_i$ for all $i = l, \ldots, k$. 
        \item $\textsc{RemoveEdge}(e)$: Assumes that after removing edge $e$ from the graph $\widehat{G}$, we have that $\widehat{G}[V \setminus S]$ is a $\phi/10$-expander. Removes edge $e$ from $\widehat{G}$, adds up to $8000k^4/\phi$ vertices from the sets $\{S_1, \ldots, S_k\}$ to $S_0$.
    \end{enumerate}
    Then, the data structure maintains that the graph $\widehat{G}[V \setminus S_0]$ remains a $\Omega\left(\frac{\phi}{k^4}\right)$-decremental expander. The total runtime is $\tilde{O}(\vol_G(S)/\phi + k^4 D/\phi)$ where $D$ is the total number of calls to $\textsc{RemoveEdge}(e)$. Additionally, the operation $\textsc{RemoveEdge}(e)$ has worst-case runtime $\tilde{O}(k^4/\phi)$.
    \label{thm:flow_cert_maint} 
\end{theorem}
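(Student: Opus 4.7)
The plan is to maintain, for each level $i \in \{1,\dots,k\}$, a flow $\ff_i$ such that together with $S_i$ it forms a flow certificate (\Cref{def:flow_cert}) for the graph $\widehat{G}\left[V \setminus \bigcup_{j < i} S_j\right]$, with source parameter $\gamma^s_i = \Theta(k - i + 1)$, sink parameter $\gamma^t_i = \Theta(1)$, and edge capacity $c_i = \Theta(k/\phi)$. Iterating \Cref{fct:flow_cert_comp} across the $k$ levels yields a single flow certificate $\bigl(\bigcup_{i \ge 1} S_i,\; \sum_i \ff_i\bigr)$ for $\widehat{G}[V \setminus S_0]$ with source $\Omega(1)$, sink $O(k)$, and capacity $O(k^2/\phi)$. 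Combining this with the inductive assumption that $\widehat{G}[V \setminus S]$ is a $\phi/10$-expander (maintained at every step because the user of the data structure is required to maintain it), \Cref{lem:flow_cert_guarantees_exp} gives $\widehat{G}[V\setminus S_0]$ as an $\Omega(\phi/k^4)$-expander, which is exactly what is claimed.

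For \textsc{ReInitialize}, for each $i = l,\dots,k$ in increasing order I compute $\ff_i$ by running the local Dinitz's algorithm (\Cref{fct:dinitz}) on the graph $\widehat{G}\bigl[V \setminus \bigl(\bigcup_{j < i} S_j \cup \bigcup_{j=l}^{i-1} S'_j\bigr)\bigr]$, placing source $\gamma^s_i \deg(\cdot)$ on $S'_i$ and sink $\gamma^t_i \deg(\cdot)$ on the complement. The preconditions (namely that the relevant induced graph is a $\phi/10$-expander and $\vol_G(S'_i) \le \vol_G(\cdot)/(20k)$) together with max-flow min-cut guarantee that such a routing flow exists, so Dinitz terminates in $O(k)$ phases, giving runtime $\tilde{O}(\vol_G(S'_i)/\phi)$ per level and $\tilde{O}(\vol_G(S)/\phi)$ in total.

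For \textsc{RemoveEdge}, I store each $\ff_i$ (which I can assume is acyclic) as a rooted forest in a link-cut tree via \Cref{algo:LCT}, where each vertex currently carrying positive out-flow has a unique chosen outgoing ``flow edge'' pointing toward a sink-root. When an edge $e$ is deleted with $\ff_i(e) > 0$, I repeatedly (i) call \textsc{FindRoot} on an endpoint of $e$ from within $S_i$ to locate a source whose flow path passes through $e$, (ii) cancel one unit of flow along that path via \textsc{UpdateFlow}, and (iii) update the tree by \textsc{FindMin} to remove saturated edges and splice in new successor edges for the vertices involved. Since $c_i = O(k/\phi)$ bounds the number of cancellations per level and there are $k$ levels, the per-update cost is $\tilde{O}(k^2/\phi)$ in tree operations, comfortably inside the claimed $\tilde{O}(k^4/\phi)$ budget. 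When a source $v \in S_i$ loses more than $\gamma^s_i\deg(v)$ units of outgoing flow, it can no longer act as a source for its certificate and is moved to $S_0$; correspondingly, its role as a sink for outer levels $i' > i$ must also be revoked by backtracking the at most $\sum_{i' > i}\gamma^{t}_{i'}\deg(v) = O(k\deg(v))$ units of flow routed into $v$ by those levels.

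The main obstacle is bounding this cascading move-to-$S_0$ work both per update and in aggregate. Per update I use that the deleted edge itself carries at most $\sum_i c_i = \tilde{O}(k^2/\phi)$ total flow across all levels, so the direct backtracking is bounded; the only additional cost per moved vertex $v$ is the $O(k\deg(v))$ cross-level unbinding, which is covered because moving $v$ to $S_0$ requires that at least $\gamma^s_i\deg(v) = \Omega(\deg(v))$ units were canceled on $v$ already, giving a $k$-to-$1$ charging ratio and the worst-case $\tilde{O}(k^4/\phi)$ bound once all polylog factors are absorbed. For the amortized total $\tilde{O}(\vol_G(S)/\phi + k^4 D/\phi)$, every unit of canceled flow is charged either to a deleted edge ($D$ of them, each contributing $\tilde{O}(k^4/\phi)$), to a vertex newly placed into $S_0$ (whose volume is subsumed by $\vol_G(S)$), or to the initial flow computation of \textsc{ReInitialize} (bounded by the Dinitz analysis above), and no unit is double-counted. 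Combining these accounting arguments with the correctness from \Cref{lem:flow_cert_guarantees_exp} completes the proof.
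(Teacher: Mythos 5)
Your high-level plan matches the paper's: maintain a per-level flow $\ff_i$ on $\widehat{G}[V \setminus \bigcup_{j < i} S_j]$, compose the level certificates via Observation~\ref{fct:flow_cert_comp}, invoke Lemma~\ref{lem:flow_cert_guarantees_exp} for the expansion guarantee, and backtrack flows on link-cut trees during $\textsc{RemoveEdge}$. However, your parametrization $\gamma^s_i = \Theta(k-i+1)$, $\gamma^t_i = \Theta(1)$, $c_i = \Theta(k/\phi)$ is inconsistent with the composition and would make the scheme fail. In Observation~\ref{fct:flow_cert_comp} the \emph{inner} certificate — which here is the one with \emph{larger} $i$, since $\ff_i$ lives on the graph after $S_1,\dots,S_{i-1}$ are removed — must have source parameter at least the outer source plus the outer sink; iterating, $\gamma^s_i \ge \gamma^s_1 + \sum_{j<i}\gamma^t_j$. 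So the source parameter must \emph{increase} with $i$, not decrease. The underlying reason is concrete: a vertex $v \in S_i$ can act as a sink for each $\ff_j$ with $j < i$ (those flows are computed on graphs that still contain $S_i$), so $\ff_i$ must send enough out of $v$ to overcome this cumulative inflow; the paper uses $\gamma^s_i = 10ik + i + 1$. Your statement that $v$'s ``role as a sink for outer levels $i' > i$'' must be revoked has the direction reversed: $\ff_{i'}$ for $i' > i$ is supported on a graph from which $S_i$ has already been deleted and cannot route into $v$; it is levels $i' < i$ that must be unbound before $v$ is moved to $S_0$.

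Relatedly, $\gamma^t_i = \Theta(1)$ is too small to even make the level-$i$ flow routable once $\gamma^s_i$ is corrected to $\Theta(ik)$: the precondition only guarantees $\vol_G(V \setminus \cdots) \ge 20k \cdot \vol_G(S_i)$, so sink capacity $\Theta(1)\cdot 20k\,\vol_G(S_i)$ cannot absorb $\Theta(ik)\,\vol_G(S_i)$ source; the paper uses $\gamma^t_i = 10k-1$. With the corrected parameters the composite certificate is $(\Theta(k), \Theta(k^2), \Theta(k^4/\phi))$, giving the claimed $\Omega(\phi/k^4)$ expansion via Lemma~\ref{lem:flow_cert_guarantees_exp} and the claimed $\tilde{O}(k^4/\phi)$ worst-case per-update cost; your $\tilde{O}(k^2/\phi)$ figure is a by-product of the underestimated edge capacity. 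The remaining ideas — cancelling cycles so the forest representation is well-defined, tolerating $\deg_G(v)$ erroneous backtracks per level (paid for by the slack ``$+\,i+1$'' in the source and the counter $\qq_i$ reaching $(i+1)\dd(r)$), and recursively backtracking lower-level inflow before pruning a vertex — are correct in spirit and align with the paper's $\textsc{BatchFlowCert}$, but the parameter direction error is a genuine gap.
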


\paragraph{Data structure description. } 

We give an overview of the data structure \textsc{BatchFlowCert}() and describe its routines. We refer the reader to the pseudocode for a detailed description of the data structure (\Cref{alg:flow_cert_maint}).  

The data structure maintains a collection of sets $S_0, \ldots, S_k$, subsets $\widehat{S}_1, \ldots, \widehat{S}_k$ with $\widehat{S}_i \subseteq S_i$ that are initially empty, and a decremental graph $\widehat{G}$. The sets $S_l, \ldots, S_k$ can be updated with the routine $\textsc{ReInitialize}(\{S'_l, \ldots S'_k\})$ if the sets $\widehat{S}_i = \emptyset$ for $i = l, \ldots, k$ are currently empty. At that point, $S_i \gets S'_i$ and $\widehat{S}_i \gets S'_i \setminus S_0$ for all $i = l, \ldots, k$.

\begin{algorithm}
\caption{\textsc{BatchFlowCert}}
\label{alg:flow_cert_maint}
\SetKwFunction{algo}{\textsc{ReInitialize}}\SetKwFunction{proc}{\textsc{RemoveFlow}}
\SetKwFunction{procinit}{\textsc{Initialize}}
\SetKwFunction{procdel}{\textsc{RemoveEdge}}

\SetKwProg{myproc}{Procedure}{}{}

\myproc{\procinit{$G = (V,E), \phi, k$}}{
    \For{$i = 1, \ldots k$}{
        $S_i \gets \emptyset$;
        $\ff_i \gets \veczero$;
    }
    $\widehat{G} = (\widehat{V}, \widehat{E}) \gets G$; $S_0 \gets \emptyset$; $\dd \gets \deg_G$ \\
}

\myproc{\algo{$\{S'_l, \ldots S'_k \}$}}{
    \label{alg:flow_cert_maint:init}
    \For{$i = l, \ldots k$}{
        $S_i \gets S'_i$; $\widehat{S}_i \gets S'_i \setminus S_0$\\
        $\ff_i \leftarrow \textsc{Dinitz}(\widehat{G}\left[V \setminus \bigcup_{j = 0}^{i - 1} S_j \right], \frac{8000 k^3}{\phi}), \ss_i = (10i \cdot k + i + 1) \cdot \dd[S_i \setminus S_0], \tt_i = (10 \cdot k - 1) \cdot \dd \left[V \setminus \bigcup_{j = 0}^{i} S_j  \right], h = \frac{200}{\phi} \log_2 n)$ \label{alg:flow_cert_maint:dinitz} \tcp{Compute Maximum Flow on Expander} 
        Compute a path-cycle decomposition of $\ff_i$, and update $\ff_i$ by removing all cycles. \\
        $\dir{G}_i = (\dir{V}_i = V \setminus \bigcup_{j = 0}^{i - 1} S_j, \dir{E}_i = \supp(\ff_i))$ \\
    $\mathcal{D}_i \gets \textsc{DynTree}.\textsc{Initialize}(\ff_i)$; $\qq_j \gets \veczero_{S_j}$ \tcp*{Initialize empty tree.}  
        
    }
}
\myproc{\proc{$e = (u,v), i$}}{
    \label{alg:flow_cert_maint:del}
    \If{$(u,v) \in \dir{E}_i$ or $(v, u) \in \dir{E}_i$}{
        Assume wlog that $e = (u,v) \in \dir{E}_i$. \\
            \If{$\mathcal{D}_i.\textsc{ReadCurrentFlow}(e) > 0$}{
                \If{$\mathcal{D}_i.\textsc{ReadCurrentFlow}(\mathcal{D}_i.\textsc{FindMin}(v)) = 0$}{
                $e' = (x,y) \gets \mathcal{D}_i.\textsc{FindMin}(v)$ \\
                $\mathcal{D}_i.\textsc{Delete}(e')$;
                $\dir{E}_i \gets \dir{E}_i \setminus \{e'\}$
                }
                $\mathcal{D}_i.\textsc{UpdateFlow}(v, - 1)$\\
                $r \gets \mathcal{D}_i.\textsc{FindRoot}(v)$\label{alg:flow_cert_maint:root} \\ 
                \If{There exists $(x, r)$ in $\dir{E}_i$}{ \label{alg:flow_cert_maint:link}
                    $\mathcal{D}_i.\textsc{Insert}(x,r)$ \\
                    \If{$r \in S_i$}{$\qq_i(r) \gets \qq_i(r) + 1$}
                }\ElseIf{$r \in S_i$}{
                {
                    \If{there exists $(x, r)$ in $\dir{E}_j$ for some $j < i$}{
                       \If{$\mathcal{D}_j.\textsc{ReadCurrentFlow}(x,r) = 0$}{
                            \tcp{This edge actually carries no flow}
                            $\dir{E}_j \gets \dir{E}_j \setminus (x, r)$; 
                            $\qq_i(r) \gets \qq_i(r) + 1$ \\

                       }\Else{
                            \proc{(x, r), j}
                       }
                    }\Else{
                        $\qq_i(r) \gets \qq_i(r) + 1$ \\
                    }
                    \If{$\qq_i(r) > (i + 1) \dd(r)$}{
                        $\widehat{S}_j \gets \widehat{S}_j 
                        \setminus \{r\}$ \label{alg:flow_cert_maint:pruned};
                        $S_0 \gets S_0 \cup \{ r \}$ 
                    }
                }
                }
            }

    }
}
\myproc{\procdel{$e$}}{
        \For{$i = 1, \ldots, k$ and $j = 1, \ldots, 8000 k^3/\phi$}{
            $\proc{e, i}$
        }
    $\widehat{G} \gets \widehat{G} \setminus e$ \\
}
\end{algorithm}

The only other updates to these sets are by the algorithm which removes vertices from sets $\widehat{S}_i$ and adds them to $S_0$. Therefore, the set $S := \bigcup_{i = 0}^k S_i$ is monotonically growing throughout the execution of the algorithm. 

Assuming that $\widehat{G}[V \setminus S]$ remains a $\phi/10$ expander throughout, the data structure guarantees that $\widehat{G}[V \setminus S_0]$ remains a $\Omega(\phi/k^4)$ expander, and that $S_0$ only grows by at most $\tilde{O}(\phi^{-1})$ vertices after every edge deletion to $\widehat{G}$ (our full algorithm will issue $\tilde{O}(\phi^{-1})$ additional deletions per real edge deletion, which explains the extra $\phi^{-1}$ factor in the growth of the pruned set in \Cref{thm:amortized_low_recourse}, the main theorem of this section). To do so, it maintains a flow certificate (See \Cref{def:flow_cert}) $(\bigcup_{i = 1}^k \widehat{S}_i, \ff = \sum_{i = 1}^k \ff_i)$ composed of $k$ flow certificates $(\widehat{S}_i, \ff_i)$ each supported on a respective decremental graph $\widehat{G}_i = (\widehat{V}_i, \widehat{E}_i) := \widehat{G}\left[V \setminus \bigcup_{j = 0}^{i - 1} S_j\right]$.

When layer $i$ gets re-initialized, we compute a flow $\ff_i$ that routes $(10i \cdot k + i + 1)\dd[S_i]$ units of source flow to sinks of capacity $(10k - 1)\dd[\widehat{V}_i \setminus S_i]$ where $\dd := \deg_{G}$ is shorthand for the initial degree vector of the graph $G$. For computing each of these flows, we set the edge capacities to $8000k^3/\phi$.\footnote{The reader might notice that a lower source, sink and edge capacity would suffice for our amortized algorithm presented in this section. We already set these capacities higher by a poly-logarithmic factor $\poly(k) = \polylog m$ in anticipation of \Cref{sec:worst_case_low_rec}.} To facilitate backtracking, we assume that each flow $\ff_i $ is routed on a DAG and otherwise we simply remove cycles by computing a flow decomposition. Then, whenever an edge $e$ carrying flow $\ff_i$ is deleted from $\widehat{G}$, we repeatedly aim to backtrack the flow $\ff_i$ to a true source, i.e. a vertex that does not have any incoming flow and has a leaving flow path that sends flow over edge $e$. Our data structure will sometimes fail to backtrack to such a true source vertex, but every vertex $v$ is only erroneously backtracked to at most $\deg_{G}(v)$ times for each flow $\ff_i$. To achieve this behaviour, we maintain a link-cut tree per flow $\ff_i$. If edge $(u, v)$ carries flow from $u$ to $v$, we look for the edge $e'$ from $u$ to the root carrying the minimum amount of flow (break ties by taking the edge closest to $u$). If $e'$ carries $0$ flow we remove it from the tree. Then, we remove $-1$ flow from the path of $u$ to its (new) root $r$. Next, we check if $r$ has an incoming edge carrying flow in $\ff_i$. If so, we add this edge to the link cut tree (this case corresponds to erroneously backtracking) and increment a counter $\qq_i(r)$ if the vertex $r$ is in $\widehat{S}_i$.

Otherwise, we have found a true source $r$ with respect to the flow $\ff_i$. If $r \in \widehat{S}_i$, we check if $r$ has incoming flow from a flow $\ff_j$ with $j < i$ by going through all the incident edges and flows in a fixed but arbitrary order (evaluating one edge and flow combination every time $r$ is found as a true source). If so, we recursively backtrack $\ff_j$. Otherwise, we increase $\qq_i(r)$ and rule out one combination of edges and flow. If $\qq_i(r) > (1 + i)\dd(r)$, the vertex $r$ gets pruned, that means it's added to $S_0$ and therefore removed from $\widehat{S}_i$. Notice that $\qq_i(r) > (1 + i)\dd(r)$ ensures that we attempted to backtrack every possible edge and flow combination for vertex $r$. 

To summarize, the flow maintenance data structure consists of an initialization routine, a re-initialization routine that re-initializes the sets $S_l, \ldots, S_k$ and a routine for removing edges, which first internally removes all the flow using the procedure described above, and then deletes the edge. The pseudocode in \Cref{alg:flow_cert_maint} contains a more detailed description of the algorithm.

\subsection{Analysis of Batch Flow Maintenance}

In this section, we clarify some details of the flow maintenance data structure and we prove \Cref{thm:flow_cert_maint}. We first show that the call to the local blocking flow algorithm $\textsc{Dinitz}()$ computes an exact maximum flow in \Cref{alg:flow_cert_maint:dinitz} of \Cref{alg:flow_cert_maint}  because routing flow on an expander requires significantly less calls to a blocking flow routine than routing on a general graph. We remark that this routine is the only step in the algorithm presented in this section that is amortized, and we de-amortize this final ingredient in \Cref{sec:worst_case_low_rec}.  

\begin{claim}
    \label{clm:exact_max_flow}
    Given the settings of \Cref{thm:flow_cert_maint}, the blocking flow algorithm of Dinitz in \Cref{alg:flow_cert_maint:dinitz} of \Cref{alg:flow_cert_maint} computes an exact maximum flow routing the demands in time $\tilde{O}(\vol_G(S_i)/\phi)$.
\end{claim}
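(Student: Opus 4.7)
The plan is to exploit the fact that the graph
\(\widehat{G}_i \defeq \widehat{G}[V \setminus \bigcup_{j=0}^{i-1} S_j]\)
is a $\phi/10$-expander by the preconditions of \Cref{thm:flow_cert_maint}, and that the source, sink, and edge capacities set up at \Cref{alg:flow_cert_maint:dinitz} of \Cref{alg:flow_cert_maint} are chosen much larger than what is needed to route flow on an expander. Combined with \Cref{fct:dinitz}, this should force Dinitz to fully route every source after $h = \frac{200 \log_2 n}{\phi}$ iterations, and the runtime bound will then follow from \Cref{fct:dinitz} applied with $\lambda = 1$.

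For correctness I would argue by contradiction: suppose some vertex $u \in S_i \setminus S_0$ still has positive excess after $h$ iterations. By \Cref{fct:dinitz}, no vertex with remaining sink slack can lie within residual distance $h$ of $u$. Hence the ball $L$ of radius $h$ around $u$ in $(\widehat{G}_i)_{\ff_i}$ has the property that every sink in $L$ is saturated at its full capacity $(10k-1)\deg_G(v)$, and every edge of $\widehat{G}_i$ crossing $\partial_{\widehat{G}_i}(L)$ must carry the full $c = 8000k^3/\phi$ units out of $L$. Writing $V_i = V \setminus \bigcup_{j=0}^{i-1} S_j$, flow conservation on $L$ gives
\[
(10ik+i+1)\vol_G(L \cap S_i) \;\geq\; (10k-1)\vol_G(L \setminus S_i) + c \cdot |\partial_{\widehat{G}_i}(L)|.
\]

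I then split into two cases. If $L$ is the smaller side of the cut, the decremental expansion of $\widehat{G}_i$ yields $|\partial_{\widehat{G}_i}(L)| \geq (\phi/10)\vol_G(L)$, so the net out-flow alone is at least $800 k^3 \vol_G(L)$, dominating the at most $(10k^2+k+1)\vol_G(L)$ source available in $L$ for any $k \geq 1$, a contradiction. Otherwise $L$ is the larger side, and I use the precondition $\vol_G(S_i) \leq \vol_G(V_i)/20k$ to show that the saturated sinks in $L$ alone absorb at least $(10k-1)\bigl(\vol_G(V_i)/2 - \vol_G(V_i)/20k\bigr) = \Theta(k \cdot \vol_G(V_i))$ units of flow, which strictly exceeds the entire source budget $(10ik+i+1)\vol_G(S_i) \leq (10k^2+k+1)\vol_G(V_i)/20k$, again a contradiction since total absorption never exceeds total source.

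Finally, the runtime is immediate: $\tt_i \geq \deg_G$ on its support, so I invoke \Cref{fct:dinitz} with $\lambda = 1$ to obtain runtime $O(h \cdot \norm{\ss_i}_1) = O\!\bigl((\log_2 n/\phi) \cdot k^2 \vol_G(S_i)\bigr) = \tilde{O}(\vol_G(S_i)/\phi)$. The main obstacle I anticipate is handling the ``large $L$'' case cleanly: expansion by itself gives no useful bound there, and one must combine it with the precondition controlling $\vol_G(S_i)$ to conclude that the sinks inside $L$ demand strictly more absorption than the whole flow instance can supply.
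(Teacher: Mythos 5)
Your proposal diverges from the paper's proof in a meaningful way, and unfortunately the divergence introduces a gap. The paper uses an \emph{iterative} ball-growth (level-set) argument: it considers the sequence of sets $S_{\leq 0}, S_{\leq 1}, \ldots$ of vertices within residual distance $j$ of an excess vertex, and shows that because at most a constant fraction of the $\geq (\phi/10)\vol_G(S_{\leq j})$ boundary edges can be saturated (pass-through accounts for at most half, source-generated out-flow for a small fraction since $c = 8000k^3/\phi$ vastly exceeds the per-vertex source density), a constant fraction of the boundary edges are residual. This gives the multiplicative growth $\vol_G(S_{\leq j+1}) \geq (1+\phi/24)\vol_G(S_{\leq j})$, which after $h = \Theta(\log n/\phi)$ levels forces the ball to engulf the whole graph, at which point the saturated-sink argument (using $\vol_G(S_i) \leq \vol_G(V_i)/20k$) yields the contradiction.

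You instead try to apply a single flow-conservation step to the ball $L = B_h(u)$, but the key intermediate claim --- ``every edge of $\widehat{G}_i$ crossing $\partial_{\widehat{G}_i}(L)$ must carry the full $c$ units out of $L$'' --- is false. \Cref{fct:dinitz} only guarantees that vertices at residual distance $< h$ from $u$ have no sink slack; it says nothing about edges leaving vertices in the \emph{last} layer of the ball. Concretely, an edge $(x,y) \in \widehat{E}$ with $x$ at distance exactly $h$ (or $h-1$, if you shrink the radius to make the sink claim clean) can have a residual arc $(x,y)$ pointing to a vertex $y$ at distance $h+1$ (respectively $h$), which lies outside $L$. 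Such an edge is neither saturated out of $L$ nor does it carry zero flow in. Consequently the inequality
\[
(10ik+i+1)\vol_G(L \cap S_i) \;\geq\; (10k-1)\vol_G(L \setminus S_i) + c \cdot |\partial_{\widehat{G}_i}(L)|
\]
does not follow from conservation, and the contradictions you derive from it in both cases evaporate. (If you instead take $L$ to be the \emph{full} residual-reachable set, the boundary-edge claim becomes true, but then the saturated-sink claim fails: $L$ may contain slack sinks at distance $\geq h$ from $u$, and then $\text{absorbed}(L) < \tt(L)$, which destroys the ``large $L$'' branch of your argument.) There is no single choice of $L$ for which both of your key claims hold simultaneously; this is exactly why the paper needs the layer-by-layer growth argument rather than a one-shot conservation bound. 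Your runtime analysis, on the other hand, is fine and matches the paper's.
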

\begin{proof}
    Recall that the distance from every vertex in $S_i$ to a non-saturated source is at least $h$ after running blocking flow for $h$ iterations. We let $S_{\leq i}$ denote the set of vertices that are reachable with a $i$ hop path from a vertex with excess source on it. Consider the number of edges in the cut $(S_{\leq i}, V \setminus S_{\leq i})$ in the residual graph. Because of the expansion of $\widehat{G}_i$ being $\phi/10$, there are $\frac{\phi}{10} \vol_G(S_{\leq i})$ edges in this cut except for the ones that are saturated. They can be saturated for two reasons: 1) flow coming into $S_{\leq i}$ and leaving again or 2) flow emitted by sources in $S_{\leq i}$. 1) can reduce the number of edges by at most half, and since out edges are saturated 2) can reduce the amount of edges by at most their capacity divided by the total source. Since the total source is at most $(10 k^2 + k + 1)\vol(S_{\leq i})$ but the total capacity is at least $800 k^3 \vol(S_{\leq i})$ at most say $1/4$ of the out edges can be saturated for this reason. Therefore there are at least $\phi \vol_G(S_{\leq i})/24$ edges in the cut, and therefore $\vol_G(S_{\leq i + 1}) \geq (1 + \frac{\phi}{24})\vol_G(S_{\leq i})$. But this means that the volume of $S_{\leq h}$ exceeds the volume of the graph. Notice that this means that the demands are routed, by the bound on the volume of $S_i$ and thus the total amount of source. 

    The runtime follows from \Cref{fct:dinitz}.
\end{proof}

We then prove the simple observation that the sinks remain bounded per flow $\ff_i$.

\begin{claim}
    \label{clm:sink_inc}
    For $i = 1, \ldots, k$ every vertex $v \in V \setminus \bigcup_{j = 1}^i S_j$ initially absorbs $(10 \cdot k - 1) \deg_{G}(v) = (10 \cdot k - 1) \dd(v)$ flow from $\ff_i$. Then, this absorption increases by at most $\deg_{G}(v)$ until $\ff_i$ gets re-initialized. 
\end{claim}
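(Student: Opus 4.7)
I would split the argument into the initial absorption bound (right after \textsc{ReInitialize} computes $\ff_i$) and a tracking argument that bounds the cumulative increase through the sequence of \textsc{RemoveFlow} calls.

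\textbf{Initial bound.} By Claim~\ref{clm:exact_max_flow}, the call to \textsc{Dinitz} at \Cref{alg:flow_cert_maint:dinitz} returns a feasible flow that exactly routes the prescribed demand, so $\ff_i$ respects the sink capacities $\tt_i = (10\cdot k-1)\dd[V \setminus \bigcup_{j=0}^i S_j]$. Feasibility immediately gives absorption at most $(10\cdot k-1)\dd(v)$ at every sink $v \in V \setminus \bigcup_{j=0}^i S_j$; vertices $v \in S_0$ are not present in the host graph $\widehat{G}[V \setminus \bigcup_{j=0}^{i-1}S_j]$ and so absorb $0$, which suffices for them. Together these cover all $v \in V \setminus \bigcup_{j=1}^i S_j$.

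\textbf{Effect of an \textsc{UpdateFlow} step.} Between re-initializations, $\ff_i$ is modified only by \textsc{UpdateFlow}$(v',-1)$ calls issued from \textsc{RemoveFlow}. Such a call subtracts $1$ from the flow on every edge of the current root-to-$v'$ tree path. A case analysis over the vertices on the path shows that $\mathrm{inflow}(w)-\mathrm{outflow}(w)$ changes by $+1$ if $w$ is the root, $-1$ if $w=v'$, and $0$ on intermediate vertices. Hence the absorption at $v$ can grow only through ``root events'' at $v$, i.e., \textsc{UpdateFlow} calls whose non-empty path is rooted at $v$, and each such event contributes exactly one unit.

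\textbf{Counting root events at $v$.} Every root event at $v$ is immediately followed by a search for an edge $(x,v) \in \dir{E}_i$ and, if one is found, by inserting it into the link-cut tree, after which $v$ is no longer a root. I plan to argue that at most $\deg_G(v)$ such insertions can ever occur during $\ff_i$'s lifetime: $\dir{E}_i$ only shrinks, a tree edge is removed only via the \textsc{FindMin}+\textsc{Delete} pattern which simultaneously removes it from $\dir{E}_i$, and initially there are at most $\deg_G(v)$ edges of $\dir{E}_i$ with head $v$. To close the argument I need to rule out root events in which no incoming edge is found; I plan to maintain the invariant that for every $v \notin \widehat{S}_i$ we have $\mathrm{inflow}(v) \ge \mathrm{outflow}(v) \ge 0$ in the current $\ff_i$. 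A non-empty-path root event at $v$ requires a tree child $(v,c)$ with positive flow, hence $\mathrm{outflow}(v)>0$; the invariant then forces $\mathrm{inflow}(v)>0$, which means some $(x,v) \in \supp(\ff_i) = \dir{E}_i$ exists, and the insertion branch must fire.

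\textbf{Main obstacle.} The delicate step is the inductive verification of the non-negativity invariant through each sub-operation inside \textsc{RemoveFlow}: the \textsc{FindMin}+\textsc{Delete} cleanup only drops zero-flow edges and therefore leaves flow values untouched; \textsc{UpdateFlow} with destination $v$ reduces absorption by $1$, so I need to check it only triggers when the last tree edge into $v$ still carries positive flow (forcing absorption $\ge 1$ just before); and the recursive \textsc{RemoveFlow} calls at layers $j<i$ modify $\ff_j$ but not $\ff_i$, so they leave the invariant for $\ff_i$ intact. Combining the invariant with the per-vertex cap $\deg_G(v)$ on the number of root events yields the desired statement: the absorption at $v$ grows by at most $\deg_G(v) = \dd(v)$ before the next re-initialization of $\ff_i$.
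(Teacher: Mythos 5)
Your overall strategy matches the paper's: bound the initial absorption via Claim~\ref{clm:exact_max_flow}, then observe that the absorption at $v$ can only increase during root events at $v$, tie those to the link condition at Line~\ref{alg:flow_cert_maint:link}, and count that each in-edge of $v$ can be inserted at most once, giving at most $\deg_G(v)$ increases. The paper's own proof compresses this to a one-line assertion ("the only way for a vertex $v$ to absorb one more flow from $\ff_i$ is if the condition... evaluates to true"), and you are right that it needs justification.

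However, the invariant you build the argument on --- $\mathrm{inflow}(v) \ge \mathrm{outflow}(v) \ge 0$ for every $v \notin \widehat{S}_i$ --- is \emph{not} preserved. When $v$ is the head of the currently-deleted edge and the root-to-$v$ path is non-empty, $\textsc{UpdateFlow}(v,-1)$ decrements the last tree edge into $v$, so $\mathrm{inflow}(v)$ drops by one while $\mathrm{outflow}(v)$ is untouched; whenever $\mathrm{inflow}(v) = \mathrm{outflow}(v)$ this drives the net absorption negative. Your parenthetical remedy --- that positive flow on the last tree edge "forces absorption $\ge 1$ just before" --- does not follow: a positive tree edge gives only $\mathrm{inflow}(v) \ge 1$, not $\mathrm{inflow}(v) - \mathrm{outflow}(v) \ge 1$. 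The claim is still salvageable, but with a weaker statement: if the link condition fails for a root $v \notin \widehat{S}_i$, every in-edge of $v$ has already been removed from $\dir{E}_i$ and hence carries zero flow, so $\mathrm{inflow}(v) = 0$ while $\mathrm{outflow}(v) \ge 1$ (the non-empty path requires a tree child with positive flow); thus the net absorption at $v$ is $\le -1$ just before such a ``free'' root event and $\le 0$ just after, strictly below the initial value. Replacing the per-step invariant with the potential inequality $(\text{current absorption at } v) \le (\text{initial absorption at } v) + (\text{number of link insertions at } v \text{ so far})$ --- which is straightforward to verify across all three event types (insertion root event, free root event, endpoint event) --- closes the argument and yields the stated $\deg_G(v)$ bound.
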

\begin{proof}
    The absorption per flow is bounded by $(10 \cdot k - 1) \cdot \deg_{G}(v)$ initally by \Cref{clm:exact_max_flow} and the parameters of the flow problem in \Cref{alg:flow_cert_maint:dinitz} of \Cref{alg:flow_cert_maint}. 

    Then, by the definition of our algorithm the only way for a vertex $v$ to absorb one more flow from $\ff_i$ is if the condition in  \Cref{alg:flow_cert_maint:link} evaluates to true during a call to $\textsc{RemoveFLow}(\cdot, i)$. But this condition can evaluate to true at most once per edge incident to vertex $v$, and therefore the claim follows.
\end{proof}

We then prove that the data structure maintains flow certificates as required. 

\begin{invariant}
    \label{lem:cert_maint}
     Given the settings of \Cref{thm:flow_cert_maint} and $\widehat{G}_i := \widehat{G}[V \setminus \bigcup_{j = 0}^{i - 1} S_j]$ the data structure \Cref{alg:flow_cert_maint} maintains that $\left(S' = \bigcup_{i = 1}^k \widehat{S}_i, \ff\right)$ remains a $(10 \cdot k, 10 \cdot k^2, 8000 k^4/\phi)$ flow certificate on $\widehat{G}_1$.
\end{invariant}
\begin{proof} 
    We show congestion, sink capacity upper bounds and source capacity lower bounds separately. 
    \begin{itemize}
        \item \underline{Congestion:} The congestion bound of $8000 k^4/\phi$ for $\ff$ follows directly because every individual flow $\ff_i$ never congests an edge by more than $8000 k^3/\phi$. This can be seen from the fact that the initial threshold upon re-initialization is $8000 k^3/\phi$ and flows only decrease the amount they route thereafter until they get re-initialized again by the definition of our algorithm. 
        \item \underline{Sink capacity upper bound:} Every vertex $v$ absorbs at most $10k \cdot \deg_{G}(v)$ flow from each flow $\ff_i$ throughout by \Cref{clm:sink_inc}. Therefore, the bound of $10 \cdot k^2$ on the sinks follows directly.
        \item \underline{Source capacity lower bound: } We show that every vertex $v \in S'$ routes out at least $10k \cdot \deg_{G}$ flow in $\ff$. We proceed by induction and assume that vertices in $v$ in $\widehat{S}_1, \ldots, \widehat{S}_{j - 1}$ route out at least $10 \cdot \deg_{G}(v)$. We then consider the set $\widehat{S}_j$ directly after re-initialization. Again, by \Cref{clm:sink_inc} the vertices $u \in S_j$ absorb at most $(j - 1) 10 k \cdot \deg_{G}(u)$ flow from the flow out of the sets $\widehat{S}_1, \ldots, \widehat{S}_{j - 1}$ throughout. The flow $\ff_j$ initially routes out $(10jk + j  + 1) \cdot \deg_{G}(u)$ flow by the definition of the flow problem in \Cref{alg:flow_backtracking:dinitz} of \Cref{alg:flow_backtracking}. Therefore, every vertex in $S_j$ routes out $(10k + j + 1) \cdot \deg_{G}(u)$ flow that is not cancelled by flow certificates of lower index. 
        We note that when the flows/sets at levels $i < j$ get re-initialized, so does level $j$ by the description of our algorithm. Therefore, all the flows $\ff_i$ for such level only decrease in support until layer $j$ gets re-built. 

        We now show that whenever a vertex $v \in \widehat{S}_j$ routes out one less flow, the counter $\qq_j(v)$ is incremented. Every vertex $v \in \widehat{S}_j$ can enter the if condition at \Cref{alg:flow_cert_maint:link} at most $\deg_{G}(v)$ times, and therefore such re-links can contribute at most $\deg_{G}(v)$ to $\qq_j(v)$. Clearly, each such re-link causes the vertex to route out one less flow. Thereafter, the algorithm adds $1$ to $\qq_j(v)$ whenever $v$ is backtracked to as a true source of $\ff_j$, and no flow into $v$ coming from $\ff_i$ for $i < j$ can be removed (even though there might be some flow coming in). Notice that no flows $\ff_l$ for $l > j$ can be adjacent to $v$ since they are computed on a graph that does not include the vertices in $\widehat{S}_j$. In that case $\qq_j(v)$ also gets incremented. 

        Whenever $\qq_j(v)$ has been increased by another $(j - 1) \cdot  \deg_{G}(v)$ there cannot be any edge $(x, v)$ in some $\dir{E}_i$ for $i < j$ since one such edge was removed every time $\qq_j(v)$ was increased by one, and the total number of such edges is $j \cdot \deg_{G}(v)$. Therefore, whenever a vertex is pruned due to $\qq_j(r)$ reaching $(j + 1) \cdot \deg_{G}(r)$ it is a true source of the flow $\ff$, and in fact has already lost $\deg_{G{(0)}}$ out-flow as a true source. 

        Finally, since the algorithm adds $1$ to $\qq_j(v)$ whenever it routes out one less source, and it routes out at least $(10k + j + 1) \cdot \deg_{G}(u)$ flow initially, every non-pruned vertex still routes out at least $10k \cdot \deg_{G}(v)$ flow  since every vertex for which $\qq_j$ reaches $(j + 1)\cdot \deg_{G}(r)$ is pruned. 
    \end{itemize}
    This concludes the proof.     
\end{proof}

We finally prove the main theorem which follows rather directly. 

\begin{proof}[Proof of \Cref{thm:flow_cert_maint}]
    The expansion guarantee of $\widehat{G}[V \setminus S_0]$ follows from \Cref{lem:cert_maint} and \Cref{clm:exact_max_flow}. 

    The runtime guarantee for the re-initialization follows from \Cref{clm:exact_max_flow} and the fact that a flow decomposition can be done in time proportional to the total flow weight. 
    
    The runtime guarantee for the edge deletions follows from \Cref{algo:LCT} and the fact that every call to $\textsc{DeleteEdge}()$ causes at most $O(k^4/\phi)$ calls to dynamic tree data structures. 
\end{proof}

\subsection{A Low Recourse Algorithm for Expander Pruning}

In this section, we use the flow certificates developed in the previous section to show that the algorithm presented in \Cref{sec:amortized} can be turned into a low recourse pruning algorithm.  

\paragraph{Overview.} We recall the batching scheme presented in \Cref{sec:batching} that turned \Cref{alg:batch_pruning} into a dynamic algorithm for expander pruning with low amortized recourse. It maintains all the edge deletions to $G$ that occurred so far in batches $B_1, B_2, \ldots, B_k$ such that batch $B_i$ either contains $2^{k - i}$, $2^{k - i - 1}$, or $0$ edges (we call such a batch full, half-empty and empty respectively).\footnote{Here we technically only describe the set of deletions $B_i$. It additionally considers some pruned sets $A_i$ to make sure that the pruned set only grows over time. We can safely ignore this technicality for the discussion in this section, but the full algorithm requires these additional deletions to ensure the monotonicity of the pruned set. Our worst-case recourse and update time algorithm again discusses this technicality in detail.} Then, whenever a new deletion $e$ occurs, the algorithm finds the empty or half-full batch with highest index, e.g. $B_i$ and lets $B_i \gets \bigcup_{j > i} B_i \cup \{e\}$. Finally, it sets $B_j \gets 0$ for $j > i$. Re-starting \Cref{alg:batch_pruning} at index $i$ then yields new sets $S_i, \ldots, S_{\lambda = k + \frac{1}{\phi} + 1}$ such that $G[V \setminus \bigcup_{i = 1}^{\lambda} S_i]$ remains a $\phi/10$-expander. 

We first notice that whenever layer $i$ gets rebuilt in the amortized pruning algorithm presented in \Cref{sec:batching}, it takes another $2^{k - i}$ deletions before any layer $i' \leq i$ gets re-built. Therefore, we aim to spread out the removal of $S_i$ over these deletions. Since $\vol_G(S_i) \leq O(\lambda \cdot \log(n) \cdot 2^{k - i}/\phi)$ by \Cref{clm:source_reduction}, removing $O(\lambda \log(n)/\phi)$ vertices per deletion should suffice. The tricky part is to remove vertices such that the intermediate graphs are $\tilde{\Omega}(\phi)$ expanders too. As the reader might anticipate, this is where the batch flow certificates introduced in this section come into play.

\paragraph{Delayed pruning algorithm. } Throughout the execution, our dynamic algorithm additionally maintains a batch flow certificate $\mathcal{D} \gets \textsc{BatchFlowCert}()$ (\Cref{alg:flow_cert_maint}) that gets updated whenever the algorithm presented in \Cref{sec:amortized} updates the sets $S_i, \ldots, S_k$, where for simplicity we let the set $S_k$ additionally contain all the vertices pruned by layers of larger index than $k$, i.e. the contents of the sets $S_{k + 1}, \ldots, S_{\lambda}$.

Then, we let the maintained decremental pruned expander be $G[V \setminus S_0]$, i.e. the incremental set $S_0$ contains all the vertices that the algorithm decided to prune so far. 

Whenever an edge deletion occurs, for every $i$ we first delete $10^6 \lambda \log_2(n)/ \phi$ arbitrary edges $e$ incident to $S_i$ in the current graph $\widehat{G}[V \setminus \cup_{j <i} S_j]$ using $\mathcal{D}$, as long as such edges still exist, i.e. we delete some volume of edges of each set $S_i$ that \Cref{alg:batch_pruning} would have already considered as pruned. \footnote{In the overview, we described removing the boundary edges instead of the full volume. It is necessary to remove the whole volume instead when using blocking flow.} Then, whenever a rebuild of \Cref{alg:batch_pruning} happens, we re-initialize all the re-built sets in $\mathcal{D}$, where we set $S'_i \gets S_i$ and $S'_k \gets \bigcup_{j \geq k} S_j$, i.e. the last set includes the extra pruning due to empty batches. Finally, we forward the actually deleted edge to $\mathcal{D}$ and remove it from the graph. 

\paragraph{Proof of Correctness. }

We first formalize the notion of a rebuild. 

\begin{definition}[Rebuild] 
    \label{def:rebuild}
    We say a level $i$ is rebuilt, if the for-loop at \Cref{alg:batch_pruning:main_loop} is run with index $i$. In particular, whenever level $i$ is rebuilt, so are levels $i + 1, \ldots, k + 10$. 
\end{definition}

We prove the central theorem of this section. 
\lowRecourseAmortTime*

\begin{proof}
    We first show that the algorithm calls the data structure $\mathcal{D}$ in a way that conforms with \Cref{thm:flow_cert_maint}. The constraints on the volume of $S$ is satisfied by the bound on the total number of deletions and  \Cref{clm:volume}. Furthermore, we have that $\vol_{\widehat{G}_i}(S_i) \leq \vol_G(S_i) \leq 2 \cdot (64 \log_2 n) 100 \cdot \lambda \cdot 2^{k - i}/\phi$ for all $i$ by \Cref{clm:source_reduction} and a geometric series for $S_k$ since we add all the vertices pruned by levels $k + 1, \ldots, \lambda$ in \Cref{sec:batching} to $S_k$. Therefore, all these edges are deleted by the time layer $i$ gets rebuilt by \Cref{clm:rebuild_timing} and the description of our algorithm. But, since $\widehat{G}$ remains an expander it certainly cannot contain isolated vertices, and therefore $S_i$ has to be fully contained in $S_0$ at that stage. This, in particular means that any vetex that is in some set $A_i$ is always contained in $S_0$. The requirement on the rebuilt sets being empty is thus satisfied as well. Finally the graph $\widehat{G}[V \setminus S]$ remains a $\phi/10$-expander by \Cref{lem:maintains_exp} since we apply exactly the same batching scheme. 

    Given that the update sequence fulfills these constraints, the theorem directly follows from \Cref{thm:flow_cert_maint} and the description of our algorithm where $k = \ceil{\log_2 m}$ and $\lambda = \tilde{O}(1)$ because $G$ is connected and therefore $\phi \geq 1/m$. 
    The amortized runtime guarantee finally follows from the fact that the meta algorithm from \Cref{sec:amortized} runs in time $\tilde{O}(D/\phi^2)$ by \Cref{clm:runtime} and the extra work introduced by the flow certificate is bounded by $\tilde{O}(D/\phi^2)$ via the bound on the volume of the pruned sets by the meta algorithm by \Cref{clm:volume}, and \Cref{thm:flow_cert_maint}. 
\end{proof} 

\section{Worst-Case Time Low Recourse Expander Pruning}
\label{sec:worst_case_low_rec}

In this section, we show that by careful scheduling, the batching based pruning algorithm developed in the previous sections allows us to obtain the best of both worlds: slow pruning for worst-case recourse with simultaneous worst-case update time guarantees. To do so, we first show that a flow certificate can be initialized on a decremental graph. Since computing a flow certificate was the only step requiring amortization in \Cref{sec:amortized_low_rec}, this then allows us to build an adapted batching scheme with low worst-case time. 

\subsection{Flow Certificate Repairing via Nesting}

Since we intend to use flow certificates in a worst-case decremental environment, it is only natural to spread the initialization of a flow certificate over $\tilde{\Omega}(\vol_G(S))$ deletions to adequately bound the compute spent after each individual deletion. However, then the underlying graph is undergoing edge (and vertex) deletions while a flow certificate is being initialized. Namely, some large fraction of the edges routing flow in a certificate could be removed from the graph by the time said flow has finished being computed. Fortunately, we can repair a flow certificate with little overhead using the nesting property introduced in \Cref{fct:flow_cert_comp}. Furthermore, the re-initialization is the only operation for which the algorithm presented in \Cref{sec:amortized_low_rec} uses more than $\tilde{O}(1/\phi^2)$ compute steps, and therefore resorts to amortization. 

We first extend the definition of pruned expander graphs to allow measuring with respect to some externally provided degree vector $\dd$ that is an overestimation of the actual degrees. Notice that $\dd \geq \deg_{G}$, and therefore this is a strengthening of the definition of pruned expander graphs (\Cref{def:pruned_exp}).

\begin{definition}
    We call a decremental graph $G = (V, E)$ a $\phi$-(decremental pruned) expander with respect to $\dd \in \R^V$ if $\dd \geq \deg_{G}$ and at any stage for all $S \subset V$ so that $\vol_{\dd}(S) \leq \vol_{\dd}(S \setminus V)$ we have $|E_{G^{(i)}}(S, V \setminus S)| \geq \phi \cdot \vol_{\dd}(S)$ where $\vol_{\dd}(S) := \sum_{v \in S} \dd(v)$.
\end{definition}

In our algorithm, we will always set $\dd$ to the degree vector of the input graph. However, we use the data structure developed in this section to initialize flows on a already partially pruned subgraph of the input graph, and we therefore introduce this extra parameter for convenience. 

\paragraph{Limiting the number of deletions. } In the remainder of this article, we make use of the assumption that the total number of deletions is bounded by $\phi m/\polylog(m)$ for some suitably large poly-logarithmic factor. This ensures that all the sets $S_i$ ever pruned are the smaller side of the corresponding cut in the remaining graph with a large poly-logarithmic gap. Thus, the total source capacity always exceeds the total sink capacity. Although we keep track of this assumption in our statements, the reader might want ignore this technicality when initially reading this section.

\paragraph{Worst-case flow initialization.} We state the main lemma of this subsection. It yields a flow with the  same guarantees as the call to $\textsc{Dinitz}()$ in \Cref{alg:flow_cert_maint:dinitz} of \Cref{alg:flow_cert_maint}, and therefore allows smearing out the computation that $\textsc{Dinitz}()$ performed in \Cref{alg:flow_cert_maint}. This is the only amortized part of \Cref{alg:flow_cert_maint}. 

\begin{lemma}
    \label{lem:init_spread}
    The data structure $\textsc{WorstCaseFlow}(G = (V, E), S, \dd, \phi)$ is initialized with 
    \begin{enumerate}
        \item a $\phi/10$-expander $G$ with respect to $\dd$,
        \item a set $S$ such that $\vol_{\dd}(S) \leq (128 \log_2 n) \cdot 100\lambda  \cdot 2^{k - l} / \phi$,
        \item and then processes batches $\{P_l, \ldots, P_{k} \}$ and $\{U_l, \ldots, U_{k}\}$ of edge and vertex deletions $P_i \subseteq E$ and $U_i \subseteq V$ to $G$ such that
        \begin{itemize}
            \item $G$ remains a $\phi/10$-decremental pruned expander with respect to $\dd$ throughout
            \item and $ \vol_{\dd}(V \setminus S)\geq \vol_{\dd}(S)/(50k^2)$ ($S$ firmly on smaller side of the cut).
        \end{itemize}
        Furthermore, $|P_i| \leq 2^{k - i}$ and $\vol_{\dd}(U_i) \leq 2 \cdot 10^7 \log_2^3 n \cdot 2^{k - i} / \phi$ for $\lambda = k + 1 + \log_2 \phi^{-1} = O(\log m)$ and $k = \ceil{\log_2 m}$.
    \end{enumerate}
    Then, the algorithm is initialized in time $\tilde{O}(k^3 \cdot \vol_{\dd}(S)/\phi)$ and processes each batch $(P_i, U_i)$ in time $\tilde{O}(k^4 \cdot 2^{k - i}/\phi^2)$ and finally returns a flow $\ff$ that routes at least $(10l \cdot k + l + 1) \cdot \dd[S \setminus U]$ source to sinks of capacity at most $(10k - 1) \dd[V \setminus (S \cup U)]$ with edge capacity $8000 k^3/\phi$ where $U : = \bigcup_{i = 1}^k U_i$.
\end{lemma}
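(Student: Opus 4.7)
The plan is to obtain $\ff$ as a sum $\ff = \sum_{t=0}^{k-l+1} \ff^{(t)}$ of nested flows computed over time on the decrementally changing graph, combined via the flow-certificate nesting property (\Cref{fct:flow_cert_comp}). The key difficulty is that we cannot afford a single max-flow recomputation for all of $S$ after each batch; instead, I pre-compute one ``oversized'' flow with substantial source and edge-capacity slack at initialization, and then repair it with one small patch flow per batch to absorb the damage caused by ongoing deletions.

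At initialization, I run Dinitz's blocking flow on the starting graph $G$ to compute a pre-flow $\ff^{(0)}$ routing $\alpha_0 \cdot \dd[S]$ source to sinks of capacity $\beta_0 \cdot \dd[V \setminus S]$ with edge capacity $c_0$, where the parameters $\{(\alpha_t, \beta_t, c_t)\}_{t=0}^{k-l+1}$ are chosen so that $\sum_t \beta_t \leq 10k - 1$, $\sum_t c_t \leq 8000 k^3/\phi$, and the surviving source after all repairs is at least $(10lk + l + 1) \cdot \dd(v)$ for every $v \in S \setminus U$. Since there are only $k - l + 1 = O(\log m)$ patch rounds, I can set each $\alpha_t, \beta_t$ to $\Theta(1)$ and each $c_t$ to $\Theta(k^2/\phi)$ while loading $\alpha_0$ with the main $\Theta(k)$ slack. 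By the expansion-based argument of \Cref{clm:exact_max_flow}, $h = \tilde{O}(1/\phi)$ Dinitz iterations exactly route this source in time $\tilde{O}(k^3 \vol_\dd(S)/\phi)$, matching the initialization bound.

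When batch $(P_i, U_i)$ arrives, I first invoke the link-cut tree backtracking procedure from \Cref{alg:flow_cert_maint} on the current combined flow, removing one unit at a time from every edge in $P_i$ and every edge incident to $U_i$ that still carries flow. Let $S^{(i)} \subseteq S \setminus U$ denote the vertices whose net outgoing source has dropped by at least $\dd(v)$ from its target after this backtracking. Since each removed edge carries at most $8000 k^3/\phi$ flow and each deleted vertex absorbs at most $\sum_t \beta_t \cdot \dd(v) \leq 10k \cdot \dd(v)$ flow, the batch-size bounds $|P_i| \leq 2^{k-i}$ and $\vol_\dd(U_i) \leq \tilde{O}(2^{k-i}/\phi)$ yield $\vol_\dd(S^{(i)}) \leq \tilde{O}(2^{k-i}/\phi)$. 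I then compute a patch flow $\ff^{(i-l+1)}$ on the current graph via Dinitz, routing $\alpha_{i-l+1} \cdot \dd[S^{(i)}]$ source from $S^{(i)}$ to sinks of capacity $\beta_{i-l+1} \cdot \dd[V \setminus (S \cup U)]$ with edge capacity $c_{i-l+1}$. Since $G$ remains a $\phi/10$-expander w.r.t.\ $\dd$ by hypothesis, \Cref{clm:exact_max_flow} gives runtime $\tilde{O}(k^3 \vol_\dd(S^{(i)})/\phi) = \tilde{O}(k^4 2^{k-i}/\phi^2)$, matching the per-batch budget. \Cref{fct:flow_cert_comp} applied inductively certifies that the running sum $\sum_{t \leq i - l + 1} \ff^{(t)}$ is a valid flow certificate on the current graph throughout.

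The main technical obstacle is to simultaneously enforce all three budgets --- source, sink, and edge capacity --- while proving that the backtracked set $S^{(i)}$ is genuinely small enough for the patch flow to fit inside the per-batch time budget. The subtlety is that a single edge deletion can in principle invalidate a large volume of source, so the analysis must carefully distinguish ``flow lost due to edges in $P_i$ or incident to $U_i$'' (bounded by the batch size times current edge capacity) from ``vertices that drop below the source threshold'' (which requires losing at least $\dd(v)$ flow per vertex). After all batches have been processed, the composed flow $\ff$ routes at least $(10lk + l + 1) \cdot \dd[S \setminus U]$ source, absorbs at most $(10k-1) \cdot \dd[V \setminus (S \cup U)]$ sink, and uses edge capacity at most $8000 k^3/\phi$, which matches the statement of the lemma.
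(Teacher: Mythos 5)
Your high-level architecture matches the paper's: compute an initial flow on $S$, decompose subsequent damage into per-batch ``rerun'' sets, compute a fresh flow for each rerun set on the current (smaller) graph, and sum the flows via the nesting in \Cref{fct:flow_cert_comp}. The paper implements this through the $\textsc{FlowBacktracking}$ data structure (\Cref{alg:flow_backtracking}) and \Cref{lem:flow_backtracking}, with \Cref{alg:worst_case_flow_cert} orchestrating a chain $\mathcal{D}_l, \ldots, \mathcal{D}_k$. Your time budgeting, the use of \Cref{clm:exact_max_flow} to route the flow in $\tilde{O}(1/\phi)$ Dinitz rounds, and the accounting of rerun volume are all in the right spirit.

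However, there is a genuine quantitative gap in your parameter choices: you cannot set the patch-source multipliers $\alpha_t$ to $\Theta(1)$ for $t \geq 1$ while keeping the source guarantee. A vertex $v$ enters $S^{(i)}$ when its outflow in the earlier flows has dropped by \emph{at least} $\dd(v)$, but the backtracking does not stop at that point — after $v$ is flagged, the same batch (and future batches) can continue to drain flow off $v$'s outgoing edges, and there is no upper bound of the form ``dropped by exactly $\dd(v)$.'' In the worst case $v$ loses all of its $\alpha_0 \dd(v)$ outflow from $\ff^{(0)}$. At that point $v$'s \emph{entire} outgoing source must come from the patch flow, while $v$ still absorbs up to $\sum_{t' \neq t} \beta_{t'} \dd(v) = \Theta(k) \dd(v)$ as a sink of the other layers. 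For the net outflow to exceed $(10lk + l + 1)\dd(v)$, the patch must therefore route $\Theta(k^2)\dd(v)$ source out of $v$, not $\Theta(1)\dd(v)$. A $\Theta(1)$ patch neither covers the absorbed sink flow nor the target threshold, and iterating small patches over $O(k)$ rounds does not help, because the cumulative loss is not bounded by $O(k)\dd(v)$ — a single vertex can lose $\Theta(\alpha_0)\dd(v)$ in one round if several of its incident flow-carrying edges are deleted. This is precisely why, in \Cref{alg:worst_case_flow_cert}, every call $\mathcal{D}_i.\textsc{Initialize}(\ldots, 10(lk + k) + l + 1)$ uses the \emph{same} large $\theta$: each rerun vertex is ``owned'' by its most recent data structure and must receive the full $\theta \dd(v)$ slack from that layer alone. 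The budgets $\sum_t \beta_t \leq 10k - 1$ and $\sum_t c_t \leq 8000k^3/\phi$ still close because there are only $O(k)$ layers, but each layer must carry the full $\Theta(k^2)$ source factor, which in turn pushes the edge-capacity budget to $\Theta(k^3/\phi)$ — these $k$-powers do not survive your proposed rebalancing. The fix is simply to set $\alpha_t = \theta = 10(lk+k)+l+1$ for every $t$, at which point the time analysis still fits in $\tilde{O}(k^4 2^{k-i}/\phi^2)$ per batch.
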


\paragraph{Algorithm overview. } In the following algorithm description and analysis, we denote with $G^{(i)} = (V^{(i)}, E^{(i)})$ the graph $G[V \setminus \bigcup_{j = l}^i U_j] \setminus \bigcup_{j = l}^i P_j$, i.e. the graph $G$ after removing batches $P_l, \ldots, P_i$ and $U_l, \ldots U_i$, and $G^{(0)} = G$.

The $\textsc{WorstCaseFlow}()$ algorithm (See \Cref{alg:worst_case_flow_cert}) leverages the $\textsc{FlowBacktracking}()$ algorithm (See \Cref{alg:flow_backtracking}) which is very similar to the $\textsc{BatchFlowCert}()$ algorithm developed in \Cref{sec:amortized_low_rec}. 

\paragraph{Flow backtracking. }The flow backtracking algorithm $\textsc{FlowBacktracking}$ (\Cref{alg:flow_backtracking}) is given a graph $G = (V, E)$ alongside a set $S \subseteq V$, a vector $\dd$ that over-estimates the degrees, and parameters $\phi$ and $\theta$. The parameter $\phi$ is used to quantify the expansion, and $\theta$ quantifies the amount of flow we want to route out of the set $S$. 

It initializes a cycle-free flow $\ff$ routing $(11\theta + 1) \dd[S]$ source flow to sinks of capacity $8\dd[V \setminus S]$. Then, it receives batches $(P_i, U_i)$ of vertex and edge deletions. Thereafter, it removes all the flow on deleted edges in $P_i$ and edges incident to deleted vertices in $U_i$ via backtracking as in \Cref{alg:flow_cert_maint}. However, instead of pruning vertices that send out too little flow, it merely returns these vertices by adding them to $S_{\text{rerun}}$. Furthermore, it does not need to worry about interactions between multiple nesting flows in comparison to the more complicated flow-backtracking algorithm presented in \Cref{sec:amortized_low_rec}.

\paragraph{Worst-case flow algorithm.} The $\textsc{WorstCaseFlow}()$ algorithm (See \Cref{alg:worst_case_flow_cert}) first initializes a $\textsc{FlowBacktracking}()$ data structure routing out of the set $S$. It then processes one batch of deletions after another, and initializes an extra $\textsc{FlowBacktracking}()$ data structure for the union of the returned vertex sets after each deletion batch. It makes sure that all $\textsc{FlowBacktracking}()$ data structures route out $10k \dd(v)$ extra flow per soruce $v$ by setting their $\theta$ parameter to $10kl + 10k + l + 1$, and therefore the flows stored in these data structures add up to the desired flow $\ff$ routing out $(10kl + l + 1)\deg(v)$ in the end since the sink capacity of every vertex $v$ per flow is bounded by $9\deg(v)$ and the source is at least $\theta \deg(v)$ for vertices $v \in S cap U$. 

The pseudocodes \Cref{alg:worst_case_flow_cert} and \Cref{alg:flow_backtracking} contain a detailed description of the algorithms.

\begin{algorithm}
\caption{\textsc{WorstCaseFlow}($G^{(l - 1)} = (V^{(l - 1)}, E^{(l - 1)}), S, \dd, \phi, \{(P_l, U_l), \ldots, (P_k, U_k)\}$)}
\label{alg:worst_case_flow_cert}
    $S'_l \gets S$ \\ 
    \For{$i = l, \ldots,  k - 1$}{
        \tcp{Process the next batch}
        $\mathcal{D}_i \gets \textsc{FlowBacktracking}()$ \\
        $\mathcal{D}_i.\textsc{Initialize}(G^{(i - 1)}, S'_i, \dd, \phi, 10(l \cdot k + k) + l + 1)$ \\
        \For{$j = l, \ldots, i$}{
            $S^{j, i}_{\text{rerun}}, \ff_j \gets \mathcal{D}_j.\textsc{RemoveBatch}(P_i, U_i)$ \\
        }
        $S'_{i + 1} \gets V^{(i)} \cap \bigcup_{j = l}^i S^j_{\text{rerun}}$ \tcp*{Non pruned vertices that route out to little}
    }
    $\mathcal{D}_{k} \gets \textsc{FlowBacktracking}()$ \\
    $\ff_{k} \gets \mathcal{D}_{k + 1}.\textsc{Initialize}(G^{(k)}, S'_{k + 1}, \dd, \phi, k )$ \\
    \Return{$\sum_{i = l}^k \ff_i$}
\end{algorithm}

\begin{algorithm}
\caption{\textsc{FlowBacktracking}()}
\label{alg:flow_backtracking}
\SetKwFunction{proc}{\textsc{RemoveFlow}}\SetKwFunction{procbatch}{\textsc{RemoveBatch}}
\SetKwFunction{procinit}{\textsc{Initialize}}

\SetKwProg{myproc}{Procedure}{}{}

\myproc{\procinit{$G = (V, E), S, \dd, \phi, \theta$}}{
    $\ff \leftarrow \textsc{Dinitz}((V, E, \frac{400 \theta}{\phi}), \ss = (\theta + 1) \cdot \dd[S], \tt = 8 \cdot \dd [V \setminus S], h = \frac{200}{\phi} \log_2 n)$ \label{alg:flow_backtracking:dinitz}\\   Compute a path-cycle decomposition of $\ff$, and update $\ff$ by removing all cycles. \\
    $\dir{G} \gets (\dir{V} = V, \dir{E} = \supp(\ff))$ \\
    $\mathcal{D} \gets \textsc{DynTree}.\textsc{Initialize}(\dir{G}, \ff)$ \tcp*{Initialize empty tree.} 
    $\widehat{G} \gets G$; $\qq \gets \veczero_V$ \\
    \Return{$\ff$}
}

\myproc{\procbatch{$P, U$}}{
    $P' \gets P \cup \widehat{E}(U, \cdot)$ \tcp*{Add all edges adjacent to $U$ to be removed. }
    $\widehat{G} \gets (\widehat{G} \setminus P)[\widehat{V} \setminus U]$ \tcp*{Remove batch of vertices and edges from graph. }
    $S_{\text{rerun}} \gets \emptyset$ \\
    \ForEach{$e = (u,v) \in P'$}{
        \For{$j = 1, \ldots, 400\theta/\phi$}{
            \If{$(u,v) \in \dir{E}$ or $(v, u) \in \dir{E}$}{
                Assume wlog that $e = (u,v) \in \dir{E}$. \\
                    \If{$\mathcal{D}.\textsc{ReadCurrentFlow}(e) > 0$}{
                        \If{$\mathcal{D}.\textsc{ReadCurrentFlow}(\mathcal{D}.\textsc{FindMin}(v)) = 0$}{
                        $e' = (x,y) \gets \mathcal{D}.\textsc{FindMin}(v)$ \\
                        $\mathcal{D}.\textsc{Delete}(e')$; $\dir{G} \gets \dir{G} \setminus \{e'\}$ \\
                        }
                        $\mathcal{D}.\textsc{UpdateFlow}(v, - 1)$\\
                        $r \gets \mathcal{D}.\textsc{FindRoot}(v)$\label{alg:flow_backtracking:root} \\ 
                        \If{$(x, r)$ in $\dir{E}$}{
                            $\mathcal{D}.\textsc{Insert}(x,r)$
                        }
                        \If{$r \in S$}{
                            $\qq(r) \gets \qq(r) + 1$ \\
                            \If{$\qq(r) > \dd(v)$}{
                                \tcp{Give up on vertex $r$}
                                $S_{\text{rerun}} \gets S_{\text{rerun}}
                                \cup \{r\}$ 
                            }
                        }
                    
                }
            }
        }
    }
    \Return{$S_{\text{rerun}}, \ff$}
}
\end{algorithm}

\paragraph{Analysis. } We first analyse the $\textsc{FlowBacktracking}()$ data structure (\Cref{alg:flow_backtracking}) in an analogous way to the analysis of \Cref{alg:flow_cert_maint}. 

\begin{lemma}
    \label{lem:flow_backtracking}
    For any $l \in [k]$, when initialized with 
    \begin{itemize}
        \item a graph $G^{(l - 1)} = (V^{(l - 1)}, E^{(l - 1)})$, a parameter $\phi$ and vector $\dd$ such that $G^{(l - 1)}$ is a $\phi/10$ decremental pruned expander with respect to $\dd$ and
        \item a set $S \subset V$ and some integer $\theta$
    \end{itemize}
    the data structure $\textsc{FlowBacktracking}()$ (\Cref{alg:flow_backtracking}) processes a sequence of batches $(P_i, U_i)$ for $i = l, \ldots, k$ such that 
    \begin{itemize}
        \item $G^{(i)}$ remains a $\phi/10$ decremental pruned expander,
        \item $8\vol_{\dd}(S \cap V(G^{(i)})) \leq \vol_{\dd}(V(G^{(i)}) \setminus S)/(\theta + 1)$
        \item $|P_i| \leq 2^{k - i}$ and $\vol_{\dd}(U_i) \leq (128 \log_2 n) \cdot 100\lambda \cdot 2^{k - i} / \phi$
    \end{itemize}    
    After processing all batches it explicitly maintains a set $S_{\text{rerun}}$ and a $(\theta, 10, 400\theta \cdot k/\phi)$ flow certificate $(V \cap (S \setminus S_{\text{rerun}}), \ff)$. Furthermore, $\vol_{\dd}(S_{\text{rerun}}^i) \leq 400 \theta \cdot k \cdot 132 \cdot 2^{k - i}/\phi$ where $S_{\text{rerun}}^i$ is the set of vertices returned after processing batch $i$ and $S_{\text{rerun}} = \bigcup_{i = l}^k S_{\text{rerun}}^i$. 
    
    The processing time per batch $(P_i, U_i)$ is $\tilde{O}(\frac{\theta k}{\phi} (|P_i| + \vol_{\dd}(U_i)))$ and the initialization time is $\tilde{O}(\theta \vol_{\dd}(S)/\phi)$.
\end{lemma}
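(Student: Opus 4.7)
My plan is to establish the lemma in three main steps that parallel the analysis of \Cref{alg:flow_cert_maint}, but simplified because \textsc{FlowBacktracking}{} maintains only a single flow and thus requires no cross-flow coordination.

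First I would show that the call to \textsc{Dinitz}{} at \Cref{alg:flow_backtracking:dinitz} returns an exact maximum flow routing all of $\ss = (\theta+1)\dd[S]$ to sinks $\tt = 8 \dd[V \setminus S]$ in time $\tilde{O}(\theta \vol_\dd(S)/\phi)$, mirroring \Cref{clm:exact_max_flow}. If some source excess remained after $h = \frac{200 \log_2 n}{\phi}$ rounds, then by \Cref{fct:dinitz} the $j$-hop neighborhood $S_{\leq j}$ of the excess in the residual graph would have to saturate its $\phi/10$-expanding boundary. But the total source is $(\theta+1)\vol_\dd(S)$ while each boundary edge has capacity $400\theta/\phi$, so only a small fraction of the boundary can be saturated by outgoing flow, and flow-through can saturate at most half the remaining edges. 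This yields $\vol_\dd(S_{\leq j+1}) \geq (1 + \Omega(\phi)) \vol_\dd(S_{\leq j})$, contradicting $S_{\leq h} \subseteq V$ for $h$ sufficiently large. The precondition $8\vol_\dd(S \cap V(G^{(i)})) \leq \vol_\dd(V(G^{(i)}) \setminus S)/(\theta+1)$ ensures total sink strictly dominates total source, which is needed to apply expansion on the source side. The runtime follows from \Cref{fct:dinitz} since $\tt \geq \deg_G$.

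Next I would argue that after processing all batches, $(V \cap (S \setminus S_{\text{rerun}}), \ff)$ is a $(\theta, 10, 400 \theta k/\phi)$ flow certificate. The edge congestion bound is immediate since the flow starts at capacity $400\theta/\phi$ and only decreases through backtracking. For the source lower bound, the central invariant is that $\qq(r)$ equals the number of out-flow units that $r$ has lost plus the number of erroneous re-links in which $r$ was intermediate. Each re-link of $(x,r)$ consumes one of the $\dd(r)$ flow-carrying edges into $r$, so any $r \notin S_{\text{rerun}}$ satisfies $\qq(r) \leq \dd(r)$, and hence retains at least $(\theta+1)\dd(r) - \dd(r) = \theta \dd(r)$ units of out-flow. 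For the sink upper bound, an argument analogous to \Cref{clm:sink_inc} shows that each re-link into a non-source vertex $v$ uses up a distinct incident edge of $v$, so the initial sink absorption of $8\dd(v)$ can grow by at most $\dd(v)$ before re-initialization, well below the allowance $10 \dd(v)$.

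Finally, to bound $\vol_\dd(S_{\text{rerun}}^i)$ and the per-batch runtime, I would charge each flow decrement to a single iteration of the inner for-loop of \textsc{RemoveBatch}{}, of which at most $|P'| \cdot (400\theta/\phi) \leq (|P_i| + \vol_\dd(U_i)) \cdot 400 \theta/\phi$ occur in batch $i$. Since each $r$ added to $S_{\text{rerun}}^i$ requires at least $\dd(r)$ prior increments of $\qq(r)$, the total volume is bounded by the number of decrements, which by $|P_i| \leq 2^{k-i}$ and $\vol_\dd(U_i) \leq \tilde{O}(\lambda \cdot 2^{k-i}/\phi)$ gives the claimed $\tilde{O}(\theta k \cdot 2^{k-i}/\phi)$. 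The per-batch runtime follows because each iteration performs $O(\log n)$ link-cut tree operations via \Cref{algo:LCT}.

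The main obstacle I expect is establishing the sink-inflation bound of $10 \dd(v)$ cleanly: while backtracking strictly decreases flow on deleted edges, re-links insert previously untracked edges into the dynamic tree, and one must globally track how many extra units can funnel into a single $v$. The charging scheme---one re-link per incident edge---must be maintained globally across all batches, and subtle cases arise when deleting a vertex in $U_i$ triggers a cascade of re-links whose roots land inside $V \setminus S$. The analogous bookkeeping in the multi-flow setting is what forced the heavier invariant in \Cref{lem:cert_maint}, but here the single-flow structure keeps the argument tractable.
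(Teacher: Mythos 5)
Your proposal follows the same three-step structure as the paper's proof and is substantively correct: (1) an expansion argument showing Dinitz computes an exact max flow in $h = O(\phi^{-1}\log n)$ rounds, mirroring \Cref{clm:exact_max_flow}; (2) the three flow-certificate conditions (edge congestion, source lower bound via the counter $\qq$, sink upper bound via charging re-links to distinct in-edges), mirroring \Cref{lem:cert_maint} and \Cref{clm:sink_inc}; (3) per-batch volume and runtime bounds from the inner-loop iteration count. The paper's own proof is terse and simply references the analogous arguments for \Cref{alg:flow_cert_maint}, so your version is in fact more explicit about the bookkeeping.

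One small imprecision worth noting: your invariant ``$\qq(r)$ equals the number of out-flow units that $r$ has lost plus the number of erroneous re-links in which $r$ was intermediate'' is awkwardly stated, since in the algorithm $\qq(r)$ is simply incremented each time $r \in S$ is the root returned by $\textsc{FindRoot}$, and each such event reduces $r$'s net out-flow by exactly one unit (whether or not a re-link occurs). The cleaner phrasing---that $\qq(r)$ tracks $r$'s net out-flow loss and is bounded by $\dd(r)$ for any $r$ not yet placed in $S_{\text{rerun}}$---avoids the split into two cases, but your conclusion and the resulting bound $\theta\dd(r)$ are correct either way. Your final observation about cascades of re-links whose roots land in $V\setminus S$ is apt: for such a root $r$, the flow being feasible at $r$ ensures an in-edge $(x,r)\in\dir{E}$ exists to re-link, which is the fact both you and the paper implicitly rely on for the sink-inflation bound.
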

\begin{proof}
    We first show that Dinitz algorithm in \Cref{alg:flow_backtracking:dinitz} of \Cref{alg:flow_backtracking} computes a flow routing the demands. The proof is completely analogous to the proof of  \Cref{clm:exact_max_flow}. 

    After running Dinitz algorithm for $h = \frac{200}{\phi} \log_2 n$ iterations, the distance from every vertex in $S$ to a vertex in $V^{(l-i)} \setminus S$ in the residual graph is at least $h$. Let $S_{\leq j}$ denote the number of vertices that can be reached from $S$ in at most $j$ hops from $S$ in the residual graph. The total number of edges in $G^{(l - 1)}$ is at least $\frac{\phi}{10} \vol_{\dd}(S_{\leq j})$ by expansion of $G^{(l - 1)}$ with respect to $\dd$. At most half the saturation can be due to flow entering the set $S_{\leq j}$. But the total amount of source is at most $(\theta + 1)\vol_{\dd}(S_{\leq j})$ and every edge has capacity $400\theta/\phi$. Therefore, at least a quarter of the edges remain in the cut, and therefore $\vol_{\dd}(S_{\leq j + 1}) \geq (1 + \phi/40)\vol_{\dd}(S_{\leq j}$. For $j = \frac{200}{\phi} \log_2 n$ the set $S_{\leq j}$ contains all vertices in $V^{(l - 1)}$, and therefore the flow is routed since the total sink always exceeds the total source demand.

    We then argue that the algorithm maintains a $(\theta, 10, 400\theta \cdot k/\phi)$ flow certificate $(S \setminus S_{\text{rerun}}, \ff)$ analogously to the proof of \Cref{lem:cert_maint}. The algorithm backtracks and removes flow and adds a vertex $v \in S$ to $S_{\text{rerun}}$ whenever it emits $\dd(v)$ less flow than it started out with (and it did not get pruned). Therefore it still emits $\theta$ flow as long as it has not entered $S_{\text{rerun}}$. Furthermore, every source $v$ initially absorbs $8 \dd(v)$ source. Every time it absorbs one extra flow it gets re-connected, and since there are at most $\dd(v)$ edges carrying flow into $v$ it absorbs at most $9 \dd(v)$ flow throughout. 

    The runtime of the initialization follows from \Cref{fct:dinitz}. The runtime of the $\textsc{RemoveBatch}()$ routine is proportional to $400 \theta k/\phi (\vol_{\dd}(U) + |P|)$ and therefore the runtime follows. Furthermore, the size of the set $S_{\text{rerun}}^i$ is proportional to $\frac{400 \theta k}{\phi} (\vol_{\dd}(U) + |P|)$. This concludes the proof. 
\end{proof}

Given the previous lemma, it is easy to prove the main lemma of this section. 

\begin{proof}[Proof of \Cref{lem:init_spread}]
By \Cref{lem:flow_backtracking} we have that $S_{\text{rerun}}^{i,j}$, i.e. the $i$-th set $S_{\text{rerun}}$ output by data structure $\mathcal{D}_j$, has volume $\vol_{\dd}(S_{\text{rerun}}^{i, j}) \leq 400 \theta \cdot k \cdot 132 \cdot 2^{k - i}/\phi$, and therefore the volume $\vol_{\dd}(S'_{i + 1}) \leq 400 \theta \cdot k^2 \cdot 2 \cdot 66 \cdot 2^{k - i}/\phi $ is bounded. Notice that this volume is also upper bounded by the volume of $S$ and therefore firmly on the small side of the cut. 

Then, the runtime guarantees follow directly from the description of the algorithm and \Cref{lem:flow_backtracking}. 

Since the sink capacity of a vertex $v$ never exceeds $9k \deg_{G^{(0)}}(v)$, and every source $v$ is guaranteed to continue sending out $(10 k \cdot l + 10k + l + 1)\deg_{G^{(0)}}(v)$ flow, the net out-flow of every source is still $(10 k \cdot l + l + 1)\deg_{G^{(0)}}(v)$ as desired. The edge capacity is at most $k \cdot 400 \theta/\phi \leq 8000 k^3/\phi$ for large enough $k$, and the sink capacity is at most $9k \deg_{G^{(0)}}(v) \leq (10k - 1)\deg_{G^{(0)}}(v)$ as argued above. 

This concludes the proof.
\end{proof}

\subsection{A Low Recourse Worst-Case Update Time Pruning Algorithm}

In this section, we use the developed worst-case flow certificates to finally get low recourse and a worst-case update time guarantee simultaneously. 

\paragraph{Data structure description.}

As previously in \Cref{sec:amortized} and \Cref{sec:amortized_low_rec}, the worst-case recourse algorithm $\textsc{WorstCaseLowRec}()$ also processes the edge deletions in batches with the batching scheme presented in \Cref{sec:batching}. 

We recall that whenever a batch becomes half-full for the first time, it will never be empty again, and that edges always move from a batch $B_j$ to a batch $B_{j - 1}$. Furthermore, when batch $B_j$ becomes full, all the batches $B_{j + 1}, \ldots, B_{k}$ are half-full after. 

\paragraph{Pre-computing Framework. } We first discuss how this rebuilding scheme is tailored to the worst-case recourse and update time setting by allowing a lot of distribution of computational cost. The next update of a half-full or empty batch $B_j$ can be computed as soon as the batch $B_{j + 1}$ becomes full. At that stage, however, the batches $B_{j + 2}, \ldots, B_{k}$ are all half full, and therefore a lot more insertions have to arrive before the update of batch $B_j$ has to be ready. We use this time to distribute the cost of computing this update.

\begin{itemize}
    \item \underline{Notation and Rebuild Requirements:} To accurately describe our rebuilding scheme, we first introduce some notation. As described above, we always run the meta algorithm (\Cref{alg:batch_pruning}) on batches $B_i, A_i$, where the set $B_i$ contains deletions as described above, and the set $A_i$ contains additional pruned vertices to ensure monotonicity of the pruned set. The exact sets $A_i$ are described in detail below and in \Cref{sec:batching}. We denote with $S_i$ the set that the meta algorithm proposes to prune at level $i$ in \Cref{alg:batch_pruning:proposal} of \Cref{alg:batch_pruning} and with $S_0$ the vertices that are already pruned as in \Cref{sec:amortized_low_rec}. We then denote $\widehat{S}_i := S_i \setminus S_0$, i.e. the sets that are proposed to be pruned but have not yet been pruned. We maintain that there is a flow $\ff_i$ routing at least $10ik \dd[\widehat{S}_i]$ source to sinks of capacity at most $10k\dd[V \setminus \bigcup_{l = 0}^i S_l]$ in $G_i := G[V \bigcup_{l = 0}^{i - 1} S_l]$ where $G$ is a decremental graph and thus experiences deletions. As before, we maintain that every vertex that is in a previously pruned set $A_i$ is also in $S_0$. 

    Every layer $i \leq k - 2$ can be pre-computing a rebuild of the layers $i, \ldots, k$ at any stage. We refer to the relevant variables $X$ of layer $j$ being precomputed by a rebuild of layer $i$ as $X^{(i)}_j$, i.e. $B_j^{(i)}$ denotes the set of  deletions at layer $j$ that a rebuild at layer $i$ proposes. When pre-building the flows $\ff_j^{(i)}$ we ensure that they route at least $(10jk + j + 1) \dd\left[\widehat{S}_j^{(i)}\right]$ source to sinks of capacity at most $(10k - 1)\dd\left[V \setminus \left(\bigcup_{l = 0}^{i - 1} S_l \cup \bigcup_{l = i}^j S_l^{(i)}\right)\right]$ in $G_j^{(i)} := G\left[V \setminus \left(\bigcup_{l = 0}^{i - 1} S_l \cup \bigcup_{l = i}^{j - 1} S_l^{(i)}\right)\right]$. See  item 'Pre-computing Rebuilds' for a detailed description of our rebuilding algorithm. 
    \item \underline{Algorithm:} Whenever a rebuild at layer $i$ finishes, it replaces all sets $B_j \gets B_{j}^{(i)}$, $A_j \gets A_{j}^{(i)}$ and $S_j \gets S_{j}^{(i)}$, and it replaces the flows $\ff_j \gets \ff_j^{(i)}$. This exactly corresponds to a $\textsc{ReInitialize}()$ operation in \Cref{alg:flow_cert_maint}. Therefore, we run the algorithm presented in section \Cref{sec:amortized_low_rec} for our new batching sequence while pre-computing the \textsc{ReInitialize} step. The layers $k - 1, \ldots, \theta$ don't need to be pre-computed in the background because they are small and can therefore be computed in a single step. 
    \item \underline{Pre-computing Rebuilds: } Only empty and half empty layers pre-compute rebuilds. Whenever layer $i \leq k - 2$ is empty or half-empty, and layer $i + 1$ becomes full, a rebuild starts pre-computing at layer $i$. We now describe how such a rebuild at layer $i$ is computed. 

    We first update the sets $A^{(i)}_i$ and $B^{(i)}_i$, by setting $B_i^{(i)} \gets B_i \cup B_{i + 1}$ and $A^{(i)}_i \gets A_i \cup S_i \cup A_{i + 1} \cup S_{i + 1}$. We then re-start an instance of \textsc{BatchPruning}() (\Cref{alg:batch_pruning}) at index $i$ of the for loop for $B_i^{(i)}, A_i^{(i)}$ and run this iteration of the loop, yielding a set $S^{(i)}_i$. Then we recall that $G_j^{(i)} := G\left[V \setminus \left(\bigcup_{l = 0}^{i - 1} S_l \cup \bigcup_{l = i}^{j - 1} S_l^{(i)}\right)\right]$, and we initialize a worst case flow certificate $\mathcal{D}_j^{(i)} \gets \textsc{WorstCaseFlow}(G^{(i)}_i, S_i^{(i)})$ from $S^{(i)}_i$ to its complement in $G_i^{(i)}$. All of these compute steps are distributed over the next $2^{k - i - 2}$ deletions, which is when $B_{i + 2}$ will become full and we start preparing the variables $X^{(i)}_{i + 1}$. 

    We next describe the steps the rebuild takes when the set $B_j$ for $j \geq i + 2$ becomes full, and the sets $B_{i + 1}, \ldots, B_{j - 1}$ are already full as well and therefore the variables $X^{(i)}_{i + 1}, \ldots,  X^{(i)}_{j - 1}$ have been initialized. The re-build first sets $B_{j - 1}^{(i)} \gets B_j$ and $A_{j - 1}^{(i)} \gets A_j \cup S_j$. Then it runs the $(j - 1)$-th iteration of imonts instance of \textsc{BatchPruning}() (\Cref{alg:batch_pruning}) yielding a set $S^{(i)}_{j - 1}$.
    
    It passes $(\bigcup_{j' \geq j - 1} B_{j'}, \bigcup_{j' \geq j - 1} B_{j'} S_{j'} \cup A_{j'})$ as a deletion batch to $\mathcal{D}_{i}^{(i)}, \ldots, \mathcal{D}_{j - 2}^{(i)}$. This ensures that the worst-case flows are supported on the graph $G[V \setminus S_0]$, because $G[V \setminus S_0]$ is a sub-graph of $G[V \setminus \bigcup_{j = 1}^\lambda S_j \cup A_j]$ because of the ensured monotonicity of pruning.   
    
    Furthermore, a new data structure $\mathcal{D}_{j - 1}^{(i)} \gets \textsc{WorstCaseFlow}(G^{(i)}_{j - 1}, S_{j - 1}^{(i)}, \deg_{G^{(0)}}, \phi)$ is initialized to compute a flow certificate routing out of $S_{j - 1}^{(i)}$ in $G^{(i)}_{j - 1}$. This compute is distributed over the next $2^{k - j - 2}$ deletions.

    We finally describe how our data-strucure handles the final layers $k - 1$ and $k$. We pre-compute the batches with index $k - 1$ and $k$ when all batches $B_j$, $j \geq i$ are full. We run the \textsc{BatchPruning}() (\Cref{alg:batch_pruning}) instance of layer $i$ to the end, compute all the flows directly, and pass the deletion sets to the data structures $\mathcal{D}_{i}^{(i)}, \ldots, \mathcal{D}_{k - 2}^{(i)}$. Finally, the algorithm additionally sets $S_{k}^{(i)} \gets \bigcup_{j \geq k} S_{j}^{(i)}$ and $A_{k}^{(i)} \gets \bigcup_{j \geq k} A_{j}^{(i)}$ to add the pruning from layers that don't contain any deletions. 

    At this point, the pre-computation finishes and the precomputed variables become the global view, i.e. $X_{j} \gets X^{(i)}_j$ for $i \geq j$. Finally, a batch flow certificate \textsc{BatchFLowCert}() is re-initialized for the new sets $S_i, \ldots, S_k$ with the flows $\ff_{j}$ precomputed by the data-structures $\mathcal{D}_{j}^{(i)}$. Since these flows and variables are already available, the $\textsc{ReInitialize}(\{S_i', \ldots, S'_k\})$ routine now runs in time $O(k)$. 

    Finally, the algorithm uses $\textsc{DeleteEdge}(e)$ of the \textsc{BatchFlowCert}() to delete $10^7 \lambda \log_2(n)/\phi$ edges incident to each set $S_j$ after every edge deletion. This ensures that the set $S_{j}$ is fully pruned when it's replaced, and therefore $G[V \setminus S_0] = G\left[V \setminus \bigcup_{i = j}^k S_j \right]$ when i.e. layer $i$ finishes its pre-compute and is about to swap in the pre-computed variables. This ensures that all the pre-computed flows are supported on the current graph. 
\end{itemize}

\paragraph{Analysis of Worst-Case Pruning Algorithm. }

We first state the main theorem of this section and our whole article. 

\begin{restatable}{theorem}{LowRecPrune}
    \label{thm:double_low_recourse}
    There is an algorithm that given a $\phi$-expander graph $G = (V, E)$ and a sequence of up to $\tilde{\Omega}(\phi |E|)$ edge deletions to $G$, and adds up to $\tilde{O}\left(\phi^{-2}\right)$ vertices to a initially empty set $A$ after each deletion such that $G[V \setminus A]$ remains a $\Omega\left(\frac{\phi}{\log_2^4 m}\right)$-expander throughout. Furthermore, every update is processed in worst-case time $\tilde{O}\left(\phi^{-2}\right)$. 
\end{restatable}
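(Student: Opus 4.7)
The plan is to show that the $\textsc{WorstCaseLowRec}$ data structure described above implements exactly the behavior of the amortized low-recourse algorithm from \Cref{thm:amortized_low_recourse}, only differing in \emph{when} flows are computed. All structural guarantees (pruning monotonicity, volume bounds on $S_0$, expansion of $G[V \setminus S_0]$) will then transfer directly; the remaining work is to certify that the pre-computation schedule (i)~produces the required flows on time and (ii)~produces flows that are still valid once they are swapped in.

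\textbf{Correctness of the pruned set.} First I would verify that at the moment the pre-built variables $X_j^{(i)}$ for $j \geq i$ are promoted to the global view, the required preconditions of $\textsc{BatchFlowCert}.\textsc{ReInitialize}$ from \Cref{thm:flow_cert_maint} hold: the new sets $S_i, \dots, S_k$ are proposed by the standard \textsc{BatchPruning} meta-algorithm on $B_i, A_i$, so \Cref{lem:main_invariant} and \Cref{lem:maintains_exp} give the necessary expander and bottleneck-cut properties, and the $\tilde{O}(\log n / \phi)$ additional edge-incident pruning per deletion ensures that the previously pruned sets $S_i$ are fully contained in $S_0$ before they are overwritten. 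This yields (via \Cref{thm:flow_cert_maint}) that $G[V\setminus S_0]$ remains a $\tilde{\Omega}(\phi)$-expander and that $S_0$ grows by at most $\tilde{O}(1/\phi^2)$ vertices per update, giving the recourse bound.

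\textbf{Validity of pre-computed flows at swap-in time.} The main subtle point is that each $\mathcal{D}_j^{(i)}$ is initialized via \Cref{lem:init_spread} spread across $2^{k-j-1}$ later deletions. During that window, the underlying graph undergoes edge and vertex deletions (the algorithm forwards to $\mathcal{D}_j^{(i)}$ the batches $(B_{j'}, S_{j'}\cup A_{j'})$ for $j' \geq j-1$ that occur later). I would check that these forwarded batches respect the size bounds $|P_i| \leq 2^{k-i}$ and $\vol_\dd(U_i) \leq 2\cdot 10^7 \log_2^3 n \cdot 2^{k-i}/\phi$ required by \Cref{lem:init_spread} using \Cref{clm:volume} and the batch-size guarantees from \Cref{sec:batching}, and that the expansion precondition holds via \Cref{lem:maintains_exp}. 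By \Cref{lem:init_spread} the returned flow $\ff_j^{(i)}$ routes $(10jk + j + 1)\dd[\widehat{S}_j^{(i)}]$ from its sources to sinks of capacity $(10k-1)\dd[\cdot]$ in the appropriate residualized graph, which is exactly the specification that \Cref{thm:flow_cert_maint} expects of the initial flow of a re-initialized layer.

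\textbf{Worst-case update time.} For each layer $i$, \Cref{lem:init_spread} gives total initialization plus batch-processing cost $\tilde{O}(k^4 \cdot 2^{k-i}/\phi^2)$ for the rebuild of layers $i, \dots, k$ at layer $i$. The rebuild at layer $i$ is distributed across $2^{k-i-1}$ deletions (the interval guaranteed by \Cref{clm:rebuild_timing} between consecutive rebuilds at that layer), contributing $\tilde{O}(1/\phi^2)$ per deletion. Summing over the $O(\log m)$ layers gives worst-case update time $\tilde{O}(1/\phi^2)$ for the pre-computation. The per-update work of $\textsc{BatchFlowCert}.\textsc{RemoveEdge}$ is $\tilde{O}(1/\phi^2)$ by \Cref{thm:flow_cert_maint}, and the $\textsc{ReInitialize}$ at swap-in runs in $O(k)$ time since all flows are already prepared. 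The final tail layers $k-1, k$ process $O(1)$ edges each and can be handled in a single step.

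\textbf{Main obstacle.} The hard part is the bookkeeping that ensures the pre-computed flow $\ff_j^{(i)}$ remains supported on the current graph $G[V \setminus S_0]$ at the moment of promotion. This requires that every vertex and edge the flow was routed over during pre-computation is either still present, or has been removed via the $\textsc{FlowBacktracking}$ batches forwarded to $\mathcal{D}_j^{(i)}$. The aggressive $\tilde{O}(\log n/\phi)$ extra pruning per deletion by \textsc{BatchFlowCert}, combined with \Cref{clm:rebuild_timing}, ensures that by the time layer $i$ finishes pre-computing, every previously pruned $S_j$ has been absorbed into $S_0$, so $G[V \setminus S_0] \subseteq G\bigl[V \setminus \bigcup_{l < i} S_l\bigr]$ and the pre-computed flows remain feasible. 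Verifying this synchronization carefully is where the analysis concentrates.
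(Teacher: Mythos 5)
Your proposal is correct and follows essentially the same approach as the paper: the paper also splits the argument into a correctness lemma (verifying the preconditions of \Cref{thm:flow_cert_maint} and \Cref{lem:init_spread}, and checking that the extra per-deletion pruning guarantees the previously pruned $S_i$ are absorbed into $S_0$ before being overwritten) and a worst-case runtime lemma (amortizing each level's \textsc{WorstCaseFlow} initialization over the $\Theta(2^{k-i})$ deletions until the next rebuild and summing over $O(\log m)$ levels). The only cosmetic difference is that you spread level-$j$ pre-computation over $2^{k-j-1}$ deletions via \Cref{clm:rebuild_timing} while the paper uses the tighter $2^{k-j-2}$ window until $B_{j+2}$ fills; both yield the same asymptotic bound.
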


To prove \Cref{thm:double_low_recourse}, we first show the correctness of our algorithm, i.e. we show that the remaining graph remains a good expander.

\begin{lemma}[Correctness]
\label{lem:double_low_recourse_correctness}
    The algorithm $\textsc{WorstCaseLowRec}()$ maintains a set $A$ such that $G[V \setminus A]$ remains a $\Omega(\phi/\log_2^4 m)$ expander throughout given a sequence of up to $\tilde{\Omega}(\phi m)$ deletions.
\end{lemma}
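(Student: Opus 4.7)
The plan is to reduce correctness to the amortized analysis of \Cref{sec:amortized_low_rec} by verifying that the de-amortization via background pre-computation does not destroy any structural invariant needed to apply \Cref{lem:flow_cert_guarantees_exp}. Concretely, I would show that at every moment the algorithm maintains a composite flow certificate $(\bigcup_{i} \widehat{S}_i, \sum_i \ff_i)$ as in \Cref{lem:cert_maint} on the current $G[V \setminus S_0]$, together with a $\phi/10$-expansion guarantee on $G[V \setminus \bigcup_i S_i]$. Applying \Cref{lem:flow_cert_guarantees_exp} with parameters $\gamma_{\text{source}} = 10k$, $\gamma_{\text{sink}} = 10k^2$, $c = 8000 k^4/\phi$, $\delta = k$ and $\alpha = \phi/10$ then yields an $\Omega(1/(c + \delta)) = \Omega(\phi/k^4) = \Omega(\phi/\log_2^4 m)$-expander on $G[V \setminus A] = G[V \setminus S_0]$ as desired.

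First I would argue that the meta-algorithm $\textsc{BatchPruning}$ executed under the pre-computation schedule is mathematically equivalent to the synchronous version from \Cref{sec:batching}: a rebuild whose computation is spread across a window of $2^{k-i-2}$ deletions is installed atomically at the same update step at which the amortized algorithm would have finalized it, and the flows returned by $\textsc{WorstCaseFlow}$ are repaired via nesting so as to be supported on the installation-time graph. Therefore \Cref{lem:main_invariant}, \Cref{lem:maintains_exp}, \Cref{clm:source_reduction} and \Cref{clm:volume} apply verbatim, guaranteeing $\phi/10$-expansion of $G[V \setminus \bigcup_i S_i]$ and the volume bound $\vol_G(S_i) \leq \tilde{O}(2^{k-i}/\phi)$ at every layer and every time.

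Second I would check that every invocation of $\textsc{WorstCaseFlow}()$ during pre-computation satisfies the preconditions of \Cref{lem:init_spread}: the starting graph $G_i^{(i)}$ is a $\phi/10$-expander, the source set $S_i^{(i)}$ has volume $\tilde{O}(2^{k-i}/\phi)$, and the subsequent deletion batches $(P_j, U_j)$ fed into it respect the size bounds $|P_j| \leq 2^{k-j}$ and $\vol_{\dd}(U_j) \leq \tilde{O}(2^{k-j}/\phi)$. All these follow directly from the batching scheme together with the global bound of $\tilde{\Omega}(\phi m)$ on the total number of deletions, which keeps each $\widehat{S}_j$ firmly on the smaller side of every relevant cut (so the source capacity does not exceed the available sink capacity). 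Consequently each flow $\ff_j^{(i)}$ returned by pre-computation is a valid flow certificate on the graph into which it is installed at the end of the rebuild window.

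The main obstacle is ensuring that the $\textsc{ReInitialize}()$ preconditions of \Cref{thm:batch_flow_cert} hold whenever a rebuild swaps in new sets: specifically, each $\widehat{S}_j$ being replaced must already be empty, i.e.\ the previously proposed $S_j$ must already lie inside $S_0$. This is enforced by the eager $\tilde{\Omega}(\log n/\phi)$ $\textsc{RemoveEdge}$ calls per layer issued after every adversarial update; combined with \Cref{clm:rebuild_timing}, which guarantees a slack of $2^{k-i-1}$ deletions between two consecutive rebuilds of layer $i$, and with the bound $\vol_G(S_i) \leq \tilde{O}(2^{k-i}/\phi)$, this completely drains each $S_i$ into $S_0$ before it is ever replaced (the remaining graph on $V \setminus S$ is an expander and hence has no isolated vertices, so a vertex of $S_i$ is pruned as soon as all of its incident edges have been fed to $\textsc{RemoveEdge}$). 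Given this, \Cref{lem:cert_maint} maintains the composite flow certificate under all $\textsc{RemoveEdge}$ and $\textsc{ReInitialize}$ operations, and the expansion bound on $G[V \setminus A]$ follows as stated above.
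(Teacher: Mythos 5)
Your proposal follows essentially the same route as the paper's proof: show that a composite $(10k, 10k^2, 8000k^4/\phi)$-flow certificate is maintained on $G[V\setminus S_0]$ (via the precondition checks for $\textsc{ReInitialize}$ from \Cref{thm:batch_flow_cert} and $\textsc{WorstCaseFlow}$ from \Cref{lem:init_spread}, using the same "expander has no isolated vertices" draining argument and the same invariants from \Cref{sec:amortized} and \Cref{sec:amortized_low_rec}), establish the $\phi/10$-expansion of $G[V\setminus\bigcup_i S_i]$ via \Cref{lem:exp_cert}, and then invoke \Cref{lem:flow_cert_guarantees_exp}. The explicit parameter instantiation ($\delta=k$, $\alpha=\phi/10$, $c=8000k^4/\phi$) you give matches what the paper implicitly relies on and is correct.
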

\begin{proof}
    We let $A = S_0$. We show that $\left(\left(\bigcup_{i = 1}^{k} S_k\right) \setminus S_0, \ff := \sum_{i = 1}^{k} \ff_i\right)$ is a $(10k, 10k^2, 8000 k^4/\phi)$ flow certificate on $G[V \setminus S_0]$ throughout as in \Cref{sec:amortized_low_rec}. Our batching scheme ensures that batch $B_i$ contains at most $2^{k - i}$ edges. We have that $\vol_G(S_i) \leq 2(64 \log_2 n ) \lambda  \cdot 132 \cdot 2^{k - i}$, and by the description of our algorithm $2^{k - i - 1}$ operations in between the time $S_i$ got published, and it gets re-published next. Therefore, all its edges enter $S_0$, because the algorithm prunes the whole volume and an expander is always connected. Therefore, the preconditions of \Cref{thm:batch_flow_cert} are satisfied whenever a rebuild is switched, and thus the flow $\ff := \sum_{i = 1}^{k} \ff_i$ is such a flow certificate as claimed if the rebuilds are computed correctly, which we argue next. 
    
    We next show that the rebuilds are computed, i.e. that the flow $\ff_l$ routes $(10 k \cdot l + l + 1)\deg_{G^{(0)}}(v)$ flow from sources $v \in S_l \cap V$ to sinks of capacity $(10k - 1)\deg_{G^{(0)}}(v)$ for $v \in V \setminus S$ with capacity $8000k^3/\phi$. We first argue that the preconditions of \Cref{lem:init_spread}  are met. Since the number of deleted edges is less than $c \phi \vol_{G^{(0)}}/\log_2^3 m$ for some small enough constant $c > 0$ , the pruned sets are always firmly on the small side by  \Cref{clm:volume}. The size of the discarded set and further pruned sets are bounded with $2 \cdot 10^7 \log_2^3 2^{k - l}/\phi$ by \Cref{inv:item3} of \Cref{lem:main_invariant} and a geometric series in accordance with the precondition of \Cref{lem:init_spread}. Furthermore, the graph $G\left[V \setminus \bigcup_{i = 0}^k S_k\right]$ is a $\phi/10$ expander throughout since the meta algorithm maintains an expansion certificate as in \Cref{lem:exp_cert} because the final excess $2^{-1 - \log_2 \phi^{-1}}/\phi$ is less than $1$ when routing to sinks $\deg_{G}$, and this excess can be absorbed by just adding another $\deg_{G}$ sink as in \Cref{lem:exp_cert}. Here, it is crucial that $\bigcup_{j \in [\lambda]} A_j \subseteq S_0$ Therefore, \Cref{lem:init_spread} applies, and it directly follows flow certificates are initialized correctly. We remark that the additional pruning ensures that $G[V \setminus S_0] = G\left[V \setminus \bigcup_{i = 1}^k S_i \cup A_i \right]$ before a rebuild gets switched in, and since the pre-computed flows are supported on $G\left[V \setminus \bigcup_{i = 1}^k S_i \right]$ by the description of our algorithm and \Cref{lem:init_spread} they are supported on the current state of the decremental pruned graph $G$.
    
    The expansion guarantee of $G[V \setminus S_0] \setminus B$, where $B$ is the set of deleted edges, then directly follows from \Cref{lem:flow_cert_guarantees_exp} since $k = \ceil{\log_2 m}$.
\end{proof}

Then, we prove the worst case runtime guarantee and recourse. 

\begin{lemma}[Worst-Case Runtime]
    \label{lem:double_low_recourse_runtime}
    The algorithm $\textsc{WorstCaseLowRec}()$ processes each edge update in worst case update time $\tilde{O}\left(\phi^{-2}\right)$ and prunes at most $\tilde{O}
    \left(\phi^{-2}\right)$ vertices after every edge deletion given a sequence of up to $\tilde{\Omega}(\phi m)$ deletions.
\end{lemma}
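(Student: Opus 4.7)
The plan is to split each update's cost into two disjoint parts: (i) the \emph{foreground} slow-pruning work that happens at every deletion via the global $\textsc{BatchFlowCert}$ instance, which is also the only source of recourse, and (ii) the \emph{background} pre-computation of rebuilds driven by the meta batching scheme. I will show both parts fit inside $\tilde{O}(\phi^{-2})$ per update; the $\textsc{ReInitialize}$ call that swaps a finished rebuild in is deferred to the moment at which all its flows are already materialised and costs only $O(k)$ time.

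For (i), at every edge deletion the algorithm issues $10^7\lambda\log_2(n)/\phi$ calls to $\mathcal{D}.\textsc{RemoveEdge}$ per layer, which summed over the $k$ layers gives $\tilde{O}(\phi^{-1})$ calls in total. By \Cref{thm:batch_flow_cert} each call runs in worst-case time $\tilde{O}(k^4/\phi) = \tilde{O}(\phi^{-1})$ and contributes at most $\tilde{O}(k^4/\phi) = \tilde{O}(\phi^{-1})$ vertices to $S_0$. Multiplying yields $\tilde{O}(\phi^{-2})$ both in time and in recourse per update, which already establishes the entire recourse bound claimed by the lemma.

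For (ii), consider the rebuild initiated at level $i$ when $B_{i+1}$ first becomes full. It proceeds through stages $j = 0, 1, \ldots, k-i$, where stage $j$ is triggered when $B_{i+1+j}$ becomes full; by \Cref{clm:rebuild_timing} consecutive stages are separated by $\Theta(2^{k-i-j-1})$ real deletions. The work performed at stage $j$ breaks into: (a) one iteration of the meta-loop of \Cref{alg:batch_pruning}, costing $\tilde{O}(2^{k-i-j}/\phi^2)$ by \Cref{clm:runtime}; (b) initialising a fresh $\textsc{WorstCaseFlow}$ instance $\mathcal{D}^{(i)}_{i+j}$ whose source set has volume $\tilde{O}(2^{k-i-j}/\phi)$, costing $\tilde{O}(k^3\cdot 2^{k-i-j}/\phi^2)$ by \Cref{lem:init_spread}; and (c) forwarding the current deletion batch to the $j$ previously-initialised instances $\mathcal{D}^{(i)}_i, \ldots, \mathcal{D}^{(i)}_{i+j-1}$, each absorbing a batch of size $\tilde{O}(2^{k-i-j}/\phi)$ in time $\tilde{O}(k^4\cdot 2^{k-i-j}/\phi^2)$ again by \Cref{lem:init_spread}. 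Summing gives at most $\tilde{O}(k^5 \cdot 2^{k-i-j}/\phi^2)$ per stage; dividing by the $\Theta(2^{k-i-j-1})$ update window between stages yields $\tilde{O}(\phi^{-2})$ work per update per active level. Because at any point at most $k = \tilde{O}(1)$ rebuilds at distinct levels are simultaneously active, the total background cost per update stays at $\tilde{O}(\phi^{-2})$.

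The step I expect to need most care is verifying that each batch forwarded to an in-progress $\textsc{WorstCaseFlow}$ instance in part (c) meets the preconditions of \Cref{lem:init_spread} --- namely the size bounds $|P_{i+j}| \leq 2^{k-i-j}$ and $\vol_{\dd}(U_{i+j}) \leq \tilde{O}(2^{k-i-j}/\phi)$ and the pruned-expander guarantee with respect to $\dd$. The size bounds follow directly from the batching scheme of \Cref{sec:batching} together with \Cref{inv:item3} of \Cref{lem:main_invariant}; the expansion guarantee is the content of the correctness argument in \Cref{lem:double_low_recourse_correctness}, which in turn crucially relies on the foreground pruning in (i) fully evicting each outdated $S_j$ into $S_0$ \emph{before} the corresponding rebuild of level $j$ publishes. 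Once these preconditions are secured, the remaining bookkeeping reduces to the geometric-series calculation above, and both bounds of the lemma follow simultaneously.
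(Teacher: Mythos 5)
Your proposal is correct and follows essentially the same decomposition as the paper's proof: foreground per-update work from the $\textsc{BatchFlowCert}$ pruning (which also accounts for the recourse) plus background pre-computation of rebuilds, with the background cost amortized over the $\Theta(2^{k-i-j})$-deletion windows between stages and multiplied by the $\tilde{O}(1)$ bound on simultaneously active levels. Your bookkeeping is somewhat more explicit than the paper's (in particular, you invoke the $\tilde{O}(k^3\cdot\vol_{\dd}(S)/\phi)$ initialization bound of \Cref{lem:init_spread} directly and spell out the recourse calculation, which the paper handles only implicitly by referring back to \Cref{sec:amortized_low_rec}), but the argument structure and conclusion are the same.
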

\begin{proof}
    The volume of the set pruned at level $i$ is at most $2 \cdot  (64 \log_2 n)100 \lambda \cdot 2^{k - i}/\phi$ by \Cref{clm:source_reduction} and a geometric sum for $i = k$. We note that $\lambda = \ceil{\log_2 m} + \log_2 \phi^{-1} + 1$ and therefore $\lambda = \tilde{O}(1)$ because every unweighted connected graph has expansion $\phi \geq 1/m$. Running the $i$-th step in the main loop of the meta algorithm (\Cref{alg:batch_pruning}) only takes time $\tilde{O}( 2^{k - i}/\phi^2)$ by \Cref{clm:source_reduction} since its pre-conditions are again met by our batching scheme. The initialization of a worst case flow \textsc{WorstCaseFlow} at level $i$ takes at most $\tilde{O}( 2^{k - i}/\phi)$ by \Cref{clm:runtime} which can be amortized over the $2^{k - i - 2}$ deletions until the next batch gets full by executing $\tilde{O}(\phi^{-1})$ compute per step. The same holds for the amortization of each batch removal, of which there are at most $k$. Since there are up to $k$ levels pre-computing in parallel, and each of these has to execute up to $k$ worst case batch certificate builds, the total update time only increases by a factor $k^2 = O(\log^2 m)$. Finally, the algorithm from section \Cref{sec:amortized_low_rec} is run on the current state of the algorithm (without computing rebuilds), which adds at most $\tilde{O}(1/\phi^2)$ to the processing time. 
    The worst-case update time therefore follows from the description of our algorithm. 
\end{proof}

We then conclude with a proof of the main result of this article. 

\begin{proof}[Proof of \Cref{thm:double_low_recourse}]
 Follows directly from \Cref{lem:double_low_recourse_correctness} and \Cref{lem:double_low_recourse_runtime}.
\end{proof}

\newpage

\bibliographystyle{alpha}
\bibliography{refs}

\end{document}